\newcommand*{\scale}[2][4]{\scalebox{#1}{$#2$}}%
\definecolor{oceanboatblue}{rgb}{0.0, 0.47, 0.75}
\definecolor{officegreen}{rgb}{0.0, 0.5, 0.0}
\newtheorem{theorem}{Theorem}
\newtheorem{proposition}{Proposition}
\newtheorem{lemma}{Lemma}
\theoremstyle{definition}
\newtheorem{definition}{Definition}
\newlist{properties}{enumerate}{10}
\crefname{propertiesi}{Property}{Properties}
\Crefname{propertiesi}{Property}{Properties}
\crefname{example}{Example}{Examples}
\Crefname{example}{Example}{Examples}
\def\bfH{\mathbf{H}}
\def\bfK{\mathbf{K}}
\def\bfM{\mathbf{M}}
\def\bfR{\mathbf{R}}
\def\bfX{\mathbf{X}}
\def\bfS{\mathbf{S}}
\def\bfY{\mathbf{Y}}
\def\calF{\mathcal{F}}
\def\calP{\mathcal{P}}
\def\calF{\mathcal{F}}
\def\calV{\mathcal{V}}
\def\tY{\tilde{Y}}
\def\ty{\tilde{y}}
\def\tQ{\tilde{Q}}
\def\tR{\tilde{R}}
\def\tx{\tilde{x}}
\def\bft{\mathbf{t}}
\def\bfx{\mathbf{x}}
\def\I{\mathbf{I}}
\def\R{\mathbb{R}}
\def\Pr{\mathrm{Pr}}
\def\E{\mathbb{E}}
\def\e{\mathbf{e}}
\def\metric{D}
\def\T{\top}
\def\lip{\langle}
\def\rip{\rangle}
\def\gaussian{\mathsf{N}}
\def\vMF{\mathsf{vMF}}
\newcommand{\argdot}{{\,\vcenter{\hbox{\tiny$\bullet$}}\,}}
\newcommand{\empMeas}[1]{{\hat{P}_{#1}}}
\def\zerovec{0}
\def\onevec{1}
\def\ones{\textbf{1}}
\def\indi{\mathbbm{1}} %
\def\condind{{\;\perp\!\!\!\perp\;}} %
\def\equdist{\stackrel{\text{\rm\tiny d}}{=}} %
\def\equas{\stackrel{\text{\rm\tiny a.s.}}{=}} %
\def\simiid{\sim_{\mbox{\tiny iid}}} %
\def\define{\coloneqq}
\def\bigO{\mathcal{O}}
\def\eps{\varepsilon}
\def\grp{\mathbf{G}}
\def\grpAct{\Phi}
\def\orbit{O}
\def\maxInv{M}
\def\tM{\tilde{\maxInv}}
\newcommand{\orbRep}[1]{[#1]}
\def\haar{\lambda}
\def\rhaar{\tilde{\haar}}
\def\repInv{\tau}
\def\repInvRand{\tilde{\repInv}}
\def\repInvKern{\zeta}
\def\Pavg{P^{\circ}}
\newcommand{\orbSel}{\gamma}
\def\id{\textrm{id}}
\def\proj{\pi}
\def\smallsq{\scale[0.5]{\square}}
\newcommand{\empMeasAppAvg}[2]{{\hat{P}_{#1,#2}^{\smallsq}}}
\def\dirac{\delta}
\newcommand{\SO}[1]{\mathrm{SO}(#1)}
\newcommand{\Sym}[1]{\mathbb{S}_{#1}}
\def\lorentz{\mathrm{O}(1,3)}
\def\testStat{T}
\def\nullHyp{H_0}
\def\altHyp{H_1}
\def\sig{\alpha}
\def\power{\beta}
\def\critFun{\phi}
\newcommand{\invProbs}[1][\bfX]{\mathcal{P}^{\circ}(#1)}
\newcommand{\ninvProbs}[1][\bfX]{\mathcal{P}^{\times}(#1)}
\newcommand{\probs}[1][\bfX]{\mathcal{P}(#1)}
\def\hilbert{\mathcal{H}}
\def\funeval{\varphi}
\newcommand{\kme}[1]{\mu_{#1}}
\def\MMD{\mathrm{MMD}}
\def\tSMMD{\textsc{2sMmd}\xspace}
\def\tMMD{\textsc{Mmd}\xspace}
\def\tNMMD{\textsc{Nmmd}\xspace}
\def\tCW{\textsc{Cw}\xspace}
\def\tKCI{\textsc{Kci}\xspace}
\def\tCP{\textsc{Cp}\xspace}
\newcommand{\figscale}{0.65}
\newcounter{example}
\newenvironment{example}[1][\examplename]{
  \refstepcounter{example}
  \par
  \pushQED{\qed}%
  \normalfont \topsep6\p@\@plus6\p@\relax
  \begingroup
  
      \trivlist

  \item[\hskip\labelsep
        \bfseries
    #1\@addpunct{.}]\ignorespaces
}{%
  \popQED\endtrivlist\@endpefalse
  \endgroup
}
\providecommand{\examplename}{Example~\theexample}
\definecolor{CommentColor}{rgb}{0,.50,.50}
\newcounter{margincounter}
\title{Non-parametric Hypothesis Tests\\ for Distributional Group Symmetry
}
\author{
  Kenny Chiu \\
  Department of Statistics \\
  The University of British Columbia \\
  \texttt{kenny.chiu@stat.ubc.ca} \\
   \And
  Benjamin Bloem-Reddy \\
  Department of Statistics \\
  The University of British Columbia \\
  \texttt{benbr@stat.ubc.ca} \\
}
\begin{document}

\maketitle

\begin{abstract}
    Symmetry plays a central role in the sciences, machine learning, and statistics. For situations in which data are known to obey a symmetry, a multitude of methods that exploit symmetry have been developed. Statistical tests for the presence or absence of general group symmetry, however, are largely non-existent. 
    This work formulates non-parametric hypothesis tests, based on a single independent and identically distributed sample, for distributional symmetry under a specified group. 
    We provide a general formulation of tests for symmetry that apply to two broad settings. The first setting tests for the invariance of a marginal or joint distribution under the action of a compact group.
    Here, an asymptotically unbiased test only requires a computable metric on the space of probability distributions and the ability to sample uniformly random group elements.
    Building on this, we propose an easy-to-implement conditional Monte Carlo test and prove that it achieves exact $p$-values with finitely many observations and Monte Carlo samples.
    The second setting tests for the invariance or equivariance of a conditional distribution under the action of a locally compact group.
    We show that the test for conditional invariance or equivariance can be formulated as particular tests of conditional independence.
    We implement these tests from both settings using kernel methods and study them empirically on synthetic data. Finally, we apply them to testing for symmetry in geomagnetic satellite data and in two problems from high-energy particle physics. 
\end{abstract}

\section{Introduction} \label{sec:intro}

Symmetry has played an important role in statistical problems, from the classical literature on equivariant estimation \citep[][Ch.~3]{lehmann_theory_1998} and invariant testing \citep[][Ch.~6]{Lehmann:Romano:2005}, to modern work on the use of transformation groups in statistical estimation \citep{Chen:2020,Huang:2022aa} and machine learning problems \citep{Cohen:2019}.
One of the key ideas that emerges from this line of work is that by using models that account for symmetries present in data, one obtains statistical benefits through various forms of optimality \citep{lehmann_theory_1998,Lehmann:Romano:2005,eaton:sudderth:1999:consitency}, improved sample efficiency \citep{Chen:2020,Huang:2022aa}, and better out-of-sample generalization \citep{Elesedy:2021:equivariant,Elesedy:2021,Lyle:2020}.
Such approaches also have a certain appeal that bridges different philosophical positions: decision problems with symmetry are among the only known problems in which frequentist and Bayesian inference coincide, and also present working realizations of other approaches to inference, variously known as fiducial \citep{Fraser:1961:fiducial:invariance,hora:buehler:fiducial:estimation,hora:buehler:prediction}, pivotal \citep{eaton:sudderth:1999:consitency}, or structural \citep{fraser:1966:structural,fraser:1968:structure:book}.
A pervasive characteristic shared by all of that work is that a specific symmetry group is known or assumed, and the problem is carefully constructed with respect to that group.
That work, however, does not address the problem of identifying symmetry from data. 
Moreover, a symmetry assumption can be difficult to check, and if the assumption is wrong, then the performance of a symmetric model can be much worse than a non-symmetric one. 

Separately, symmetry plays a central role in modern science, particularly in the physical sciences where entire theories are constructed around the symmetries that must be obeyed by equations describing the behaviour of physical systems \citep{Gross_1996}.
Additionally, detection of new or broken symmetries is playing a role in the search for physics beyond the Standard Model \citep{atlas:2017:search}, particularly in data-driven approaches \citep{Karagiorgi_2022,Birman2022}. 
Recent work in machine learning and physics aims to learn or estimate symmetry groups from data \citep{Krippendorf_2020,zhou2021metalearning,Desai:2021,dehmamy2021automatic,Yang:2023aa} or to detect anomalous symmetry-breaking \citep{Collins2018anomaly,Birman2022}.
However, key inferential tools based on hypothesis tests for symmetry are missing.
Such tools are crucial if the discovery of symmetry from data is to be a reliable part of the scientific process. For example, they can be used to test for the presence or absence of a particular symmetry in data, with that symmetry specified by hypothesis or by a data-driven method that has learned or estimated a symmetry.
In situations with known or assumed symmetry, hypothesis tests for symmetry could also be used as model-checking criteria for models meant to exhibit that symmetry.

The present work formulates non-parametric tests, based on a single independent and identically distributed (i.i.d.) sample, for distributional symmetry under a specified group. 
We provide abstract formulations of tests that apply to two broad settings.
The first setting tests for the invariance of a marginal or joint distribution under the action of a compact group.
The test is formulated in such a way that if one has an asymptotically consistent estimator of a metric on the space of probability measures and the ability to sample uniformly random group elements, then it is straightforward to devise an asymptotically unbiased test for invariance.
More importantly, we design an easy-to-implement conditional Monte Carlo test that achieves exact $p$-values with finitely many observations and Monte Carlo samples. The test attains those properties by conditioning on a sufficient statistic induced by the group. 
The second setting tests for the invariance or equivariance of a conditional distribution under the action of a locally compact group, provided that the group action obeys weak regularity conditions.
We show that a test for conditional equivariance can be formulated as a particular test of conditional independence, which inherits the statistical properties of the conditional independence test chosen for implementation.
Although conditional independence testing is known to be hard \citep{Shah_2020}, especially as the dimension of the problem increases, the structure induced by symmetry means that the conditioning variables used in the test are often of much lower dimension than the observations. 

In addition to the generic testing methods and the study of their theoretical properties, we provide specific instantiations of the tests using kernel-based methods for non-parametric hypothesis testing.
We study these tests empirically on synthetic data, and apply them to geomagnetic satellite data and to two problems in high-energy particle physics.
Computer code required to run the experiments can be found on the \href{https://github.com/chiukenny/Tests-for-Distributional-Symmetry}{GitHub repository}\footnote{\url{https://github.com/chiukenny/Tests-for-Distributional-Symmetry}} for this work.

\subsection{Overview} \label{sec:overview}

The remainder of this section describes our work and main results informally, refraining for now from addressing technicalities.
The mathematical object that encodes symmetry is a group $\grp$.
Relevant technical details of groups are given in \cref{sec:background}.
Elements $g\in\grp$ act via transformations $x \mapsto g x$ of elements from a sample space $x \in \bfX$.
This action on $\bfX$ extends to the set $\calP(\bfX)$ of probability measures on $\bfX$.
If $P$ is the distribution of a random element $X \in \bfX$, then $g$ acts on $P$ via the pushforward, $g_* P(A) \define P(g^{-1}A)$, with $A \subseteq \bfX$ and $g^{-1}A \define \{ g^{-1}x : x \in A \}$.
A key question in many settings is whether the distribution $P$ underlying a set of i.i.d. observations $X_{1:n} \define (X_1,\dotsc,X_n)$ is \textit{invariant} under $\grp$ in the sense that
\begin{align*}
    g_* P = P \;, \quad \text{for each } g \in \grp \;.
\end{align*}
Outside of ill-behaved situations that typically do not arise in practice, this is only possible for a probability measure when $\grp$ is compact.
Any compact group $\grp$ has a unique invariant probability measure that can be thought of as the uniform distribution on $\grp$.
We denote this measure by $\haar$. 

For a specified group $\grp$, the statistical problem we address is to test the hypotheses 
\begin{align*}
    \nullHyp\colon P \text{ is $\grp$-invariant} \quad \text{versus} \quad \altHyp\colon P \text{ is not $\grp$-invariant} \;.
\end{align*}
If $\grp$ is relatively small and finite, or generated by a small set of elements (say of size $m$), invariance might be tested with a composite of $m$ two-sample hypothesis tests.
For large discrete groups, this approach quickly becomes untenable; for uncountable groups, it is not possible.
Instead, one might formulate hypothesis tests based on other characterizations of distributional invariance.
In \cref{thm:invariant_characterizations}, we collect a number of known identities that uniquely characterize the invariance of a probability measure and on which hypothesis tests may be based.
Perhaps the most well-known of the identities is that $P = \Pavg$ if and only if $P$ is $\grp$-invariant, where $\Pavg$ is obtained by averaging $g_* P$ over $\grp$ with respect to the invariant probability measure $\haar$,
\begin{align} \label{eq:intro:g:avg}
    \Pavg(A) \define \int_{\grp} P(g^{-1}A) \; \haar(dg) \;.
\end{align}
Because both $P$ and $\Pavg$ are probability measures on $\bfX$, any metric $\metric$ on $\calP(\bfX)$ can be used in conjunction with the empirical measure and a Monte Carlo estimate of the integral in \eqref{eq:intro:g:avg} to define a test statistic of the form
\begin{align*}
    \testStat_{n,m}(X_{1:n}) \define \metric\bigg( \frac{1}{n}\sum_{i=1}^n \delta_{X_i}(\argdot),\;\; \frac{1}{n m}\sum_{i=1}^n \sum_{j=1}^m \delta_{G_j X_i}(\argdot) \bigg) \;, \quad G_j \simiid \haar \;.
\end{align*}
This approach is very general and can be used for abstract spaces $\bfX$ other than $\R^d$ as long as one has access to a metric on $\calP(\bfX)$ and the ability to sample random elements of $\grp$ acting on $\bfX$.
An asymptotically consistent estimator of $\metric$ then yields an asymptotically unbiased test for invariance as $n,m\to\infty$ (\cref{thm:asymptotic:size:power}).

Beyond the general-purpose averaging approach, more detailed structure induced by $\grp$ is often available.
The group action partitions $\bfX$ into equivalence classes called \textit{orbits} so that $x$ and $x'$ are equivalent if and only if $x = gx'$ for some $g \in \grp$.
One can choose a representative element $\orbRep{x}$ of each orbit to define an \textit{orbit selector} $\orbSel(x) = \orbRep{x}$. 
The orbit selector allows one to decompose probability measures on $\bfX$, so that a random variable $X$ has an invariant distribution if and only if it satisfies $X \equdist G \orbSel(X)$, where $G \condind X$ is sampled uniformly from $\grp$.
As in the averaging test, one can then conduct a non-parametric test for invariance by comparing the untransformed empirical measure with the empirical measure of the observations $(G_1 \orbSel(X_1),\dotsc, G_n \orbSel(X_n))$. Such a test is asymptotically unbiased when based on a consistent estimator of a metric on $\calP(\bfX)$. 

More importantly, $\orbSel(X)$ is a special case of a \emph{maximal invariant} statistic, which is an invariant function that takes a different value on each orbit and thus uniquely encodes the orbits.
It is known that any maximal invariant is a sufficient statistic for $\invProbs$, the class of $\grp$-invariant probability distributions.
Sufficiency in particular means that for each $P\in\invProbs$, a sample $X_{1:n}\simiid P$ has the same conditional distribution given $\orbSel(X)_{1:n}$, and we can generate samples from that conditional distribution as $(G_1^{(b)}\orbSel(X_1),\dotsc,G_n^{(b)}\orbSel(X_n))$, with $G_i^{(b)}$ independent of $X_{1:n}$ and sampled i.i.d.\ from $\haar$.
The ability to sample from this shared conditional distribution enables us to design a conditional Monte Carlo test that yields exact $p$-values with finitely many observations and Monte Carlo samples.
\Cref{sec:inv:mc} details the test (\cref{alg:mc:pval}) and its statistical properties (\cref{thm:mc:test:p}). In \cref{sec:power:estimates}, we describe a method for estimating the conditional power function at the empirical measure of the observed data, $\empMeas{n}$, which can be combined with standard bootstrap resampling to estimate the power function at $P$. 

If $\grp$ acts freely on $\bfX$ in the sense that $gx = x$ implies $g$ is the identity element of $\grp$, then the orbit selector $\orbSel$ can be ``inverted'' to obtain the element of $\grp$ that sends $\orbRep{x}$ to $x$.
We call such a function, denoted $\repInv \colon \bfX \to \grp$, a \textit{representative inversion} because it satisfies $\repInv(x) \orbSel(x) = \repInv(x) \orbRep{x} = x$.
Yet another characterization of $\grp$-invariance is that for $X\sim P$, $\repInv(X) \equdist G$, with $G\condind X$ sampled uniformly from $\grp$.
In this case, a metric on the space of probability measures on $\grp$, $\probs[\grp]$, can be used as a test statistic.
\Cref{thm:mc:test:p} is easily adapted to this situation because $\haar$ is the unique invariant probability measure on $\grp$, and no appeal to sufficiency is required.
If the action of $\grp$ is not free, so that $gx = x$ for $g$ in some non-trivial subset $\grp_x \subseteq \grp$, then $\repInv$ can be replaced by an appropriate random variable $\repInvRand$ sampled from an \textit{inversion kernel}, $\repInvKern(x,\argdot)$.
The inversion kernel has a number of remarkable properties; the relevant one here is that if $\repInvRand \sim \repInvKern(x,\argdot)$, then $\repInvRand \orbSel(x) = x$ with probability one.
From this, it follows that a characterization of $\grp$-invariance is that $\repInvRand \equdist G$, with $G\sim\haar$.  

In each of the above cases, a non-parametric test for distributional invariance can be constructed by sampling random group transformations and applying them to a sample of data, then using an estimator for a metric on $\calP(\bfX)$ (or $\calP(\grp)$, as appropriate) to compare the untransformed sample with the randomly transformed sample. 
This recipe is generic and can in principle be used with any metric on $\calP(\bfX)$ (resp.~$\calP(\grp)$).
In \cref{sec:kernel:inv:coo}, we formulate specific versions of the tests using the kernel maximum mean discrepancy.
In \cref{sec:experiments}, we present the results of an empirical study of these tests on synthetic data to validate the theoretical properties obtained in \cref{thm:mc:test:p}, and apply the tests to two different applications: geomagnetic satellite data and simulated dijet events from the Large Hadron Collider.

\subsubsection{Conditional symmetry}

In some problems, especially those involving regression, classification, or prediction of a variable $Y \in \bfY$ from $X$, primary interest is in symmetry of the conditional distribution $P_{Y|X}$.
The conditional distribution is said to be \textit{equivariant} if for each measurable subset $B \subseteq \bfY$,
\begin{align*}
    P_{Y|X}(gx, B) = P_{Y|X}(x, g^{-1}B) \;, \quad x \in \bfX,\ g \in \grp \;.
\end{align*}
It is said to be invariant if the action of $\grp$ on $\bfY$ is trivial, so that the above equation holds with $g^{-1}B$ replaced by $B$.
Equivariant conditional distributions arise from the disintegration of jointly invariant probability distributions $P_{X,Y} = P_X \otimes P_{Y|X}$.
If $\grp$ is compact and the marginal distribution $P_X$ is known to be invariant, then testing for conditional equivariance of $P_{Y|X}$ is equivalent to testing for the joint invariance of $P_{X,Y}$, which could be carried out using the methods described above. 
However, the marginal distribution of $X$ may not be invariant---in many cases it is known not to be---and the problem cannot be reduced to a test for joint invariance. 
For example, if $\grp$ is non-compact, then $P_X$ cannot be $\grp$-invariant, but $P_{Y|X}$ may be. 
Instead, in \cref{thm:equivariance:cond:ind}, we obtain a conditional independence property that characterizes equivariance.
In particular, we show that $P_{Y|X}$ is equivariant if and only if 
\begin{align*}
    (\repInvRand, X) \condind \repInvRand^{-1} Y \mid \orbSel(X) \;, \quad \text{with} \quad \repInvRand \mid X \sim \repInvKern(X,\argdot) \;.
\end{align*}
Here, $\repInvRand^{-1}$ denotes the group inverse of the element $\repInvRand \in \grp$.
Moreover, $\orbSel(X)$ can be replaced by any {maximal invariant} statistic $\maxInv(X)$. 
This generalizes a related result of \citet{BloemReddy:2020}. 
A consequence of the result is that a test for conditional symmetry (equivariance or invariance) can be formulated as a test for conditional independence. %
In \cref{sec:kernel:equiv}, we describe an instantiation of this test using a kernel-based conditional independence test \citep{Zhang:2011}.
We apply it to synthetic data in \cref{sec:experiments}, as well as to data from three settings in the physical sciences.

\subsection{Related work}

There is an extensive literature on invariant testing problems; a textbook treatment can be found in \cite{Lehmann:Romano:2005}.
The main idea in invariant testing is as follows.
Let $\Omega$ represent the set of probability distributions under consideration, with $\Omega_0$ representing the subset that satisfy the null hypothesis and $\Omega_1$ those that do not, so that $\Omega = \Omega_0 \cup \Omega_1$ and $\Omega_0 \cap \Omega_1 = \emptyset$.
The testing problem is said to be $\grp$-invariant if each of $\Omega_0$ and $\Omega_1$ are $\grp$-invariant sets.
That is, if for every $P \in \Omega_0$, $g_* P \in \Omega_0$ for each $g \in \grp$, and likewise for every $P \in \Omega_1$.
There is extensive theory for such tests, much of which focuses on the optimality of tests based on maximal invariant statistics. 
A related line of work studies randomization tests in which random group elements are used for randomization; randomized permutation and sign-flip tests are two of the most common.
The literature on such tests for specific finite groups is substantial.
Recent work by \citet{Dobriban_2022} obtains consistency results for randomization tests using compact groups, under an additive noise decomposition assumption, and includes a review of earlier invariance-based randomization literature.  

The testing framework proposed here in \cref{sec:symmetries} trivially fits in the invariant testing framework, and the conditional Monte Carlo test proposed in \cref{sec:inv:mc} is closely related to invariance-based randomization techniques, but little follows directly from those connections. 
Whereas invariant testing and invariance-based randomization deal with accounting for or leveraging symmetry so that the problem of testing \emph{under} symmetry is simplified, the current work addresses the problem of testing \emph{for} symmetry; the two are conceptually somewhat orthogonal, though the mathematical techniques bear some similarities.
For example, a special case of our result on the size of the conditional Monte Carlo test (\cref{thm:mc:test:p}) was obtained for finite groups by \citet{Hemerik2018}, and our \cref{thm:mc:test:p} is perhaps of independent interest for invariance-based randomization tests. 

In light of the potential importance of a formal hypothesis testing framework for the presence or absence of distributional symmetry, it is somewhat surprising that with the exception of recent work \citep{Fraiman:2021,Christie:2022,Christie:2023}, such a framework and corresponding methods are largely absent from the literature.
As we describe below, those recent methods make strong assumptions that limit their applicability.
Our methods are broadly applicable; our experiments compare to the existing methods where possible and demonstrate some settings where those methods cannot be used. 

The recent work of \citet{Fraiman:2021} applied the Cram\'er--Wold (CW) theorem to formulate non-parametric tests for group invariance.
Those tests rely on the group $\grp$ being generated by a (small) finite set $\grp_0$ of transformations such that each element of $\grp$ can be written as a finite product $g = g_1\dotsb g_m$, where for each $j$, either $g_j \in \grp_0$ or $g_j^{-1} \in \grp_0$.
When this assumption holds, it can lead to a reduction in the computational complexity of the test.
On the other hand, the assumption can only be satisfied by discrete groups, as no uncountable group can be finitely generated.
Conversely, both the abstract formulation of our tests and our kernel-based implementations can be applied to any compact group, which includes finite discrete groups.
We compare both approaches empirically in \cref{sec:experiments} and find that although the computational complexity of the CW-based tests is favourable, they tend to be less powerful than the kernel-based tests we implemented.

To our knowledge, the test we propose in \cref{sec:equiv,sec:kernel:equiv} constitutes the first test (parametric or non-parametric) for symmetry of a conditional distribution.
In recent work, \citet{Christie:2023} proposed two tests for $\grp$-invariance of the conditional expectation $f(x) = \E[Y|X=x]$, $f \colon \bfX \to \R$.
Both of those tests require the user to assume that $f$ belongs to some specific class of functions, $\calF$, of bounded variation, and the assumption of an additive noise model, $Y_i = f(X_i) + \eps_i$, for independent mean-zero noise $\eps_i$.
One test requires knowledge of the bound $V(x,x') = \sup_{f\in\calF}|f(x) - f(x')|$ and a bound on the deviations on the noise variable, $\Pr(|\eps_i - \eps_j| \geq c) \leq p_c$.
The other test is less restrictive, instead requiring knowledge of some $\calV(x,x')$ satisfying $|f(x) - f(x')| \leq C_f \calV(x,x')$.
In our experiments in \cref{sec:experiments}, these assumptions are too restrictive for the tests to be applicable.
We note that the primary aim of \citet{Christie:2023} is to estimate the \emph{maximal} group under which $f$ is invariant, which amounts to conducting a collection of tests over a subgroup lattice of some candidate maximal group.
In principle, our tests could be substituted into their procedure, though we do not address that problem here. 

Apart from hypothesis testing, researchers in physics and machine learning have developed methods for estimating symmetries from data; see the references in \cref{sec:intro}. Hypothesis tests for symmetry, either as part of the estimation procedure or as validation of the estimated symmetry, have not been developed. To the best of our knowledge, the only exception is \citep{Birman2022}, which develops a test for anomaly detection, but requires restrictive distributional assumptions and approximations.

\section{Background: Groups, actions, and invariant measures} \label{sec:background}

Throughout, $\bfX$ denotes a topological space and $\bfS_{\bfX}$ its Borel $\sigma$-algebra, so that $(\bfX, \bfS_{\bfX})$ is a standard Borel measurable space.
When there is no chance of confusion, we will refer to $\bfX$ as a measurable space, and likewise for $\bfY$ and $\bfM$.
Let $P$ be a probability measure defined on $\bfX$.
For a random variable $X$ taking values in $\bfX$, we write $\E_P[X]$ for the expectation of $X$ with respect to $P$, and $X \sim P$ to denote a random variable sampled from $P$. 
For any measurable function $f$ on $\bfX$, let $f_*P$ denote the pushforward, or image measure, with $f_{*}P(A) = P(f^{-1}(A))$ for all measurable sets $A \in \bfS_{\bfX}$.
We use $\dirac_x$ to denote the Dirac measure at a point $x$.

\subsection{Groups}

Groups are central to the methods developed here, so we review some basic group theory.  
A group $\grp$ is a set with a binary operation~$\cdot$ that satisfies the associativity, identity, and inverse axioms. We denote the identity element by $\id$. 
For notational convenience, we write $g_1g_2=g_1\cdot g_2$ for $g_1,g_2\in\grp$. 
The group $\grp$ is said to be measurable if the group operations $g \mapsto g^{-1}$ and $(g_1,g_2) \mapsto g_1 g_2$ are $\bfS_{\grp}$-measurable, where $\bfS_{\grp}$ is a $\sigma$-algebra of subsets of $\grp$. 
We assume throughout that $\grp$ has a topology that is locally compact, second countable, and Hausdorff (lcscH), which makes the group operations continuous.
We may then take $\bfS_{\grp}$ as the Borel $\sigma$-algebra, making $\grp$ a standard Borel space. 

For $A\subseteq \grp$ and $g \in \grp$, we write $gA = \{ gh : h \in \grp\}$ and $Ag = \{ hg : h \in \grp \}$.
A measure $\nu$ on $\grp$ is said to be left-invariant if $\nu(gA) = \nu(A)$ for all $A \in \bfS_{\grp}$, and right-invariant if $\nu(Ag) = \nu(A)$.
When $\grp$ is lcscH, there exist left- and right-invariant $\sigma$-finite measures $\haar_{\grp}$ and $\rhaar_{\grp}$, respectively, that are unique up to scaling~\citep[Ch.~2.2]{Folland:2016}, known as left- and right-Haar measures.
When there is no chance of confusion, we use $\haar$ to denote left-Haar measure.
If $\grp$ is compact, then $\haar = \rhaar$, and the unique normalized Haar measure acts as the uniform probability measure over the group. 
We use $G$ to denote a random element of $\grp$; when $\grp$ is compact, $G\sim\haar$ denotes a random group element sampled from $\haar$.

\subsection{Group actions}

We briefly review the relevant aspects of groups acting on sets and special properties that are used in our work.
Many of the mathematical techniques have appeared in various statistical contexts, and a thorough treatment can be found in \cite{Eaton:1989,WIjsman:1990,Eaton:2007}.
Inversion kernels (described below) do not seem to have been used in statistics or machine learning, perhaps owing to their relatively recent appearance in probability \citep{Kallenberg2011:skew}.
However, special cases in which deterministic versions (called representative inversions below) exist have appeared in statistics and machine learning \citep{BloemReddy:2020,winter2022unsupervised}. 

A group $\grp$ acts measurably on a set $\bfX$ if the group action $\grpAct\colon \grp\times\bfX \to \bfX$ is measurable relative to $\bfS_{\grp} \otimes \bfS_{\bfX}$ and $\bfS_{\bfX}$.
All actions in this work are assumed to be continuous (and therefore measurable), and for convenience we simply say that $\grp$ acts on $\bfX$, writing $gx=\grpAct(g,x)$ as short-hand. 
For a set $A\subseteq\bfX$, the group acts as $gA=\{gx : x\in A\}$.
For fixed $x \in \bfX$, the stabilizer subgroup is $\grp_x = \{ g \in \grp : gx = x \}$.
The action is called \textit{free} or \textit{exact} if $gx = x$ implies that $g = \id$, in which case $\grp_x = \{\id\}$ for all $x \in \bfX$. 
The orbit of $x \in \bfX$ is the set $\orbit(x)=\{gx : g\in\grp\}$.
The orbits partition $\bfX$ into equivalence classes, where two points are equivalent if and only if they belong to the same orbit.
If $\bfX$ has only one orbit, then the action is said to be \textit{transitive}.
It is not hard to show that if $hx = x'$ for $x \neq x'$, then $h \grp_x h^{-1} = \grp_{x'}$.
That is, the stabilizer subgroups of the elements of an orbit are all conjugate. 

A function $f$ with domain $\bfX$ is invariant if it is constant on each orbit: $f(gx) = f(x)$, $x \in \bfX, g \in \grp$.
In general, an invariant function may take the same value on different orbits. 
A \textit{maximal invariant} is an invariant function $\maxInv \colon \bfX \to \bfM$ that takes a different value on each orbit, so that if $\maxInv(x) = \maxInv(x')$, then $x = gx'$ for some $g \in \grp$.
Maximal invariants arise as particularly useful statistics in problems with group symmetry because any invariant function $f$ can be written as $f(x) = k(\maxInv(x))$, for some function $k$.
Maximal invariants are typically not unique.
However, they are all isomorphic to the canonical projection onto the quotient space, $\proj \colon \bfX \to \bfX / \grp,\ x \mapsto \orbit(x)$.
Measurability issues can arise when $\grp$ is non-compact; we discuss these below. 

Invariance is a special case of a more general property. Suppose $\grp$ acts on $\bfX$ and on another set $\bfY$; the group action may be different on each. A function $f\colon \bfX \to \bfY$ is \emph{$\grp$-equivariant} if $f(gx) = gf(x)$, $x \in \bfX, g \in \grp$. These properties extend to measures. 

\begin{definition} \label{def:invariance}
    A probability measure $P$ on $\bfX$ is \textit{$\grp$-invariant} if $P(g^{-1}A) = P(A)$ for all $g\in\grp$, $A \in \bfS_{\bfX}$. 
\end{definition}

We write $g_* P(A) = P(g^{-1}A)$ as the pushforward of $P$ under the action of $g\in\grp$.
In that notation, $\grp$-invariance of $P$ entails $g_* P = P$ for all $g \in \grp$. 

We say that $P_{X,Y}$ is jointly $\grp$-invariant if it is invariant in the sense of \cref{def:invariance} extended to $\grp$ acting on $\bfX \times \bfY$.
In addition to joint invariance, we may define symmetry in the conditional distribution.

\begin{definition} \label{def:dist:equivariance}
    The conditional distribution of $Y$ given $X$ is said to be \emph{$\grp$-equivariant}\footnote{Some authors refer to \eqref{eqn:dist:equiv:def} as invariance; we use equivariance to avoid confusion with the invariance of marginal and joint distributions, and to be consistent with current usage, especially with respect to equivariant functions.} if
    \begin{equation} \label{eqn:dist:equiv:def}
        P_{Y|X}(x,B) = P_{Y|X}(gx,gB) \;, \quad x \in \bfX, \ B \in \bfS_{\bfY}, \ g \in \grp \;.
    \end{equation}
    If the action of $\grp$ on $\bfY$ is trivial and $P_{Y|X}$ satisfies \eqref{eqn:dist:equiv:def} so that $P_{Y|X}(gx,B) = P_{Y|X}(x,B)$, then the conditional distribution is said to be $\grp$-invariant.
\end{definition} 
Both of these definitions also apply to general measures (i.e., probability measures and conditional distributions can be replaced by measures and Markov kernels, respectively).

\subsubsection{Representatives and inversions} \label{sec:reps:invs}

Our work makes extensive use of special entities that are somewhat non-standard in the recent invariance-based statistics and machine learning literature, so we review them here. 
We can assign a particular element of each orbit as the \textit{orbit representative}. 
We write $\orbRep{x}$ as the representative on the orbit $\orbit(x)$.
That is, $\orbRep{x}=gx$ for some $g\in\grp$.
The structural properties described below do not depend on which element of the orbit is chosen as the representative.
All of the properties are relative to a particular choice, and a different choice would result in the same properties relative to that choice.
For a particular choice of representatives, the subset of $\bfX$ consisting of each orbit's representative is denoted by $\orbRep{\bfX}$.
Note that $\orbRep{\bfX} \cap \orbit(x)$ consists of a single point; namely, $\orbRep{x}$.
A function $\orbSel \colon \bfX \to \orbRep{\bfX}$ that maps elements of $\bfX$ onto their corresponding orbit representatives in $\orbRep{\bfX}$ is called an \textit{orbit selector}.
Note that any orbit selector is a maximal invariant by definition. 
Conversely, a maximal invariant defines a choice of orbit representatives if the value it takes on each orbit is an element of the orbit.
If $\orbRep{\bfX}$ is a measurable subset of $\bfX$ and $\orbSel$ is a measurable function relative to $\bfS_{\bfX}$ and $\bfS_{\bfX} \cap \orbRep{\bfX}$, then $\orbRep{\bfX}$ is called a \textit{measurable cross-section}. 

A function $\repInv\colon\bfX\to\grp$ is called a \textit{representative inversion} if $\grpAct(\repInv(x),\orbSel(x)) = \repInv(x)\orbSel(x) = x$ and $\repInv(gx) = g\repInv(x)$ for all $x \in \bfX, g \in \grp$. 
The role of $\repInv$ is to return the element of $\grp$ that must be applied to move $\orbRep{x}$ to $x$.
Conversely, the inverse element, $\repInv(x)^{-1}$, moves $x$ to $\orbRep{x}$.
In order for $\repInv$ to be uniquely defined, the group action must be free.
If it is not, an equivariant \textit{inversion probability kernel}, or inversion kernel for short, $\repInvKern \colon \bfX \times \bfS_{\grp} \to [0,1]$, can be used in place of $\repInv$, so that a sample from $\repInvKern(X,\argdot)$ will transform $\orbSel(X)$ into $X$ with probability one. %
That is, if $X \sim P$ and $\repInvRand \mid X \sim \repInvKern(X,\argdot)$, then $X = \repInvRand \orbSel(X)$ almost surely. 
At a high level, one may think of the inversion kernel $\repInvKern(x,\argdot)$ as the uniform distribution on the left coset $g \grp_{\orbSel(x)}$, where $g\orbSel(x) = x$. 
In the case of a free action, this simplifies to $\repInvKern(x,\argdot)= \delta_{\repInv(x)}$. 
In some cases, a representative inversion can still be defined when the action is not free (see \cref{expl:SOd:invariance}), in which case an equivalent inversion kernel can be defined as $\repInvKern'(x,B) \define \repInvKern(\orbSel(x),\repInv(x)^{-1}B)$.

\subsubsection{Proper group actions} \label{sec:proper:action}

In the analysis of probabilistic aspects of group actions, measurability issues can arise without regularity conditions.
Throughout, we will assume that the group action is \emph{proper}. 
That is, there exists a strictly positive measurable function $h \colon \bfX \to \R_+$ such that for each $x \in \bfX$, we have $\int_{\grp} h(gx) \haar(dg) < \infty$ \citep{Kallenberg2007}.\footnote{This definition of proper group action is a slightly weaker, non-topological version of the definition used in previous work in the statistics literature \citep[e.g.,][]{Eaton:1989,WIjsman:1990}, and only requires the exitence of Haar measure. The topological version is as follows: the map $(g,x) \mapsto (gx, x)$ is a proper map, i.e., the inverse image of each compact set in $\bfX\times\bfX$ is a compact set in $\grp \times \bfX$. That definition implies the one used here; see \cite{Kallenberg2007} for details.} 
This is a standard assumption in statistical applications of group theory \citep[e.g.,][]{Eaton:1989,WIjsman:1990,McCormack2023} and is satisfied in many settings of interest. 
A sufficient condition for proper group action is when $\grp$ is compact and acts continuously on $\bfX$, which is the setting for our tests for invariance in \cref{sec:symmetries}.
When $\grp$ is non-compact, a group action can fail to be proper if $\grp$ is ``too large'' for $\bfX$ in the sense that the stabilizer subgroups are non-compact. A class of non-compact group actions known to be proper are those of the isometry group of a Riemannian manifold. 
For the purposes of this work, we rely on the assumption of proper group actions to guarantee the existence of measurable orbit selectors and inversion kernels, which turn out to have extremely useful properties. 
We gather some of those properties in a proposition, which is a collection of existing results. 

To state it, let $\nu$ be any bounded measure on $(\bfX,\bfS_{\bfX})$ and let $\bar{\bfS}^{\nu}_{\bfX}$ be the completion of $\bfS_{\bfX}$ to include all subsets of $\nu$-null sets, and denote by $\bar{\nu}$ the extension of $\nu$ to $\bar{\bfS}^{\nu}_{\bfX}$ \citep[see, e.g.][Proposition~1.3.10]{Cinlar_2011}.
All statements of $\bar{\nu}$-measurability in the following proposition are with respect to $\bar{\bfS}^{\nu}_{\bfX}$, so that a set $A\subseteq\bfX$ is $\bar{\nu}$-measurable if $A \in \bar{\bfS}^{\nu}_{\bfX}$.
Moreover, a function defined by a particular property is $\bar{\nu}$-measurable if it is measurable in the usual sense with respect to $\bar{\bfS}^{\nu}_{\bfX}$, and if the defining property holds with the possible exception of a $\bar{\nu}$-null set. 
Clearly, such a function would also be $\bar{\rho}$-measurable for any measure $\rho \ll \nu$.

\begin{proposition} \label{prop:orbsel:inv}
    Let $\grp$ be a lcscH group acting continuously and properly on $\bfX$, and $\nu$ any bounded measure on $\bfX$.
    Then the following hold:
    \begin{enumerate}
        \item The canonical projection $\pi \colon \bfX \to \bfX/\grp$ is a measurable maximal invariant, and any measurable $\grp$-invariant function $f \colon \bfX \to \bfY$ can be written as $f = f^*\circ \pi$, for some measurable $f^* \colon \bfX/\grp \to \bfY$.

        \item There exists a $\bar{\nu}$-measurable orbit selector $\orbSel \colon \bfX \to \orbRep{\bfX}$, which is a maximal invariant statistic, and it induces a $\bar{\nu}$-measurable cross-section $\orbRep{\bfX} = \orbSel(\bfX)$. 

        \item For a fixed $\bar{\nu}$-measurable orbit selector $\orbSel$, there exists a unique $\bar{\nu}$-measurable inversion probability kernel $\repInvKern \colon \bfX \times \bfS_{\grp} \to [0,1]$ with the following properties:
            \begin{enumerate}
                \item $\repInvKern$ is $\grp$-equivariant: For all $g \in \grp, x \in \bfX, B \in \bfS_{\grp}$, $\repInvKern(gx,B) = \repInvKern(x,g^{-1} B)$. 

                \item For each $x \in \bfX$, $\repInvKern(\orbSel(x),\argdot)$ is normalized Haar measure on the stabilizer subgroup $\grp_{\orbSel(x)}$.

                \item For each $x \in \bfX$, if $\repInvRand \sim \repInvKern(x,\argdot)$, then $\repInvRand \orbSel(x) = x$ with probability one. 

                \item If there is a $\bar{\nu}$-measurable representative inversion $\repInv \colon \bfX \to \grp$ associated with $\orbSel$ such that it satisfies $\repInv(x)\orbSel(x) = x$  and $\repInv(gx) = g\repInv(x)$ for each $x \in \bfX, g \in \grp$, then $\repInvKern'(x,B) = \repInvKern(\orbSel(x),\repInv(x)^{-1}B)$ is an equivalent inversion kernel.
                In particular, this holds when the action of $\grp$ on $\bfX$ is free, in which case $\grp_{\orbSel(x)} = \{\id \}$ and the inversion kernel is $\delta_{\repInv(x)}$. 
            \end{enumerate}

    \end{enumerate}
\end{proposition}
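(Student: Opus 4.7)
The proposition collects known results; my plan is to assemble them by citing the appropriate measurable-section and skew-factorization literature, then verifying that the listed properties follow directly from the standard constructions.

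For part 1, I would begin by equipping $\bfX/\grp$ with the quotient $\sigma$-algebra induced by the canonical projection $\pi$, which makes $\pi$ measurable by definition. It is a maximal invariant essentially by construction: $\pi(x)=\pi(x')$ iff $x'\in\orbit(x)$. The factorization $f = f^*\circ\pi$ for any measurable $\grp$-invariant $f$ is then the standard "factoring through equivalence classes" result for standard Borel spaces; I would invoke a measurable selection argument (for instance, the result in \citet{Eaton:1989} or the quotient-space arguments in \citet{Kallenberg2007}) to produce the measurable $f^*$. Properness is not strictly required for this part, but it guarantees that $\bfX/\grp$ inherits a nice Borel structure.

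For part 2, I would appeal to the existence of a measurable cross-section for proper actions of lcscH groups on standard Borel spaces. This is exactly the content of the measurable cross-section theorems developed in \citet{Kallenberg2007} and going back to \citet{Eaton:1989,WIjsman:1990}. The cross-section $\orbRep{\bfX}$ meets each orbit in a single point, and composing it with the map sending $x$ to the unique representative in its orbit yields the orbit selector $\orbSel$. The $\bar{\nu}$-measurability (as opposed to full Borel measurability) is precisely the standard relaxation needed when a strict Borel cross-section may not exist; completion with respect to $\nu$ handles any residual null-set issues. By construction $\orbSel$ is invariant and takes a distinct value on each orbit, hence is a maximal invariant.

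For part 3, I would invoke the skew-factorization / inversion kernel theorem of \citet{Kallenberg2011:skew} (which requires precisely properness of the action and the existence of a measurable orbit selector from part 2) to obtain the existence and uniqueness (up to $\bar{\nu}$-null sets) of a $\grp$-equivariant kernel $\repInvKern$. Property (a) is the equivariance built into the theorem's statement. Property (b) follows by specializing equivariance to representatives: when $x = \orbSel(x)$, equivariance forces $\repInvKern(\orbSel(x),\argdot)$ to be left-invariant under the stabilizer $\grp_{\orbSel(x)}$; combined with the normalization and support constraints it must be the normalized Haar measure on that compact stabilizer. Property (c) is the defining "inversion" property of the kernel from Kallenberg's construction. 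Property (d), when a deterministic $\repInv$ exists, is a direct verification: the candidate $\repInvKern'(x,B)=\repInvKern(\orbSel(x),\repInv(x)^{-1}B)$ is $\grp$-equivariant because of the equivariance of $\repInv$, and sampling $\repInvRand\sim \repInvKern'(x,\argdot)$ gives $\repInvRand\orbSel(x)=\repInv(x)\cdot g\orbSel(x)=\repInv(x)\orbSel(x)=x$ a.s.\ (where $g$ is a stabilizer draw), so uniqueness from the theorem identifies $\repInvKern'$ with $\repInvKern$. The free-action special case is then immediate since $\grp_{\orbSel(x)}=\{\id\}$ forces the Haar measure on the stabilizer to be $\dirac_{\id}$.

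The main obstacle, and really the only non-bookkeeping part, is ensuring that the hypotheses of the cited cross-section and skew-factorization theorems are actually satisfied in the generality stated: lcscH $\grp$, continuous proper action on a standard Borel $\bfX$, and a bounded reference measure $\nu$. Once that is checked, the four sub-properties of $\repInvKern$ are essentially rewriting the conclusion of Kallenberg's theorem in the notation of the paper, plus the short verification of (d). I would therefore spend most of the write-up carefully reconciling definitions (particularly the non-topological properness condition introduced here versus the topological one in \citet{Eaton:1989}) and cite \cite{Kallenberg2007} for the equivalence that allows the cross-section theorems to be invoked.
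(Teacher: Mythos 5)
Your proposal follows essentially the same route as the paper: item~1 is handled by citing the quotient/factorization result (cf.\ \citet[Theorem~5.4]{Eaton:1989}), items~2--3c are obtained by invoking the measurable orbit-selector and inversion-kernel results of \citet{Kallenberg2011:skew,Kallenberg:2017}, and item~3d is checked directly from the equivariance of $\repInv$ together with the stabilizer-Haar property, which is exactly how the paper derives it from 3a and 3b. Your additional care in reconciling the non-topological properness condition with the hypotheses of the cited theorems is a sensible elaboration but not a departure from the paper's argument.
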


The measurability of the canonical projection is a result from functional analysis; see \citet[][Theorem~5.4]{Eaton:1989} for an extended statement and references.
Items 2--3c follow directly from results of \citet{Kallenberg2011:skew,Kallenberg:2017} on the existence of universally measurable versions of $\orbSel$ and $\repInvKern$.
Item 3d follows from 3a and 3b. 

In the remainder of the paper, we assume that the action of $\grp$ on any space is continuous and proper; these conditions will be implicit in statements such as ``let $\grp$ be a group that acts on $\bfX$''.
In particular, measurable orbit selectors and inversion kernels exist under these assumptions.

\section{Testing for distributional invariance} \label{sec:symmetries}

In this section, we develop an abstract framework for non-parametric tests for distributional invariance under a specified group.
The tests are based on known characterizations of distributional invariance.
We briefly review that background here before developing the hypothesis testing framework in \cref{sec:test:stat:inv,sec:inv:mc}. 

We are interested in testing for the $\grp$-invariance of a \emph{probability} measure, which, under the assumption that $\grp$ acts properly on $\bfX$, requires $\grp$ to be compact.
For the remainder of this section, we assume that $\grp$ is compact; non-compact groups will arise in the treatment of conditional symmetry in \cref{sec:equiv}. 
For a specified compact group $\grp$, given a sample of data $(X_1,\dotsc,X_n)\simiid P$ from an unknown distribution $P$, decide between:
\begin{align*}
    \text{$\nullHyp \colon$ $P$ is $\grp$-invariant \quad versus \quad $\altHyp\colon$ $P$ is not $\grp$-invariant} \;.
\end{align*}
Recall that $\probs$ denotes the set of probability measures on $\bfX$. 
The set of $\grp$-invariant probability measures is $\invProbs$, and the non-invariant ones are $\ninvProbs$, so that $\invProbs \cup \ninvProbs = \probs$ and $\invProbs \cap \ninvProbs = \emptyset$.
For a specified $\grp$, we use $\nullHyp$ and $\invProbs$ interchangeably, and similarly for $\altHyp$ and $\ninvProbs$. 

Distributional invariance can be characterized in several known ways.
One well-known way to obtain an invariant distribution is to average over the group. 
In particular, when $\grp$ is compact, we can define the probability measure obtained by \textit{orbit-averaging $P$} as
\begin{equation*}
\Pavg(A) \define \int_{\grp} g_*P(A) \haar(dg) = \int_{\grp} P(g^{-1}A) \haar(dg) \;, \quad A \in \bfS_{\bfX} \;.
\end{equation*}
We refer to $\Pavg$ as the \textit{orbit-averaged distribution}. The averaging operator yields a useful characterization of invariant probability measures.
The following proposition lists additional characterizations that will be useful in developing hypothesis tests for invariance. 

\begin{proposition} \label{thm:invariant_characterizations}
Let $\grp$ be a compact group acting on $\bfX$ and $P$ a probability measure on $\bfX$.
Let $\orbSel$ be a measurable orbit selector and 
$\repInvKern$ a measurable inversion kernel. 
With $X \sim P$, the following are equivalent:
\begin{properties}[label=I\arabic*.,ref=I\arabic*]
    \setcounter{propertiesi}{-1}
    \item $P$ is $\grp$-invariant. \label{prp:inv:def}
    \item $P=\Pavg$. \label{prp:inv:orb:avg}
    \item If $G \sim \haar$ with $G \condind X$, then $X \equdist GX$. \label{prp:inv:random:G}
    \item If $G \sim \haar$ and $Y \sim \orbSel_{*}P$ with $G\condind Y$, then $X \equdist GY$.
    This holds even conditionally on $\orbSel(X)$. That is, $(\orbSel(X), X, G) \equdist (\orbSel(X), G\orbSel(X), G)$, which implies that ${X \mid \orbSel(X) \equdist G\orbSel(X) \mid \orbSel(X)}$.  \label{prp:inv:generate}
    \item If $\repInvRand \sim \repInvKern(X,\argdot)$ and $G \sim \haar$ with $\repInvRand \condind G \mid \orbSel(X)$, then $\repInvRand \equdist G$.
    If there exists a representative inversion $\repInv(x)$, then this holds with $\repInvRand$ replaced by $\repInv(X)H$, where $H \sim \haar_{\grp_{\orbSel(X)}}$. \label{prp:inv:rep:invers} 
\end{properties}
\end{proposition}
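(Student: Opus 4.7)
The plan is to establish $\ref{prp:inv:def} \Leftrightarrow \ref{prp:inv:orb:avg} \Leftrightarrow \ref{prp:inv:random:G}$ first, and then derive $\ref{prp:inv:generate}$ and $\ref{prp:inv:rep:invers}$ from $\ref{prp:inv:def}$ by exploiting the a.s.\ decomposition $X = \repInvRand \orbSel(X)$ provided by \cref{prop:orbsel:inv}(3c). Compactness of $\grp$ is essential throughout so that $\haar$ is a probability measure that is both left- and right-invariant, which underwrites all of the Fubini-style manipulations below.

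For $\ref{prp:inv:def} \Leftrightarrow \ref{prp:inv:orb:avg}$, the forward direction is immediate upon averaging $g_*P = P$ against $\haar$. For the converse, left-invariance of $\haar$ (via the change of variable $g \mapsto hg$ inside the defining integral of $\Pavg$) gives $h_*\Pavg = \Pavg$ for every $h \in \grp$, so $P = \Pavg$ automatically forces $h_*P = P$. For $\ref{prp:inv:orb:avg} \Leftrightarrow \ref{prp:inv:random:G}$, a direct Fubini computation identifies the law of $GX$ with $\Pavg$ when $G \sim \haar$ is independent of $X \sim P$, and hence $X \equdist GX$ is exactly $P = \Pavg$.

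For $\ref{prp:inv:def} \Leftrightarrow \ref{prp:inv:generate}$, I would start from $\ref{prp:inv:random:G}$ under invariance to write $X \equdist GX = G\repInvRand\orbSel(X)$; right-invariance of $\haar$ shows that $G\repInvRand$ is $\haar$-distributed and independent of $\orbSel(X)$, yielding the conditional identity $X \mid \orbSel(X) \equdist G\orbSel(X) \mid \orbSel(X)$ (the marginal identity $X \equdist GY$ with $Y \sim \orbSel_*P$ then follows by integrating out $\orbSel(X)$). For the converse, for any fixed $g_0 \in \grp$ left-invariance of $\haar$ yields $g_0 X \mid \orbSel(X) \equdist g_0 G\orbSel(X) \mid \orbSel(X) \equdist G\orbSel(X) \mid \orbSel(X) \equdist X \mid \orbSel(X)$; marginalizing (using $\orbSel(g_0 X) = \orbSel(X)$) produces $g_0 X \equdist X$, i.e., $\grp$-invariance of $P$.

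For $\ref{prp:inv:def} \Leftrightarrow \ref{prp:inv:rep:invers}$, I would combine $\ref{prp:inv:generate}$ with the inversion-kernel properties of \cref{prop:orbsel:inv}(3a,3b). Conditioning on $\orbSel(X) = m$ and using $\ref{prp:inv:generate}$ to write $X = Gm$ for $G \sim \haar$, the equivariance identity $\repInvKern(Gm,\argdot) = \repInvKern(m, G^{-1}\argdot)$ together with $\repInvKern(m,\argdot) = \haar_{\grp_m}$ shows that $\repInvRand$ is conditionally distributed as $GH$ with $H \sim \haar_{\grp_m}$ independent of $G$; applying right-invariance of $\haar$ one more time collapses $GH$ to a $\haar$-distributed element on all of $\grp$, giving $\repInvRand \equdist G$. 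The free-action/representative-inversion case follows because $\repInvKern(x,\argdot)$ reduces to $\dirac_{\repInv(x)} \ast \haar_{\grp_{\orbSel(x)}}$, justifying the replacement $\repInvRand \to \repInv(X)H$. The converse runs in reverse through $\ref{prp:inv:generate}$: the conditional law of $\repInvRand$ given $\orbSel(X) = m$ determines that of $X = \repInvRand\orbSel(X)$ given $\orbSel(X) = m$, and matching Haar yields the identity in $\ref{prp:inv:generate}$. The main obstacle in this last step is verifying that the conditional law of $\repInvRand$ given $\orbSel(X) = m$ is genuinely Haar on all of $\grp$ rather than Haar on a single coset of $\grp_m$; this requires carefully combining the equivariance of $\repInvKern$ with the convolution-type identity that independent $\haar$- and $\haar_{\grp_m}$-distributed elements multiply to a $\haar$-distributed element.
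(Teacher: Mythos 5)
Your proposal is correct, and for \cref{prp:inv:def,prp:inv:orb:avg,prp:inv:random:G} it does exactly what the paper does (average against $\haar$, note $\Pavg$ is itself invariant, Fubini to identify the law of $GX$ with $\Pavg$). The genuine difference is in the remaining equivalences: the paper does not prove \cref{prp:inv:def}$\Leftrightarrow$\cref{prp:inv:generate} itself but cites \citet[Theorems~4.3--4.4]{Eaton:1989} and \citet[Theorem~7.15]{Kallenberg:2017}, and then dispatches \cref{prp:inv:rep:invers} with the one-line remark that it follows from \cref{prp:inv:generate} and $x \equas \repInvRand\orbSel(x)$. You instead give a self-contained argument: \cref{prp:inv:generate} from the a.s.\ decomposition $X = \repInvRand\orbSel(X)$ plus bi-invariance of $\haar$ on the compact group (so $G\repInvRand \sim \haar$ independently of $\orbSel(X)$), and \cref{prp:inv:rep:invers} from equivariance of $\repInvKern$, item 3b of \cref{prop:orbsel:inv}, and the convolution fact that $GH \sim \haar$ when $G\sim\haar$ and $H \sim \haar_{\grp_m}$ are independent. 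This buys a proof that does not outsource the key step, at the price of leaning on the existence and properties of the measurable orbit selector and inversion kernel from \cref{prop:orbsel:inv}, whereas the cited route establishes \cref{prp:inv:generate} with less of that machinery. Two points to tighten. First, in the forward direction of \cref{prp:inv:generate} carry $\orbSel(X)$ along explicitly: apply $x \mapsto (\orbSel(x),x)$ to $X \equdist GX$ and use $\orbSel(GX)=\orbSel(X)$ before invoking $G\repInvRand \sim \haar$ independent of $\orbSel(X)$; otherwise you only get the unconditional identity. Second, your converse for \cref{prp:inv:rep:invers} silently reads $\repInvRand \equdist G$ \emph{conditionally} on $\orbSel(X)$. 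That is the reading you need: if one only knows the marginal law of $\repInvRand$ is $\haar$, invariance does not follow (on $\R^2$ with $\SO{2}$, take the angle uniform but the radius a deterministic function of the angle; then $\repInv(X)\sim\haar$ while $P$ is not invariant). With the conditional reading, your ``run it back through \cref{prp:inv:generate}'' step is immediate, and the obstacle you flag---whether the conditional law is Haar on all of $\grp$ rather than on a coset of $\grp_m$---concerns only the forward direction, where your $GH$-convolution identity already resolves it.
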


It follows from the invariance of the Haar measure that \cref{prp:inv:def,prp:inv:orb:avg} imply each other, which is straightforward to verify.
\cref{prp:inv:random:G} is just a reformulation of \cref{prp:inv:orb:avg} in terms of random variables.
These properties hold regardless of the existence of a measurable orbit selector and inversion kernel. 
Proving that \cref{prp:inv:def,prp:inv:generate} imply each other is only slightly more involved.
An accessible proof can be found in \citet[][Theorems 4.3--4.4]{Eaton:1989}; see also \citet[][Theorem 7.15]{Kallenberg:2017}. 
\cref{prp:inv:rep:invers} follows from \cref{prp:inv:generate} and the identity $x \equas \repInvRand\orbSel(x)$.

The following examples illustrate the main ideas.

\begin{figure}[!t]
\centering
\includegraphics[scale=0.45]{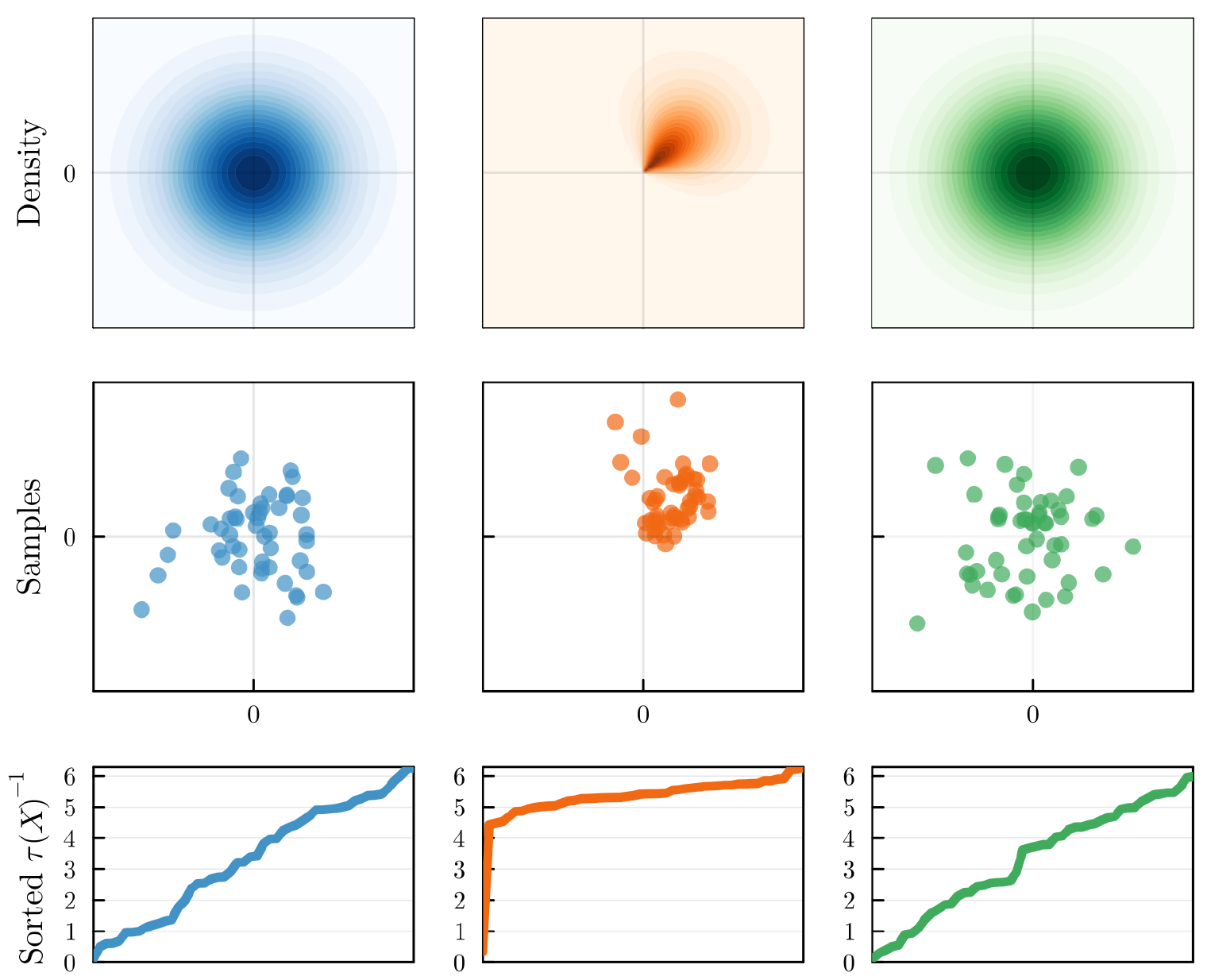}
\caption{First row: Densities for the 2D multivariate Gaussian $\gaussian(\zerovec_2,\I_2)$ (blue), Cartesian representation of the distribution $\chi_2$ $\otimes$ vonMises$(\nicefrac{\pi}{4},4)$ over polar coordinates (orange) and the same distribution averaged over $\SO{2}$ (green).
Second row: 50 samples from the respective distributions.
Third row: Angles in $[0,2\pi]$ needed for a counterclockwise rotation of each sample $X_i$ to the point $(\|X_i\|,0)$, sorted in increasing order.}
\label{fig:densities}
\end{figure}

\begin{example} \label{expl:SOd:invariance}
    Let $\bfX = \R^d$, so that $X$ is a random $d$-dimensional real vector.
    The isotropic multivariate normal distribution $\gaussian(0, \I_d)$ is a well-known example of a distribution that is invariant under the action of $\SO{d}$, the group of $d$-dimensional rotation matrices, where the action is by matrix-vector multiplication.
    \Cref{prp:inv:orb:avg,prp:inv:random:G} in \cref{thm:invariant_characterizations} are straightforward to check.
    Using the standard formula for an affine transformation of a multivariate normal distribution, if $X \sim \gaussian(0, \I_d)$ and $g \in \SO{d}$, then $gX$ has distribution $\gaussian(0, g\I_d g^{\top}) = \gaussian(0, \I_d)$.
    This holds for every $g$, and therefore it also holds for random $G$. 

    The other characterizations are perhaps less well-known.
    The set of orbit representatives (and the induced cross-section) can be chosen to be the points of an axis, e.g., the axis with unit basis vector $\e_1 = [1, 0, \dotsc, 0]^{\top}$.
    Then for each $x \in \R^d$, $\orbSel(x) = \|x\|\e_1$.
    For $d = 2$, the action is free; for $d > 2$, the stabilizer subgroup $\grp_{\orbSel(x)}$ is the set of $d$-dimensional rotations around the axis corresponding to $\e_1$.
    When $X \sim \gaussian(0, \I_d)$, $\|X\|$ has a so-called $\chi_d$-distribution (it is the square root of a $\chi^2_d$-distributed random variable), and $Y \equdist \|X\|\e_1$ satisfies $X \equdist GY$, with $G$ a uniform random rotation from $\SO{d}$.
    The left column of \cref{fig:densities} illustrates this for $d=2$.

    In this case, one may construct a representative inversion function corresponding to $\orbSel(x) = \|x\|\e_1$ by, for example, rotating $\|x\|\e_1$ to $x$ in the 2D subspace spanned by $\nicefrac{x}{\|x\|}$ and $\e_1$.
    That is, let $\tx \define \nicefrac{(x - \lip \e_1, x\rip \e_1)}{\| x - \lip \e_1, x\rip \e_1\|}$, so that $[\e_1,\ \tx]$ is a matrix in $\R^{d\times 2}$ whose columns form an orthonormal basis for the 2D subspace spanned by $\nicefrac{x}{\|x\|}$ and $\e_1$.
    Now let $\theta_x$ be such that $\cos(\theta_x) = \lip \e_1, \nicefrac{x}{\|x\|}\rip$, and $R_{\theta}$ the standard 2D rotation matrix of angle $\theta$,
    \begin{align*}
        R_{\theta} = 
        \begin{bmatrix}
            \cos(\theta) & -\sin(\theta) \\
            \sin(\theta) & \cos(\theta)
        \end{bmatrix} \;.
    \end{align*} 
    Then the $d$-dimensional rotation defined by
    \begin{align} \label{eq:SOd:rep:inv}
        \repInv(x) = \I_d - \e_1 \e_1^{\top} - \tx \tx^{\top} + [\e_1,\ \tx] R_{\theta_x} [\e_1,\ \tx]^{\top}
    \end{align}
    satisfies $\repInv(x)(\|x\|\e_1) = x$.
    A sample from the corresponding inversion kernel is then generated by taking a uniform random $(d-1)$-dimensional rotation $H$ and extending it to a $d$-dimensional rotation $H'$ that fixes the first dimension, so that $\repInv(x)H'$ has distribution $\repInvKern(x,\argdot)$.
    For $d=2$, $\repInvKern(\orbSel(x),\argdot) = \delta_{\id}$, so \cref{prp:inv:rep:invers} indicates that $\repInv(X) \equdist G$, with $G$ a uniform random 2D rotation.
    This is visualized in the bottom-left plot of \cref{fig:densities}.
\end{example}

\begin{example} \label{expl:SOd:noninvariance}
    As a non-invariant example, consider $\bfX = \R^d$ and $\grp = \SO{d}$ as in \cref{expl:SOd:invariance}, but now generate $X' = G'Y'$, with $Y' = Z\e_1$, $Z\sim \chi_d$, and $G'$ sampled from the von Mises--Fisher distribution $\vMF(\xi,\kappa)$, a non-invariant distribution on the $(d-1)$-sphere, which is isomorphic to $\SO{d}$.
    In this case, although the distribution of $Y'$ is the same as $Y = \|X\|\e_1$ above, rotations toward $\xi$ have higher probability. This is shown in the middle column of \cref{fig:densities}. 
    Averaging the $\vMF$ distribution over $\SO{d}$ results in the uniform distribution on $\SO{d}$, restoring invariance, as shown in the right column of \cref{fig:densities}. 
    This remains true even when the distribution of $G'$ depends on $Y'$, such as if $G'$ were sampled from $\vMF(\xi(y'),\kappa(y'))$. 
\end{example}

\begin{example} \label{expl:perm:invariance}
    As in the previous examples, let $\bfX = \R^d$.
    Now consider $\grp = \Sym{d}$, the symmetric group on $d$ elements acting on a vector $x \in \bfX$ by permutation.
    The distribution $P$ is said to be \emph{finitely exchangeable} if $gX \equdist X$ for every permutation $g \in \Sym{d}$.
    Here, the orbit representative is the vector of order statistics, $X_{(d)}$, which puts the elements of $X$ in increasing order.
    (We assume for convenience that ties occur with probability zero.)
    An exchangeable random variable can be generated by first sampling a random order statistic and then applying a uniform random permutation.
    The representative inversion $\repInv(X)$ is the permutation that transforms the order statistics back into $X$. 

    As an example, consider again the multivariate normal distribution, $\gaussian(0, \Sigma)$. 
    In order for this distribution to be exchangeable, the covariance matrix $\Sigma$ must satisfy \citep{Aldous:1983}
    \begin{align*}
        \Sigma = (1- \rho)\sigma^2\I_d + \rho\sigma^2\ones_d \;, \quad \sigma^2 > 0, \ -\frac{1}{d-1}\leq \rho \leq 1 \;,
    \end{align*}
    where $\ones_d$ is the $d\times d$ matrix of all ones.
    If, for example, $\Sigma_{1,d} = \Sigma_{d,1} = \rho^2\sigma^2$ for $|\rho|\neq 1$, then the resulting distribution would not be exchangeable. 
\end{example}

\subsection{Non-parametric test statistics} \label{sec:test:stat:inv}

The practical advantages of the equivalent characterizations of $\grp$-invariance will become clear in the following sections: rather than verifying potentially (uncountably) many equalities of the form $P = g_* P$, tests for invariance can be performed via a single comparison between a sample and random transformations of it. 
The distributional identities in \cref{thm:invariant_characterizations} suggest natural non-parametric test statistics in the form of divergences or metrics on the space of probability distributions on $\bfX$, so that any consistent estimator of such a statistic can be used to construct a consistent test of the desired asymptotic level. 

To that end, let $\metric \colon \probs \times \probs \to [0,\infty)$ be a metric on $\probs$.
(The following also holds if $\metric$ is a divergence or any other continuous function that separates points of $\probs$.)
The aim will be to recover $\metric(P,\Pavg)$ in the limit of infinite sample size.
Denote by $X_{1:n} \define (X_1,\dotsc,X_n)$ an i.i.d.\ sample from an unknown distribution $P$.
The samples can be used to estimate $P$ by the empirical measure
\begin{align*} %
    \empMeas{n}(A) \define \frac{1}{n}\sum_{i=1}^n\dirac_{X_i}(A) \;, \quad A \in \bfS_{\bfX} \;.
\end{align*}
Similarly, using i.i.d.\ samples $G_{i,j} \sim \haar$, an estimator of $\Pavg$ is the Monte Carlo averaged\footnote{We assume henceforth that $\grp$ is either uncountable or large and discrete, so that enumeration of the group elements is impossible or infeasible. If $\grp$ is discrete and small enough to feasibly enumerate, then Monte Carlo averages over $\grp$ can be replaced by exact averages.} empirical measure %
\begin{align*} %
    \empMeasAppAvg{n}{m}(A) \define \frac{1}{mn}\sum_{i=1}^n\sum_{j=1}^m \dirac_{X_i}(G_{i,j}^{-1} A) = \frac{1}{mn}\sum_{i=1}^n\sum_{j=1}^m \dirac_{G_{i,j} X_i}(A) \;, \quad A \in \bfS_{\bfX} \;.
\end{align*}
A natural test statistic is the distance between the two estimates,
\begin{align} \label{eq:metric:test:stat}
    \testStat_{n,m}(X_{1:n}) \define \metric(\empMeas{n}, \empMeasAppAvg{n}{m}) \;,
\end{align}
which converges almost surely to $\metric(P,\Pavg)$ as $n\to\infty$, for any $m \geq 1$. 

In practice, using $m=1$ amounts to a two-sample test between $(X_1,\dotsc,X_n)$ and $(G_1 X_1,\dotsc, G_n X_n)$.
For $m > 1$, it can be thought of as an aggregated $(m+1)$-sample test, where $m$ samples $(G_{1,j}X_1,\dotsc,G_{n,j} X_n)$ are known to be i.i.d.\ and therefore aggregated. 

For a sequence of metric-based statistics $(\testStat_{n,m})_{n\geq 1}$ with fixed $\metric$ and $m \geq 1$, and critical values $(c_n)_{n\geq 1}$, define the corresponding sequence of critical functions, or tests,
\begin{align} \label{eq:metric:tests}
    \critFun_{n,m}(X_{1:n}) = \indi\{ \testStat_{n,m}(X_{1:n}) > c_n \} \;.
\end{align}
The power function of a test based on $\critFun_{n,m}$ is
\begin{align*}
    \power_n(P) = \E_{P\otimes\haar}\!\left[\critFun_{n,m}(X_{1:n})\right] \;,
\end{align*}
where the expectation with respect to $P\otimes\haar$ is taken over $X_{1:n}$ and the random transformations $G_{i,j}\simiid\haar$.

Recall that $\invProbs$ is the set of $\grp$-invariant probability measures on $\bfX$ and $\ninvProbs$ the set of non-invariant probability measures so that $\probs = \invProbs \cup \ninvProbs$, and that the hypotheses are $H_0 \colon P \in \invProbs$ versus $H_1 \colon P \in \ninvProbs$. 

\begin{theorem} \label{thm:asymptotic:size:power}
    Fix $m \geq 1$ and a metric or divergence $\metric$ on $\probs$.
    Let a sequence of tests $(\critFun_{n,m})_{n\geq 1}$ (as in \eqref{eq:metric:tests}) be such that the critical values $(c_n)_{n\geq 1}$ satisfy $\lim_{n\to\infty} c_n = c \geq 0$.
    Then $(\critFun_{n,m})_{n\geq 1}$ is pointwise asymptotically level $\alpha$ for any $\alpha \in [0,1]$.
    That is, for any $c \geq 0$, for any $P \in \invProbs$, 
    \begin{align} \label{eq:asymp:level}
        \limsup_{n\to\infty} \E_{P\otimes\haar}\!\left[\critFun_{n,m}(X_{1:n})\right] \leq \alpha\;, \quad \alpha \in [0,1]\;.
    \end{align}
    If $c = 0$, then $(\critFun_{n,m})_{n\geq 1}$ is also pointwise consistent in power: for any $P \in \ninvProbs$,
    \begin{align} \label{eq:asymp:power}
        \lim_{n\to\infty} \!\E_{P\otimes\haar}\left[\critFun_{n,m}(X_{1:n})\right] = 1 \;,
    \end{align}
    and therefore the sequence of tests is asymptotically unbiased. 
\end{theorem}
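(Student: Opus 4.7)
The plan is to show that, for any fixed $m \geq 1$, the test statistic $\testStat_{n,m}(X_{1:n}) = \metric(\empMeas{n}, \empMeasAppAvg{n}{m})$ converges almost surely to $\metric(P, \Pavg)$ as $n\to\infty$, and then invoke \cref{prp:inv:orb:avg}, which identifies $\grp$-invariance of $P$ with the equality $P = \Pavg$, so that $\metric(P, \Pavg) = 0$ precisely on $\nullHyp$. The conclusions will then follow by comparing this limit to the threshold sequence $c_n$. Implicit throughout will be the assumption that $\metric$ is continuous with respect to the relevant mode of convergence of empirical measures on $\bfX$; this is automatic for metrics that metrize weak convergence and, in particular, for the MMD-based instantiations in \cref{sec:kernel:inv:coo}.

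First I would establish consistency of both empirical measures. The untransformed empirical measure $\empMeas{n}$ converges weakly almost surely to $P$ by the abstract Glivenko--Cantelli theorem on the standard Borel space $\bfX$. For the averaged empirical measure, the key observation is that the $nm$ random elements $\{G_{i,j} X_i\}$ are i.i.d.\ with common marginal distribution $\Pavg$, since for any $A \in \bfS_{\bfX}$,
\begin{align*}
    \Pr(G_{i,j} X_i \in A) = \int_{\grp} \int_{\bfX} \indi\{gx \in A\} \; P(dx)\haar(dg) = \int_{\grp} P(g^{-1}A)\haar(dg) = \Pavg(A) \;.
\end{align*}
The same Glivenko--Cantelli argument thus yields $\empMeasAppAvg{n}{m} \to \Pavg$ weakly almost surely. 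Combining these with continuity of $\metric$ and the triangle inequality
\begin{align*}
    \bigl|\testStat_{n,m}(X_{1:n}) - \metric(P, \Pavg)\bigr| \leq \metric(\empMeas{n}, P) + \metric(\empMeasAppAvg{n}{m}, \Pavg)
\end{align*}
delivers the desired convergence $\testStat_{n,m} \to \metric(P, \Pavg)$ almost surely.

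For the size claim, under $\nullHyp$ we have $\metric(P,\Pavg) = 0$ by \cref{prp:inv:orb:avg}, and so $\testStat_{n,m} \to 0$ almost surely; when $c_n \to c > 0$, the event $\{\testStat_{n,m} > c_n\}$ is eventually contained in $\{\testStat_{n,m} > c/2\}$, whose probability vanishes, yielding $\limsup \E[\critFun_{n,m}] = 0 \leq \alpha$ for any $\alpha \in [0,1]$. For the power claim, under $\altHyp$ the metric (or point-separating divergence) property gives $\metric(P,\Pavg) > 0$; then with $c_n \to 0$, eventually $c_n < \metric(P,\Pavg)/2$, so $\Pr(\testStat_{n,m} > c_n) \to 1$ and $\E[\critFun_{n,m}] \to 1$. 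The main obstacle I expect is the continuity of $\metric$ with respect to the mode of convergence of the empirical measures; this is unproblematic for metrics that metrize weak convergence, but would need further justification for general divergences. A secondary delicacy is the boundary case $c = 0$ in the size statement, where $c_n$ must decay no faster than the null-distribution noise in $\testStat_{n,m}$; this motivates the data-driven calibration in the conditional Monte Carlo test of \cref{sec:inv:mc}.
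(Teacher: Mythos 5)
Your proposal follows essentially the same route as the paper's proof: establish almost-sure convergence of $\testStat_{n,m}(X_{1:n})$ to $\metric(P,\Pavg)$ by a law-of-large-numbers argument, invoke the characterization $P\in\invProbs \iff P=\Pavg$ from \cref{thm:invariant_characterizations}, and then compare the limit with the critical values $c_n$; the paper simply states the convergence via the strong law of large numbers and continuity of $\metric$, whereas you spell it out through weak convergence of the two empirical measures and the triangle inequality. One detail in your justification is literally false as written: for $m>1$ the $nm$ points $\{G_{i,j}X_i\}$ are \emph{not} i.i.d., since $G_{i,1}X_i,\dotsc,G_{i,m}X_i$ share the same $X_i$; they are only identically distributed with common law $\Pavg$. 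The conclusion survives with a one-line repair: for each fixed $j$ the row $(G_{i,j}X_i)_{i\leq n}$ \emph{is} i.i.d.\ from $\Pavg$ (the pairs $(X_i,G_{i,j})$ being i.i.d.\ from $P\otimes\haar$), so each of the $m$ row-wise empirical measures converges weakly a.s.\ to $\Pavg$, and $\empMeasAppAvg{n}{m}$ is their average. Finally, the boundary case $c=0$ of the level claim that you flag as a delicacy is treated no more carefully in the paper's own proof, which asserts $\lim_n \E_{P\otimes\haar}[\indi\{\testStat_{n,m}>c_n\}]=0$ for any $c\geq 0$ directly from the a.s.\ convergence of the statistic to zero; your observation that this requires $c_n$ not to decay faster than the null fluctuations of $\testStat_{n,m}$ is a genuine subtlety, and the practical resolution is indeed the exact conditional Monte Carlo calibration of \cref{sec:inv:mc}.
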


The proof can be found in \cref{apx:proofs:size:power}. 
The theorem can be modified in a number of ways.
It remains valid if the metric $\metric$ is replaced by an estimator of the metric, $\hat{\metric}$, as long as $\hat{\metric}(\empMeas{n}, \empMeasAppAvg{n}{m})$ converges in probability to $\metric(P, \Pavg)$ as $n\to\infty$. 
The condition that $\metric$ distinguishes elements of $\probs$ (which is satisfied when $\metric$ is a metric or divergence) can be relaxed without changing the asymptotic level of the test.
All that is required is that $\metric(P,P) = 0$.
However, the power would be reduced if $\metric(P, \Pavg)= 0$ for $P \neq \Pavg$.

As an alternative, one may compare the distribution of inversions $\repInvRand_i \mid X_i \sim \repInvKern(X_i,\argdot)$ to Haar measure $\haar$ using a metric $\metric$ on $\probs[\grp]$, in which case an analogous version of \cref{thm:asymptotic:size:power} holds by \cref{thm:invariant_characterizations}, \cref{prp:inv:rep:invers}.

\subsection{Exact conditional Monte Carlo tests of invariance} \label{sec:inv:mc}

\Cref{thm:asymptotic:size:power} shows that tests based on a metric on $\probs$ have desirable large-sample properties, but the question of how to set critical values for finite $n$ remains unanswered.
In this section, we develop a conditional Monte Carlo method for doing so that results in a test with exact finite-sample size.
We do not address the more difficult theoretical question of power in this setting, though our experiments in \cref{sec:experiments} indicate that the methods do perform well in that respect. 

Our Monte Carlo procedure is conceptually similar to a resampling approximation of a two-sample permutation test for equality in distribution.
In the latter setting, under the null hypothesis that the two samples have the same distribution, a sufficient statistic for $\nullHyp$ is the empirical measure of the pooled sample; the conditional distribution of the pooled sample given the empirical measure is induced by the uniform distribution over permutations of the pooled sample.
Rather than computing the test statistic under every permutation, conditional $p$-values are estimated by sampling uniformly from the set of permutations. 
The conditional $p$-values are valid unconditionally since the conditional $p$-values are valid for almost every realization of the empirical measure under $\nullHyp$. 

Similarly, in the case of $\grp$-invariance, one may also condition on a sufficient statistic for $\invProbs$.
Here, any maximal invariant is a sufficient statistic \citep{farrell:1962,dawid:1985,BloemReddy:2020}.
For example, \cref{prp:inv:generate} of \cref{thm:invariant_characterizations} indicates that given $\orbSel(X)$, which is a maximal invariant, the conditional distribution of $X$ is that induced by $\haar$ on the orbit $\orbit(X)$.
That is, conditionally on $\orbSel(X)$, $X \equdist G \orbSel(X)$ with $G \sim \haar$.
\emph{This holds for every invariant probability measure $P \in \invProbs$.} 

Thus, the Monte Carlo testing method we propose in \cref{alg:mc:pval} generates pseudo-samples by first sampling $G^{(b)}_{i} \simiid \haar$ and then applying them to $\orbSel(X)_{1:n}$, so that
\begin{align} \label{eq:mc:test:sample}
    (G^{(b)}_1 \orbSel(X_1), \dotsc, G^{(b)}_n \orbSel(X_n)) \sim P(\argdot \mid \orbSel(X)_{1:n}) \;, \quad b = 1,\dotsc,B \;.
\end{align}
Because $P(\argdot \mid \orbSel(X)_{1:n})$ is the same for every $P \in \invProbs$, these samples can be used to estimate conditional quantities that are valid uniformly across the null hypothesis class $\invProbs$. 

If $\grp$ is discrete and relatively small, conditional expectations with respect to the right-hand side of \eqref{eq:mc:test:sample} can be computed exactly; otherwise, we can use Monte Carlo estimates.
In particular, given a sample $X_{1:n}$, we can estimate a conditional $p$-value by Monte Carlo sampling as in \cref{alg:mc:pval}.\footnote{Due to the invariance of $\haar$, $GX \equdist G\orbSel(X)$ (even conditioned on $\orbSel(X)$), so in practice we can replace $\orbSel(X_i)$ in \eqref{eq:mc:test:sample} with $X_i$.} 
\begin{algorithm}[h!]
\caption{Monte Carlo $p$-value} \label{alg:mc:pval}
\begin{algorithmic}[1]
\Procedure{McTest}{$X_{1:n},m,B,D$}
\State Sample $G_{j,1},\dotsc,G_{j,n}\simiid \haar$, for $j = 1,\dotsc,m$
\State Using $(G_{j,1},\dotsc,G_{j,n})_{j\leq m}$, compute $T_{n,m}(X_{1:n})$ as in \eqref{eq:metric:test:stat} 
\For{$b$ in $1,\dotsc,B$}
\State Sample $G^{(b)}_1,\dotsc,G^{(b)}_n \simiid \haar$
\State Set $X^{(b)}_{1:n} \define (G^{(b)}_1 X_1,\dotsc, G^{(b)}_n X_n)$ %
\State (Re)using $(G_{j,1},\dotsc,G_{j,n})_{j=1}^m$, compute $T_{n,m}(X^{(b)}_{1:n})$
\EndFor
\State \textbf{return} $p$-value $p_B$ computed as 
\begin{align} \label{eq:mc:test:p:value}
    p_B \define \frac{1 + \sum_{b=1}^B \indi\{T_{n,m}(X^{(b)}_{1:n}) \geq T_{n,m}(X_{1:n})\} }{1 + B}
\end{align}
\EndProcedure
\end{algorithmic}
\end{algorithm}
 
As we formalize below, this procedure produces a valid $p$-value for any $B \geq 1$.
The estimate $p_B$ can then be used in a critical function $\indi\{ p_B \leq \alpha \}$ 
and the resulting test has level $\alpha$. A special case of the following result, for finite $\grp$, appeared in \cite{Hemerik2018} in the context of invariance-based randomization tests. To state it, for $x \in \R_+$, let $\lfloor x \rfloor$ denote the ``floor'' function applied to $x$, i.e., the largest integer that is less than or equal to $x$.
Furthermore, let $X^{(0)}_{1:n} \define X_{1:n}$. 

\begin{theorem} \label{thm:mc:test:p}
    Assume that $\E_{P\otimes\haar}[\indi\{T_{n,m}(X^{(b)}_{1:n}) = T_{n,m}(X^{(b')}_{1:n})\}]=0$ for $b \neq b'$. 
    For any fixed $B \in \mathbb{N}$, $p_B$ obtained as in \cref{alg:mc:pval} is a valid $p$-value in the sense that for any $\alpha \in [0,1]$, if $P\in \invProbs$, then for any $(g_{i,j})_{i\leq n, j\leq m} \in \grp^{n\times m}$,
    \begin{align} \label{eq:mc:test:p:valid}
        \E_{P\otimes\haar}\left[\indi\{p_B \leq \alpha\} \mid (G_{i,j})_{i\leq n, j\leq m} = (g_{i,j})_{i\leq n, j\leq m}\right] = \frac{\lfloor \alpha (B+1) \rfloor}{B+1} \leq \alpha \;.
    \end{align}
    The same also holds unconditionally for random $(G^{(b)}_{i,j})_{i\leq n, j\leq m}$ sampled independently of $X_{1:n}$ such that they are exchangeable over the index $b=1,\dotsc,B$, which includes using the same random sample $(G_{i,j})_{i\leq n, j\leq m}$ for each $b$. 
\end{theorem}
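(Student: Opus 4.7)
The central idea is to show that, under $\nullHyp$, conditioning on a maximal invariant statistic (which is sufficient for $\invProbs$) renders the observed sample and the $B$ Monte Carlo pseudo-samples exchangeable, so that a standard rank argument delivers the claimed bound on the $p$-value.

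First, I would invoke \cref{prp:inv:generate} of \cref{thm:invariant_characterizations}: under $\nullHyp$, there exists $G^*_i \sim \haar$ conditionally independent of $\orbSel(X_i)$ such that $X_i \equdist G^*_i\,\orbSel(X_i)$. Setting $X^{(0)}_{1:n} \define X_{1:n}$ and $X^{(b)}_i \define G^{(b)}_i X_i$ for $b=1,\dotsc,B$, this yields the joint representation
\[
    X^{(b)}_i \equdist H^{(b)}_i\,\orbSel(X_i), \qquad H^{(0)}_i \define G^*_i, \quad H^{(b)}_i \define G^{(b)}_i G^*_i \text{ for } b\geq 1.
\]
The key technical step is to show that $(H^{(0)}_i, H^{(1)}_i, \dotsc, H^{(B)}_i)$ is exchangeable in $b$. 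Because $\grp$ is compact, $\haar$ is both left- and right-invariant, and a short Fubini computation gives $\Pr(G^*_i \in A,\, G^{(b)}_i G^*_i \in A') = \haar(A)\haar(A')$; iterating, the $H^{(b)}_i$ are in fact i.i.d.\ Haar. Consequently, conditional on $\orbSel(X)_{1:n}$, the collection $(X^{(b)}_{1:n})_{b=0,\dotsc,B}$ is conditionally i.i.d., hence exchangeable.

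Next, $T_{n,m}(X^{(b)}_{1:n})$ is a measurable function of $X^{(b)}_{1:n}$ and the auxiliary group elements $(G_{i,j})_{i\leq n, j\leq m}$, which are reused across $b$. Conditioning further on $(G_{i,j})=(g_{i,j})$ (independent of $X_{1:n}$ and of the resampling randomness), the real-valued sequence $T_b \define T_{n,m}(X^{(b)}_{1:n})$, $b=0,1,\dotsc,B$, inherits exchangeability. Under the no-ties assumption, the rank $R$ of $T_0$ in $(T_0,\dotsc,T_B)$ is therefore uniform on $\{1,\dotsc,B+1\}$. Writing the $p$-value as $p_B = (B+2-R)/(B+1)$, a direct computation using $\lceil B+2-\alpha(B+1)\rceil = B+2-\lfloor \alpha(B+1)\rfloor$ gives
\[
    \E_{P\otimes\haar}\!\bigl[\indi\{p_B \leq \alpha\} \mid \orbSel(X)_{1:n},\, (G_{i,j})=(g_{i,j})\bigr] = \Pr\!\bigl(R \geq B+2-\alpha(B+1)\bigr) = \frac{\lfloor \alpha(B+1)\rfloor}{B+1} \leq \alpha.
\]
The value is free of both $\orbSel(X)_{1:n}$ and $(g_{i,j})$, so integrating over $\orbSel(X)_{1:n}$ yields \eqref{eq:mc:test:p:valid}.

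Finally, for the unconditional version in which the $(G^{(b)}_{i,j})$ are themselves random but exchangeable over $b$ (in particular, identically reused as in \cref{alg:mc:pval}) and independent of $X_{1:n}$, one simply includes them in the exchangeable family: the pairs $(X^{(b)}_{1:n}, (G^{(b)}_{i,j})_{i,j})$ remain exchangeable conditional on $\orbSel(X)_{1:n}$, and the same rank argument applies. The main obstacle is the exchangeability claim in Step~2, i.e., verifying jointly that the Haar-shifted products $G^{(b)}_i G^*_i$ are exchangeable with $G^*_i$; this is where compactness of $\grp$ (to guarantee right-invariance of $\haar$) enters essentially, and everything else reduces to standard rank-based $p$-value calculus.
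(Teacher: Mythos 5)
Your proof is correct and follows essentially the same route as the paper's: condition on the maximal invariant $\orbSel(X)_{1:n}$ (and the reused $(g_{i,j})$), show the $B+1$ samples and hence the statistics $(T_0,\dotsc,T_B)$ are conditionally i.i.d./exchangeable, and conclude via a rank argument, extending to the unconditional case by including the group elements in the exchangeable family. The only cosmetic differences are that you derive the conditional i.i.d.\ property explicitly from \cref{prp:inv:generate} and the right-invariance of Haar measure, where the paper invokes sufficiency of $\orbSel(X)$ for $\invProbs$, and you carry out the uniform-rank computation directly where the paper cites \cref{lem:dufour} (Dufour's Proposition~2.2).
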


The proof can be found in \cref{apx:proofs:p}. 
As noted by \citet{Dufour:2019:handbook}, if $\alpha (B+1)$ is an integer, then the inequality in \eqref{eq:mc:test:p:valid} becomes equality, and the critical region for the test, $\{p_B \leq \alpha\}$, has exact size $\alpha$.
We validate \cref{thm:mc:test:p} empirically in \cref{sec:experiments}, finding that over simulated datasets, the distribution of $p_B$ is approximately uniform under the null hypothesis and highly non-uniform under various alternatives. 

\cref{thm:mc:test:p} can be adapted in a few ways depending on the setting.
Firstly, if the probability of ties is non-zero (i.e., if $T_{n,m}$ is supported on a discrete set), then a modified version with randomized tie-breaking yields valid $p$-values, as described by \citet{Dufour:2006,Hemerik2018}.

Secondly, reusing $(G_{j,1},\dotsc,G_{j,n})_{j=1}^m$ is not strictly necessary in that \cref{thm:mc:test:p} still holds if a new set $(G^{(b)}_{j,1},\dotsc,G^{(b)}_{j,n})_{j=1}^m$ is sampled for each $b$. 
However, reusing the group elements in each iteration $b$ amounts to conditioning the $p$-value on them, as shown in \eqref{eq:mc:test:p:valid}. 
This conditioning reduces computation and potentially reduces estimation variance in the procedure.
The trade-offs between the two methods would appear primarily in the power of the test.
Furthermore, the $p$-value remains valid even if $(G_{j,1},\dotsc,G_{j,n})_{j=1}^m$ are not sampled from $\haar$. 
However, the power may suffer because averaging $P$ with respect to a probability measure other than $\haar$ does not result in an invariant distribution.
We find that reusing $(G_{j,1},\dotsc,G_{j,n})_{j=1}^m$ works well in our experiments in \cref{sec:experiments}. 

Thirdly, instead of reusing $X_{1:n}$ in each sampling iteration, we may combine the procedure with standard bootstrap resampling (sampling $X_{1:n}^{*(b)}$ i.i.d.\ with replacement from $X_{1:n}$) to obtain $X^{(b)}_{1:n} \define (G^{(b)}_1 X_1^{*(b)},\dotsc, G^{(b)}_n X_n^{*(b)})$.
We leave this possibility to future work. 

Finally, as with \cref{thm:asymptotic:size:power}, a version of \cref{thm:mc:test:p} holds for a suitably modified version of the Monte Carlo test that uses an observed sample of representative inversions, $(\repInvRand_i)_{i=1}^n$, where $\repInvRand_i \mid X_i \sim \repInvKern(X_i,\argdot)$. 
In that case, $X_i$ is replaced in the sampling procedure by $\repInvRand_i$, and the null hypothesis sample iterates $(G^{(b)}_1\repInvRand_1,\dotsc,G^{(b)}_n\repInvRand_n)$ would be compared to $(G_1,\dotsc,G_n)$ sampled i.i.d.\ from $\haar$, which is the unique invariant probability measure on $\grp$.

\subsubsection{Power estimates} \label{sec:power:estimates}

Given a sample $X_{1:n}$ and a fixed $\alpha \in [0,1]$, we may obtain an estimate of the power function at $\empMeas{n}$ of the test based on $\indi\{ p_B \leq \alpha \}$. 
Conditioned on $X_{1:n}$ and $G_{1:m,1:n} \define (G_{j,1},\dotsc,G_{j,n})_{j=1}^m$, the conditional power function at $\empMeas{n}$ is
\begin{align*}
    \power_{n,m}(\empMeas{n}, G_{1:m,1:n}) & = \E_{\haar}\!\left[\indi\{ p_B \leq \alpha \} \mid X_{1:n}, G_{1:m,1:n}\right]  \\
        & = \E_{\haar}\left[ \left.\sum_{b=1}^B \indi\left\{T_{n,m}(X^{(b)}_{1:n}) \geq T_{n,m}(X_{1:n}) \right\} \leq \alpha(B+1) - 1 \;\right| X_{1:n}, G_{1:m,1:n} \right] \;,
\end{align*}
where the expectation is taken over the conditional Monte Carlo samples. 
Because those samples are conditionally i.i.d., if $p_0 \define \E_{\haar}[\indi\{T_{n,m}(X^{(b)}_{1:n}) \leq T_{n,m}(X_{1:n}) \} \mid X_{1:n}, G_{1:m,1:n}]$, then
\begin{align} \label{eq:power:estimate}
    \power_{n,m}(\empMeas{n}, G_{1:m,1:n}) = \sum_{\ell = 0}^{\lfloor \alpha(B+1)-1\rfloor} \binom{B}{\ell} p_0^{\ell} (1-p_0)^{B-\ell} \;.
\end{align}
Estimates of $p_0$ can be obtained from \cref{alg:mc:pval} as $\hat{p}_0 = \nicefrac{(p_B(B+1) - 1)}{B}$. 

Alternatively, a new set of transformations $G^{(b)}_{1:m,1:n}$ can be sampled for each $b$.
The resulting Monte Carlo samples would still be conditionally i.i.d., so the resulting estimate of $p_0$ obtained in the same way as described above would be conditioned only on $X_{1:n}$.
An unconditional estimate of the power could be obtained using the usual bootstrap resampling methods for $X_{1:n}$, as shown in \cref{alg:power}.
We demonstrate this estimation procedure in some of our experiments in \cref{sec:experiments}.

\begin{algorithm}[h!]
\caption{Power estimate} \label{alg:power}
\begin{algorithmic}[1]
\Procedure{PowerEstimate}{$X_{1:n},m,C,B,D$}
\For{$c$ in $1,\dotsc,C$}
\State Sample $X^{(c)}_{1:n}$ i.i.d.\ with replacement from $\empMeas{n}$
\State Obtain $p^{(c)}_B$ from procedure \textsc{McTest} (\cref{alg:mc:pval}), either reusing $G_{1:m,1:n}$ or resampling for each $b$
\State Set $\hat{p}_0^{(c)} = \frac{p^{(c)}_B(B+1)-1}{B}$ and compute $\power_{n,m}^{(c)}$ as in \eqref{eq:power:estimate}
\EndFor
\State Set $\hat{\beta}_{n,m} \define \frac{1}{C}\sum_{c=1}^C \power_{n,m}^{(c)}$.
\EndProcedure
\end{algorithmic}
\end{algorithm}

\section{Kernel hypothesis tests for invariance} \label{sec:kernel:inv:coo}

As a demonstration of the abstract framework developed in \cref{sec:symmetries}, we develop the details of a particular instantiation using the maximum mean discrepancy (MMD) as the metric on $\probs$. 

Let $\hilbert$ be a reproducing kernel Hilbert space (RKHS) of functions $f\colon\bfX\rightarrow\R$ on which the evaluation functional $\funeval_x(f)=f(x)$ is continuous for all $x\in\bfX$.
For any $x\in\bfX$, the Riesz representation theorem says there exists a unique element $k_x\in\hilbert$ such that $f(x)=\lip f,k_x\rip_\hilbert$ for all $f\in\hilbert$, where $\lip\argdot,\argdot\rip_\hilbert$ is the inner product on $\hilbert$.
The reproducing kernel $k\colon\bfX\times\bfX\rightarrow\R$ of $\hilbert$ is a symmetric positive definite kernel such that $k_x(\argdot)=k(x,\argdot)$ and $k(x,x')=\lip k_x,k_{x'}\rip_\hilbert$.
The element $k_x$ can be viewed as a map $k_x\colon\bfX\rightarrow\hilbert$ that takes $x$ into a potentially infinite-dimensional feature space.
Kernel evaluations $k(x,x')=\lip k_x,k_{x'}\rip_\hilbert$ can then be interpreted as computing (possibly implicit) inner products on the mapped feature space. 
See \cite{steinwart:2008:svm} for a thorough treatment. 

The \textit{kernel mean embedding} (KME) of a distribution $P$ on $\bfX$ is defined as $\kme{P}\define\int_\bfX k_xP(dx)$ and is the unique element of $\hilbert$ such that $\E_P[f(X)]=\lip f,\kme{P}\rip_\hilbert$, for all $f\in\hilbert$ \citep{Muandet:2017}.
It follows that $\lip \kme{P_1},\kme{P_2}\rip = \int k(x,x') P_1(dx) P_2(dx')$.
We assume that the kernel $k$ is \textit{characteristic} so that the map $P\mapsto\kme{P}$ from $\probs$ into $\hilbert$ is injective, which leads to a unique embedding for each probability measure $P$ \citep{Sriperumbudur:2010}. 
If a non-characteristic kernel were used instead, the test statistic developed below would be unable to separate some distinct elements of $\probs$; as discussed in \cref{sec:test:stat:inv}, this would not affect the level of the test, but would reduce the power under non-invariant alternatives $P$ with $\kme{\Pavg} = \kme{P}$. 

Kernel-based hypothesis tests compare two distributions through their KMEs under a metric defined on functions in $\hilbert$, and can have an advantage over classical tests in that the same testing framework can be used for any type of data (e.g., vectors, matrices, images, etc.) as long as a kernel is available.
\citet{Gretton:2012} introduced a two-sample kernel hypothesis test based on the MMD.
The MMD is a particular integral probability metric defined as
\begin{align*}
    \MMD(P_1,P_2) \define \sup_{\|f\|_{\hilbert} \leq 1} \E_{P_1}\left[f(X)\right] - \E_{P_2}\left[f(X)\right] = \left\|\kme{P_1} - \kme{P_2}\right\|_\hilbert \;.
\end{align*}
We focus our attention on the \emph{squared} MMD, which is more practical to estimate due to the identity
\begin{align*}
     \MMD^2(P_1,P_2) & = \lip\kme{P_1},\kme{P_1}\rip_\hilbert + \lip\kme{P_2},\kme{P_2}\rip_\hilbert - 2\lip\kme{P_1},\kme{P_2}\rip_\hilbert \\
        & = \int k(x,x') P_1(dx) P_1(dx') + \int k(x,x') P_2(dx) P_2(dx') - 2 \int k(x,x') P_1(dx) P_2(dx') \;.
\end{align*}
This can be estimated from two samples of data, $X_{1:n_1}\simiid P_1$ and $Y_{1:n_2}\simiid P_2$, using the U-statistic
\begin{align*}
\widehat{\MMD}^2(\hat{P}_{1,n_1}, \hat{P}_{2,n_2}) = \frac{1}{n_1(n_1-1)}\sum_{i\neq j}k(X_i,X_j) + \frac{1}{n_2(n_2-1)}\sum_{i\neq j}k(Y_i,Y_j) - \frac{2}{n_1n_2}\sum_{i,j}k(X_i,Y_j) \;.
\end{align*}
The U-statistic is just an unbiased version of the empirical estimator, which is a V-statistic, 
\begin{align*}
\widehat{\MMD}^2_\text{V}(\hat{P}_{1,n_1}, \hat{P}_{2,n_2}) = \frac{1}{n_1^2}\sum_{i=1}^{n_1}\sum_{j=1}^{n_1}k(X_i,X_j) + \frac{1}{n_2^2}\sum_{i=1}^{n_2}\sum_{j=1}^{n_2}k(Y_i,Y_j) - \frac{2}{n_1n_2}\sum_{i=1}^{n_1}\sum_{j=1}^{n_2}k(X_i,Y_j) \;.
\end{align*}
For convenience, we refer to the MMD$^2$ as just the MMD, and similarly for related estimators.

In a standard two-sample test based on the MMD, the rejection region is estimated using a standard bootstrap procedure that involves pooling the two samples and repeatedly subsampling two pseudo-samples from the pool.
A test of level $\sig$ rejects $\nullHyp$ if the observed MMD estimate is larger than $(1-\sig)\times100\%$ of the bootstrapped values.
The two-sample MMD test is summarized in \cref{alg:mmd}. 

\begin{algorithm}[h!]
\caption{Two-sample MMD test} \label{alg:mmd}
\begin{algorithmic}[1]
\Procedure{2sMmdTest}{$X_{1:n_1},Y_{1:n_2},\sig$}
\State compute $\widehat{\MMD}(X_{1:n_1},Y_{1:n_2})$
\For{$b$ in $1,\dotsc,B$}
\State sample $X_{1:n_1}^{(b)}$, $Y_{1:n_2}^{(b)}$ from $X_{1:n_1}\cup Y_{1:n_2}$ with replacement
\State compute $\widehat{\MMD}(X_{1:n_1}^{(b)},Y_{1:n_2}^{(b)})$
\EndFor
\State compute $p$-value $p_B$ as in \cref{eq:mc:test:p:value} with $\widehat{\MMD}$ in place of $T_{n,m}$
\State \textbf{return} 1 if $p_B \leq \sig$ and 0 otherwise
\EndProcedure
\end{algorithmic}
\end{algorithm}

Computing $\widehat{\MMD}$ has a computational cost of $\bigO((n_1+n_2)^2)$, and the need to estimate the reference distribution via resampling means that the computation does not scale well with sample size.
The kernel literature has proposed cheaper approximations to the MMD, such as those based on random Fourier features and Nystr\"{o}m approximation \citep[e.g.,][]{Rahimi:2007,Raj:2017,Chatalic:2022}.
While these methods do not alleviate the need for the bootstrap, they make the computation closer to linear complexity, which consequently makes the test more feasible in the large sample regime.
In the following sections, we discuss how the two-sample MMD test can be repurposed as a test for invariance.
We only discuss the standard MMD formulation, but all MMD-based tests can also be reformulated as approximate tests based on random Fourier features (when the kernel is shift-invariant) or Nystr\"{o}m approximation.
In our experiments in \cref{sec:experiments}, we investigate the properties of a MMD test and its Nystr\"{o}m-approximated version.
We provide more details about the Nystr\"{o}m test statistic in \cref{apx:mmd:nystrom}.

\subsection{MMD test for invariance based on orbit-averaging} \label{sec:kernel:inv:orb:avg}

We first consider a test for $\grp$-invariance based on comparing $P$ and $\Pavg$ under the MMD.
The quantity of interest is
\begin{align*}
\MMD(P,\Pavg) &= \lip\kme{P},\kme{P}\rip_\hilbert + \lip\kme{\Pavg},\kme{\Pavg}\rip_\hilbert - 2\lip\kme{P},\kme{\Pavg}\rip_\hilbert \\
&= \lip\kme{P},\kme{P}\rip_\hilbert + \int_{\bfX\times\bfX}\int_{\grp\times\grp} k(gx,hx') \haar(dg)\haar(dh)P(dx)P(dx') \\
&\quad - 2\int_{\bfX\times\bfX} \int_{\grp} k(x, gx') \haar(dg) P(dx) P(dx') \;,
\end{align*}
which, given data $X_{1:n}\simiid P$ and sampled group actions $G_{1:m},H_{1:m}\simiid\haar$, is estimated by the test statistic 
\begin{align*}
    \widehat{\MMD}(\empMeas{n},\empMeasAppAvg{n}{m}) = \frac{1}{n(n-1)}\sum_{i\neq j}\left(k(X_i,X_j) + \frac{1}{m^2}\sum_{\ell=1}^m\sum_{r=1}^mk(G_\ell X_i,H_rX_j) - \frac{2}{m}\sum_{\ell=1}^mk(X_i,G_\ell X_j)\right) \;.
\end{align*}
To obtain a $p$-value in the test based on this statistic, we use the Monte Carlo sampling procedure described in \cref{sec:inv:mc}. 
If the kernel $k$ is (almost) equivariant in the sense that
\begin{equation} \label{eqn:equiv:kernel}
\int_\grp k(gx,x')\haar(dg) = \int_\grp k(x,gx')\haar(dg) \;,
\end{equation}
then a computationally more efficient estimator for the test statistic is possible.
We provide more details about the equivariant kernel setting in \cref{apx:mmd:equiv}.

\subsubsection{Baseline: two-sample MMD tests for invariance} \label{sec:kernel:inv:baseline}

Under $\nullHyp$, $g_i X_i \equdist X_i$ for each $g_i \in \grp$, $i=1,\dotsc,n$. 
Therefore, a standard two-sample test for equality in distribution (such as that in \cref{alg:mmd}) can be applied to the samples $X_{1:n}$ and $Y_{1:n} \define (g_1 X_1,\dotsc, g_n X_n)$. 
We can randomize the $g_i$'s and still have a test of the correct level.
We use this test, which we refer to as the \emph{transformation two-sample test}, as a sensible baseline since it is a valid test but does not take full advantage of the group structure via the sufficiency argument behind \cref{thm:mc:test:p}.

\subsection{MMD test for invariance based on representative inversions}

The tests for $\grp$-invariance described in \cref{sec:kernel:inv:orb:avg,sec:kernel:inv:baseline} can be modified to be based on representative inversions by replacing $X_{1:n}$ with their respective representative inversions $\repInv(X)_{1:n}$.
The test based on representative inversions can make use of a different resampling procedure where the new samples are directly sampled from $\haar$.

\subsection{Other non-parametric tests for invariance} \label{sec:other:tests}

While we focus on kernel-based approaches in this work, any test that involves estimating a metric defined on a space of probability distributions can be used.
Other possibilities include, for example, tests based on the Wasserstein distance or other integral probability metrics, or, if density evaluations are available, $f$-divergences such as the Kullback--Leibler divergence.
Among well-known integral probability metrics, the MMD has favorable computational and statistical rates \citep{Sriperumbudur_2012}. 

Alternatively, any property that uniquely characterizes a distribution can also be used to test for $\grp$-invariance.
Recent work by \citet{Fraiman:2021} employs the Cram\'{e}r--Wold (CW) theorem, which states that any two distributions on $\R^d$ are the same if and only if the distributions of their 1D projections via the linear kernel $x \mapsto x^\T t$ are the same, for all projections $t$.
To our knowledge, the resulting CW test is the only existing test for distributional group invariance. 

The CW test procedure is as follows.
For i.i.d.\ random variables $Z_{1:n}$ supported on $\R$, let $\widehat{F}_{Z_{1:n}}$ denote the empirical cumulative distribution function of a random variable $Z$.
The procedure proposed by \citet{Fraiman:2021} requires that $\grp$ be finitely generated by a subset of group elements of size $L$. 
Given data $X_{1:n}$ on $\R^d$, the group generators $(g_{\ell})_{\ell=1}^L$ and $J$ random unit vectors $t_j\in\R^d$ are used to compute the worst-case Kolmogorov--Smirnov statistic, 
\begin{align*}
\testStat_\text{CW}(X_{1:n}) = \max_{\substack{\ell\in 1:L\\j\in 1:J}} \; \sup_{u\in\R} \left|\widehat{F}_{\left(t_j^\T X\right)_{1:n}}(u) - \widehat{F}_{\left(t_j^\T (g_{\ell} X)\right)_{1:n}}(u)\right| \;.
\end{align*}
The $p$-value is estimated by standard bootstrap resampling from $X_{1:n}$.\footnote{There is also a version that does not rely on bootstrapping but requires the sample to be split and the use of a Bonferroni correction, which likely reduces the power.}
It is straightforward to extend the CW test to more general groups by sampling $G_{\ell}\simiid \haar$ and applying the methods of \cref{sec:symmetries} to obtain a valid test.
We compare our tests for invariance to the extended CW test in the experiments in \cref{sec:experiments}.

\section{Equivariance: conditional symmetry} \label{sec:equiv}

Although tests for marginal or joint distributional invariance are useful in a number of settings, many problems require a test for symmetry of a \emph{conditional} distribution.
These commonly arise in prediction settings where, for example, one must predict $Y$ from $X$, and it is of interest to test whether a change $X \mapsto gX$ corresponds to a change $Y \mapsto gY$.
These so-called {equivariant prediction} problems have been a subject of intense study in machine learning \citep[see][for a review]{Bronstein:2021aa}. 

Suppose our data is of the form $(X,Y)_{1:n}$, with each $(X,Y)$-pair sampled i.i.d.\ from some distribution $P_{X,Y}$ defined over the product space $\bfX\times\bfY$.
We write the disintegration of $P_{X,Y}$ as $P_{X,Y}=P_X \otimes P_{Y|X}$, where $P_{Y|X}$ denotes a regular conditional probability of $Y$ given $X$ (i.e., represented by a Markov probability kernel from $\bfX$ to $\bfY$).
Furthermore, suppose there is a group $\grp$ that acts on both $\bfX$ and $\bfY$; the group action may be different on each.
Recall that we say that $P_{X,Y}$ is jointly $\grp$-invariant if it is invariant in the sense of \cref{def:invariance} extended to $\grp$ acting on $\bfX \times \bfY$.

We refer to $\grp$-equivariance and $\grp$-invariance of the conditional distribution (as in \cref{def:dist:equivariance}) collectively as \textit{conditional symmetry}.
Subsequently, we describe ideas generally in terms of equivariance, which specializes to invariance when $\grp$ acts trivially on $\bfY$.
In contrast to \cref{sec:symmetries}, the symmetry here can be viewed as symmetry of a \emph{function on $\bfX$}, rather than of a probability measure on $(\bfX,\bfS_{\bfX})$. 
To see this, let $f \colon \bfX \to \bfY$ be $\grp$-equivariant, so that $f(gx) = g f(x)$, for all $x \in \bfX$, $g \in \grp$.
The conditional distribution of $Y \define f(X)$ can be represented by the Dirac kernel $\delta_{f(x)}(B)$ in this case.
It is not hard to show that this kernel is equivariant in the sense of \eqref{eqn:dist:equiv:def} if and only if $f$ is an equivariant function.
In the more general case, conditional symmetry describes how the conditional distribution transforms when $x$ is transformed to $gx$.
Indeed, an equivalent definition to \eqref{eqn:dist:equiv:def} is
\begin{align*}
    P_{Y|X}(gx,B) = P_{Y|X}(x,g^{-1}B) \;, \quad x \in \bfX, \ B \in \bfS_{\bfY}, \ g \in \grp \;.
\end{align*}
In this way, an equivariant conditional distribution ``transmits'' the transformation of one variable to the other.
In contrast to the invariance of a marginal (or joint) distribution, which must preserve the total probability mass, the transformations of a conditional distribution have no such restrictions and conditional symmetry is well-defined even when $\grp$ is non-compact. 

Equivariant Markov kernels occur naturally in the disintegration of jointly invariant measures.
Under quite general conditions, a measure is jointly invariant if and only if it disintegrates into an invariant marginal measure and an equivariant Markov kernel.
A measure-theoretic proof of this can be found in \citet[][Theorem 7.6]{Kallenberg:2017}.
\citet{BloemReddy:2020} proved a version of this result for probability measures and compact groups using different methods that require the existence of a measurable representative inversion $\repInv$, and which results in a detailed description of the probabilistic structure of the disintegration.
Those authors found that if $P_{X,Y}$ is \textit{jointly} $\grp$-invariant, then 
\begin{align} \label{eqn:cond:ind}
    X \condind \repInv(X)^{-1} Y \mid \maxInv(X) \;.
\end{align}
In many applications, we are interested in testing for conditional equivariance of $Y \mid X$, even when the marginal distribution of $X$ is not invariant. %
As we prove below, the conditional independence in \eqref{eqn:cond:ind} holds even when $P_{X,Y}$ is not jointly invariant.
In fact, equivariance of $P_{Y|X}$ is both sufficient and necessary.
Moreover, even when a unique inversion $\repInv$ does not exist---i.e., when the action of $\grp$ on $\bfX$ is not free---an analogous conditional independence relation applies: $P_{Y|X}$ is $\grp$-equivariant if and only if
\begin{align} \label{eqn:cond:ind:rand}
    (\repInvRand, X) \condind \repInvRand^{-1} Y \mid \maxInv(X) \;, \quad \text{with} \quad \repInvRand \mid X \sim \repInvKern(X,\argdot) \;.
\end{align}

\begin{theorem} \label{thm:equivariance:cond:ind}
    Let $\grp$ be a lcscH group acting on each of $\bfX$ and $\bfY$, with the action on $\bfX$ proper, so that a measurable inversion kernel $\repInvKern$ exists.
    Then $P_{Y|X}$ is conditionally $\grp$-equivariant if and only if \eqref{eqn:cond:ind:rand} holds.
    If there exists a measurable inversion function $\repInv \colon \bfX \to \grp$, then the characterization condition \eqref{eqn:cond:ind:rand} reduces to \eqref{eqn:cond:ind}.
    If the action of $\grp$ on $\bfY$ is trivial, then \eqref{eqn:cond:ind:rand} reduces to $X \condind Y \mid \maxInv(X)$. 
\end{theorem}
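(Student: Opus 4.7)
The plan is to prove both directions by exploiting the almost-sure identity $\repInvRand \orbSel(X) = X$ (equivalently $\repInvRand^{-1}X = \orbSel(X)$) together with the equivariance of the inversion kernel $\repInvKern(gx, B) = \repInvKern(x, g^{-1}B)$ guaranteed by \cref{prop:orbsel:inv}. The key observation is that $Y = \repInvRand(\repInvRand^{-1}Y)$ a.s., so we can decompose $P_{Y|X}$ as a pushforward under $\repInvRand$ of the conditional law of $\repInvRand^{-1}Y$. Conditional symmetry will then be equivalent to that inner conditional law depending on $X$ only through $\maxInv(X)$.

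For the forward direction, I would condition on $(X, \repInvRand)$ and compute the law of $\repInvRand^{-1}Y$. Since $P_{Y|X}$ is equivariant, setting $g = \repInvRand^{-1}$ in \eqref{eqn:dist:equiv:def} gives
\begin{equation*}
    P_{\repInvRand^{-1}Y \mid X,\repInvRand}(B) = P_{Y|X}(X, \repInvRand B) = P_{Y|X}(\repInvRand^{-1}X, B) = P_{Y|X}(\orbSel(X), B) \;.
\end{equation*}
This is a function of $\orbSel(X)$ alone; since $\orbSel$ is itself a maximal invariant, it is $\sigma(\maxInv(X))$-measurable for any other maximal invariant $\maxInv(X)$ (both factor through the canonical projection $\pi$ by \cref{prop:orbsel:inv}, and the bijection between their ranges is measurable in standard Borel spaces). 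Hence $\repInvRand^{-1}Y$ is conditionally independent of $(X,\repInvRand)$ given $\maxInv(X)$, giving \eqref{eqn:cond:ind:rand}.

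For the converse, I would assume \eqref{eqn:cond:ind:rand} and recover $P_{Y|X}(x, B)$ by disintegrating first in $\repInvRand$. Writing $Q(\argdot \mid \maxInv(x))$ for the conditional law of $\repInvRand^{-1}Y$ given $\maxInv(X)$, we obtain
\begin{equation*}
    P_{Y|X}(x, B) = \int_{\grp}\int_{\bfY} \indi\{\repInvRand y \in B\} \, Q(dy \mid \maxInv(x)) \, \repInvKern(x, d\repInvRand) \;.
\end{equation*}
Evaluating this expression at $gx$, using the invariance $\maxInv(gx) = \maxInv(x)$ together with the equivariance of $\repInvKern$ and the substitution $\repInvRand \mapsto g\repInvRand$, yields $P_{Y|X}(gx, B) = P_{Y|X}(x, g^{-1}B)$, which is \eqref{eqn:dist:equiv:def}. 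The two special cases then follow without additional work: when a measurable representative inversion $\repInv$ exists, $\repInvRand$ can be taken to be a function of $X$, so $(\repInvRand, X)$ reduces to $X$ and \eqref{eqn:cond:ind:rand} collapses to \eqref{eqn:cond:ind}; and when $\grp$ acts trivially on $\bfY$, $\repInvRand^{-1}Y = Y$, and marginalizing out $\repInvRand$ leaves $X \condind Y \mid \maxInv(X)$.

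The main obstacle I anticipate is the measure-theoretic bookkeeping when the action on $\bfX$ is not free. In that case $\repInvKern(\orbSel(X),\argdot)$ is normalized Haar measure on the (nontrivial) stabilizer $\grp_{\orbSel(X)}$, so the identity $\repInvRand\orbSel(X)=X$ holds only up to a $\bar{\nu}$-null set and the inversion kernel is unique only modulo such sets. Care is also needed in the converse direction to ensure that the version of $Q(\argdot \mid \maxInv(x))$ selected is such that the identity $Y \equas \repInvRand(\repInvRand^{-1}Y)$ transfers cleanly through the double integral for almost every $x$. These points should be addressable by invoking the universal measurability properties in \cref{prop:orbsel:inv} and working with specific regular conditional versions, but they require careful statement rather than routine manipulation.
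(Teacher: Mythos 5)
Your argument is essentially the paper's: both directions hinge on the law of $\repInvRand^{-1}Y$ given $(\repInvRand,X)$ depending on $X$ only through a maximal invariant, using the identity $\repInvRand\,\orbSel(X)=X$ a.s.\ and the equivariance of $\repInvKern$; your forward direction just makes the paper's factorization (invariance of the kernel $(\repInvRand,x)\mapsto \tQ(\repInvRand,x,\argdot)$, then \cref{prop:orbsel:inv}) explicit via $\repInvRand^{-1}X=\orbSel(X)$, which is fine. The one step you leave unfinished is the converse's null-set bookkeeping, which the paper treats at some length: your double-integral representation of $P_{Y|X}(x,B)$ holds only for $P$-a.e.\ $x$, so ``evaluating the expression at $gx$'' does not by itself equal $P_{Y|X}(gx,B)$ --- since $P_X$ need not be invariant, $g$ can push mass into the exceptional set. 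The paper resolves this by showing the set where \eqref{eqn:almost:equiv} fails is $\grp$-invariant (Kallenberg's Lemma~7.7) and then repairing the kernel on that set with an averaged version $Q'$; within your setup the cleaner fix is simply to \emph{define} $Q'(x,B)\define\int_{\grp}\int_{\bfY}\indi\{\repInvRand y\in B\}\,Q(dy\mid\maxInv(x))\,\repInvKern(x,d\repInvRand)$ for all $x$, note it agrees with $P_{Y|X}$ $P$-a.e., and verify it is everywhere equivariant by your change-of-variables computation. With that sentence added (and, for the trivial-action case, the observation that $X\condind Y\mid\maxInv(X)$ together with $\repInvRand\condind Y\mid X$ gives back \eqref{eqn:cond:ind:rand}, not just the marginalized implication), your proof is complete and matches the paper's.
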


The proof can be found in \cref{apx:proofs:equivariance}. 
Based on \cref{thm:equivariance:cond:ind}, a test for $\grp$-equivariance of a conditional distribution can be performed as a test of the conditional independence in \eqref{eqn:cond:ind:rand}. 
If the action of $\grp$ on $\bfX$ is transitive, then there is a single orbit and the test simplifies to a test for (unconditional) independence. 

Any maximal invariant can be used as the conditioning variable.
A particularly convenient maximal invariant for testing purposes is the orbit selector $\orbSel(X)$.
Although all maximal invariants are isomorphic to one another, there may be computational and statistical benefits to using a maximal invariant with an explicit representation in a low-dimensional space.
For example, in the case of $\SO{d}$, one may use $\maxInv(X) = \|X\|$. 

\begin{example} \label{expl:SOd:equivariance}
    For random vectors $X$ and $Y$ in $\R^d$, suppose that $Y \mid X \sim \gaussian(X, \I_d)$.
    It is straighforward to check that $P_{Y|X}$ is $\SO{d}$-equivariant:
    If $\eps \sim \gaussian(0, \I_d)$ so that $Y = X + \eps$, then for any $g \in \SO{d}$,
    \begin{align*}
        gY \mid X = gX + g\eps \equdist gX + \eps = Y \mid gX \;.
    \end{align*}
    Using the representative inversion \eqref{eq:SOd:rep:inv} described in \cref{expl:SOd:invariance}, we can test for equivariance with the conditional independence test \eqref{eqn:cond:ind}.
    We do so in \cref{sec:experiments:synthetic}. 
\end{example}

\begin{example} \label{expl:lorentz}
    The Lorentz group, or indefinite orthogonal group $\lorentz$, is a fundamental symmetry group in physics, where it is the group of linear isometries of Minkowski spacetime, on which the theory of special relativity is commonly constructed. 
    It preserves the quadratic form
    \begin{align*}
        Q(p) = E^2 - p_x^2 - p_y^2 - p_z^2 \;,
    \end{align*}
    where $p = (E, p_x, p_y, p_z)$ is a vector in $\R^4$ representing a particle's momentum in Minkowski spacetime, known as the \emph{four-momentum}. Mathematically, the Lorentz group is a non-compact non-Abelian real Lie group that is not connected. 
    The Standard Model of physics is invariant under the Lorentz group, and violations of $\lorentz$-invariance could indicate phenomena beyond the Standard Model. An example application is quark tagging \citep{Kasieczka:2019,bogatskiy20a}. High-energy quarks produced in particle collisions, such as those at the Large Hadron Collider, quickly decay through a cascading process of gluon emission into collimated sprays of stable hadrons, which are subatomic particles that can be detected and measured \citep{Salam2010}.  This spray is known as a \textit{jet}. Identifying, or tagging, which species of quark gave rise to a jet is a crucial task in collider physics, and an active area of research \citep{LARKOSKI20201}. 
    
    According to the Standard Model, the inertial frame of the parent quark may differ from the lab frame by a Lorentz group transformation, and the task of quark-tagging based on the four-momenta of constituent particles should be invariant to those transformations. That symmetry was incorporated into a neural network architecture designed for quark-tagging \citep{bogatskiy20a}. Alternatively, one may wish to test for symmetry in a dataset of jet-quark pairs. In the language of conditional symmetry, given a collection of four-momenta of $\ell$ decay particles ($\bfX = \R^{\ell \times 4}$), the conditional probability that they decayed from a particular particle should be $\lorentz$-invariant.
    In our experiments in \cref{sec:experiments:top:quark}, interest is in whether or not the particles decayed from a top quark, so $\bfY = \{0, 1\}$. 
    Since $Q(p)$ is a maximal invariant, it can be used as the conditioning variable.
    Moreover, since conditional invariance is being tested (the action of $\lorentz$ on $\bfY$ is trivial), the conditional independence test based on $X \condind Y \mid \maxInv(X)$ can be used. 
\end{example}

\section{Kernel conditional independence test for equivariance} \label{sec:kernel:equiv}

\Cref{thm:equivariance:cond:ind} establishes that a test for $\grp$-equivariance of a conditional distribution can be formulated as a test for conditional independence of the form \eqref{eqn:cond:ind:rand}. 
Any non-parametric conditional independence test could be used for this purpose; see \citet{Li:2020} for a recent review of the literature. 
We use the kernel conditional independence (KCI) test \citep{Zhang:2011} in our experiments in \cref{sec:experiments}.

In all of our experiments, either a representative inversion $\repInv$ exists or the action of $\grp$ on $\bfY$ is trivial, so for simplicity we formulate the statistic for the simpler conditional independence test \eqref{eqn:cond:ind}.
Modification to accommodate random $\repInvRand$ is straightforward, by replacing $\repInv^{-1}(X_i)$ in the expression below with a sampled $\repInvRand_i \sim \repInvKern(X_i,\argdot)$ and extending the kernel $\bfK_{XM}$ as defined below on $\bfX \times \bfM$ to a kernel on $\grp\times\bfX\times\bfM$. 

The test statistic in the KCI test for equivariance is constructed as follows.
Let $k_X$, $k_Y$ and $k_\maxInv$ be kernels on $\bfX$, $\bfY$, and $\bfM$, respectively.
Given data $(X,Y)_{1:n}$, define the kernel matrices $\bfK_Y$, $\bfK_\maxInv$ and $\bfK_{X\maxInv}$ as
\begin{align*}
\left[\bfK_Y\right]_{ij} &= k_Y(\repInv(X_i)^{-1}Y_i,\repInv(X_j)^{-1}Y_j) \;, &
\left[\bfK_\maxInv\right]_{ij} &= k_\maxInv(\maxInv(X_i),\maxInv(X_j)) \;, &
\left[\bfK_{X\maxInv}\right]_{ij} &= k_X(X_i,X_j)\left[\bfK_\maxInv\right]_{ij} \;.
\end{align*}
Let $\bar{\bfK}_Y=\bfH\bfK_Y\bfH$ denote the centralized kernel matrix, where $\bfH=\I_n - n^{-1}\ones_n$, and similarly for $\bar{\bfK}_\maxInv$ and $\bar{\bfK}_{X\maxInv}$.
For fixed $\eps>0$, define the matrices $\bfR_\maxInv=\eps(\bar{\bfK}_\maxInv+\eps\I_n)^{-1}$, $\bar{\bfK}_{X\maxInv|\maxInv}=\bfR_\maxInv\bar{\bfK}_{X\maxInv}\bfR_\maxInv$, and $\bar{\bfK}_{Y|\maxInv}=\bfR_\maxInv\bar{\bfK}_{Y}\bfR_\maxInv$.
Then the test statistic is given by
\begin{align*}
\testStat_\text{KCI}(X_{1:n},Y_{1:n}) = \frac{1}{n}\mathrm{Tr}(\bar{\bfK}_{X\maxInv|\maxInv}\bar{\bfK}_{Y|\maxInv}) \;.
\end{align*}
The distribution of this test statistic under $\nullHyp$ can be approximated by samples $T^{(1)},\dotsc,T^{(B)}$ drawn through a simulation procedure described by \citet{Zhang:2011}, and the test rejects $\nullHyp$ at level $\sig$ if
\begin{align*}
\frac{1}{B}\sum_{b=1}^B\indi\left\{\testStat_\text{KCI}(X_{1:n},Y_{1:n}) \leq \testStat^{(b)}\right\} \leq \sig \;.
\end{align*}
See \citet{Zhang:2011} for more details and explanations.

As a point of comparison, we also implement a test for equivariance based on the conditional permutation (CP) test \citep{Berrett:2019} where kernel conditional density estimation (KCDE, \citealt{Gooijer:2003}) is used in the sampling procedure.
We leave the details of this particular test to \cref{apx:cpt}.

\section{Experiments} \label{sec:experiments}

We evaluate our proposed tests on several synthetic and real data examples.
In all of our experiments, we sample $n$ data points from a distribution or dataset and perform a test for a specified symmetry.
We repeat this procedure over $N=1000$ simulations for each test and record the proportion of simulations in which the test rejected.
The proportion of rejections is an estimate of the level of the test when the data distribution has the specified symmetry and otherwise an estimate of the power when the distribution does not.
With $N=1000$ simulations, estimates are precise up to approximately $\pm0.016$.
We set the desired significance level to be $\sig=0.05$ in all of our experiments.
We use $m=2$ sampled group actions except where otherwise specified.
We default to Gaussian radial basis function kernels for data that are continuous and use the median distance heuristic \citep{Garreau:2017} for the kernel bandwidth unless otherwise specified.
The median distance is computed from a ``training'' set of $n$ data points randomly split from the ``test'' set used to estimate the rejection rate.
The median distance is recomputed in every simulation.

Let $\lceil x\rceil$ denote the ``ceiling'' function.
The tests that we evaluate for invariance include: the transformation two-sample test based on the standard two-sample MMD (\tSMMD, \cref{sec:kernel:inv:baseline}), the MMD test (\tMMD, \cref{sec:kernel:inv:orb:avg}), the MMD test with Nystr\"{o}m approximation using $J=\lceil\sqrt{n}\rceil$ subsamples (\tNMMD, \cref{apx:mmd:nystrom}), and the CW test with $J=\lceil\sqrt{n}\rceil$ random projections (\tCW, \cref{sec:other:tests}).
Where applicable, we use the sampling procedure described in \cref{alg:mc:pval} with $B=200$.
We test for conditional symmetries using the KCI test (\tKCI, \cref{sec:kernel:equiv}), and the CP test with KCDE (\tCP, \cref{apx:cpt}, $S=50$ steps) using the multiple correlation coefficient \citep{Abdi:2007} as the test statistic. 
In most conditional symmetry experiments, we find that using the median heuristic for \tKCI and \tCP does not lead to meaningful results.
In these cases, we select the kernel bandwidths by performing a grid search for each kernel.
For each combination of bandwidths, we estimate the size and power of the test over 100 simulations involving training data.
We choose the combination that leads to a rejection rate of at most $0.1$ on data generated under $\nullHyp$ and that maximizes rejection rate on data generated under $\altHyp$.
If no combination has rejection rate less than $0.1$ under $\nullHyp$, we then use the combination that leads to the lowest rejection rate. 
Further details about the grid search for each experiment, along with other experiment details, can be found in \cref{apx:exp}.

All experiments were implemented in the Julia programming language (version 1.6.1) and run with a single thread on a high-performance computing allocation with 4 CPU cores and 16GB of RAM.

\begin{table}[!t]
\begin{tabular*}{\columnwidth}{@{\extracolsep\fill}rrrrrrrrrrrr@{\extracolsep\fill}}
\toprule
& \multicolumn{5}{c}{$\grp = \SO{4}$} & \multicolumn{6}{c}{$\grp = \Sym{10}$} \\
\cmidrule(lr){2-6}
\cmidrule(l){7-12}
& \multicolumn{3}{c}{Invariance} & \multicolumn{2}{c}{Equivariance} & \multicolumn{4}{c}{Invariance} &  \multicolumn{2}{c}{Equivariance} \\
& $\nullHyp$ & $\altHyp$ & $\hat{\power}_{n,m}$ & $\nullHyp$ & $\altHyp$ & $\nullHyp^+$ & $\nullHyp^-$ & $\altHyp$ & $\hat{\power}_{n,m}$ & $\nullHyp$ & $\altHyp$ \\
\midrule
\tSMMD & $0.041$ & $0.870$ & $0.778$ &&& $0.012$ & $0.052$ & $0.987$ & $1.000$ \\
\tMMD & $0.050$ & $0.984$ & $0.961$ &&& $0.047$ & $0.053$ & $1.000$ & $1.000$ \\
\tNMMD & $0.051$ & $0.896$ & $0.743$ &&& $0.054$ & $0.044$ & $0.122$ & $0.175$ \\
\tCW & $0.068$ & $0.935$ & $0.988$ &&& $0.069$ & $0.072$ & $0.872$ & $0.993$ \\
\tKCI &&&& $0.041$ & $0.921$ &&&&& $0.075$ & $0.997$ \\
\tCP &&&& $0.095$ & $1.000$ &&&&& $0.124$ & $0.753$ \\
\bottomrule
\end{tabular*}
\caption{Test rejection rates over $N=1000$ simulations.}
\label{tab:synthetic_results}
\end{table}

\subsection{Synthetic examples} \label{sec:experiments:synthetic}

We use $\gaussian$ to denote both univariate and multivariate Gaussian distributions.
For multivariate distributions, the dimension $d$ is given in the context of each experiment.
We use $\zerovec_d$ ($\onevec_d$) to denote a vector of zeros (ones) of length $d$.

\subsubsection{Rotation} \label{sec:exp:rotation}

We first consider symmetries of random vectors $X_i \in \R^d$ with respect to the group $\grp=\SO{d}$ of $d$-dimensional rotations about the origin. 
We conduct the following experiments for $d=4$:
\begin{itemize}
    \item For tests for invariance, we estimate the test size and power by the average rejection rates over $N=1000$ simulations, each of $n=200$ i.i.d.\ samples.
    Samples were generated from $\gaussian(\zerovec_d,\I_d)$ and from $\gaussian(0.4\e_1,\I_d)$ for the null and alternative hypotheses, respectively.
    \item For tests for equivariance, we first generate samples $X_{1:n}$ of size $n=50$ from $\gaussian(\zerovec_d,\Sigma_d)$, where $\Sigma_d$ is sampled from Wishart$(\I_d,d)$. 
    For $i\in\{1,\dotsc,n\}$, we then generate $Y_i$ given $X_i$ from $\gaussian(X_i,\I_d)$ to simulate the size of the test (rejection rate under the null), and from $\gaussian(|X_i|,\I_d)$ to estimate the rejection rate under an alternative, where $|X_i|$ denotes the vector obtained by taking element-wise absolute values of $X_i$.
\end{itemize}
The estimated test sizes and powers are shown in \cref{tab:synthetic_results}.
For all $\SO{4}$-symmetries, our tests reject at a rate of approximately $\sig=0.05$ for data generated under $\nullHyp$ and reject at some noticeably higher rate for data generated under $\altHyp$.
We also compute power estimates $\hat{\power}_{n,m}$ for a single dataset using the procedure described in \cref{alg:power}.
The power estimates relatively align with the ``gold-standard'' estimates based on simulations.
Finally, we validate \cref{thm:mc:test:p} by checking the distribution of the estimated $p$-value $p_B$ across the simulations.
Histograms of those are shown in \cref{fig:pvalues}.
The resulting distributions were tested for uniformity using the Kolmogorov--Smirnov test; each plot of \cref{fig:pvalues} also displays the $p$-value of that test.
(The \tSMMD $p$-values were estimated by standard bootstrap resampling.)
The MMD-based tests using the conditional Monte Carlo method from \cref{alg:mc:pval} indeed appear to be sampling uniform $p$-values when $\nullHyp$ is true. 

\begin{figure}[!t]
\centering
\includegraphics[scale=\figscale,trim={0 9mm 0 0},clip]{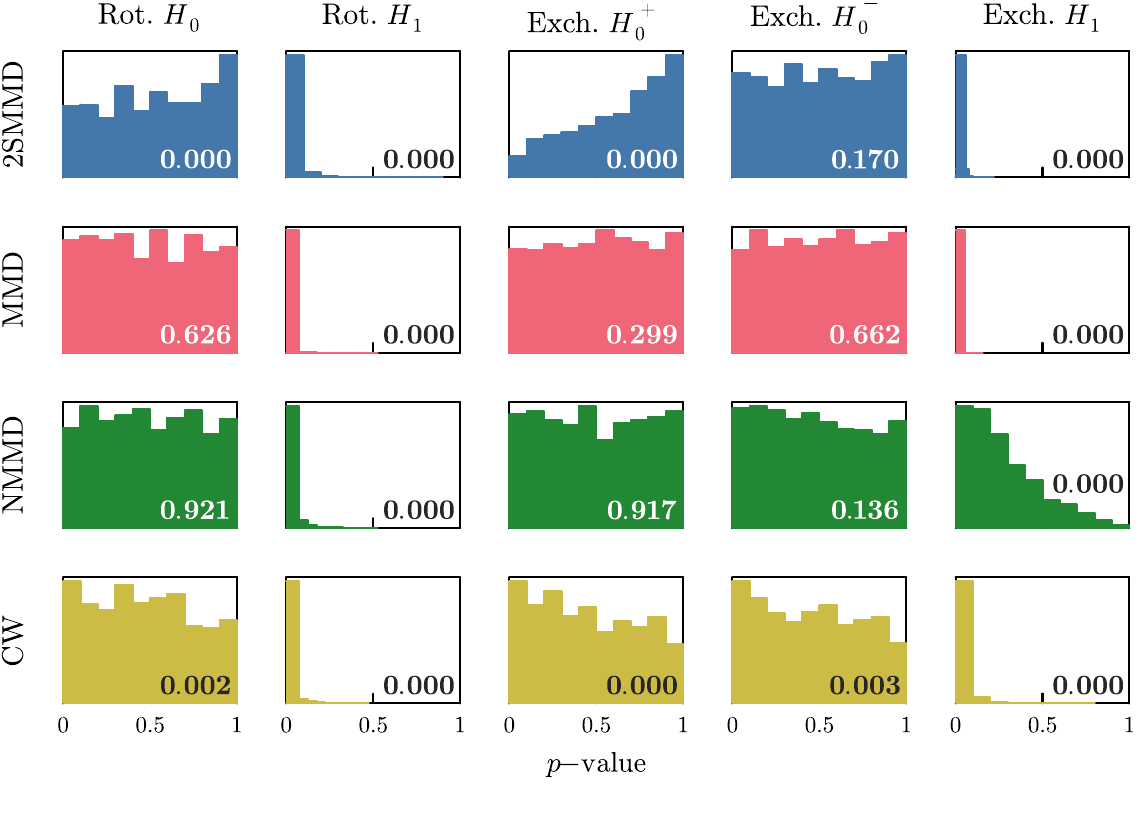}
\caption{Histograms showing the $p$-value distributions obtained over $N=1000$ simulations across different tests and data generation settings.
For \tSMMD, the $p$-value is estimated by standard bootstrapping; for the others, it is estimated by the conditional Monte Carlo method in \cref{alg:mc:pval}. 
In each plot, the $p$-value of a Kolmogorov--Smirnov test for uniformity of the distribution is shown in the bottom-right corner.}
\label{fig:pvalues}
\end{figure}

We use the $\SO{d}$-invariance experiment above to investigate how the properties of these tests change with increasing dimensions $d$, sample sizes $n$, and sampled group actions $m$.
When a variable is fixed while varying the others, we set $d=4$, $n=200$ and $m=2$.
\Cref{fig:synthetic_results} shows the estimated levels, powers and average computation (wall) times in seconds for $d\in\{5,10,15,20\}$, $n\in\{50,100,200,400\}$, and $m\in\{1,2,3,4,5\}$.
We find that among these tests for invariance, \tMMD achieves the best statistical performance as the dimension increases at the cost of increasing computation time.
\tMMD also appears to be relatively more efficient in terms of power for small sample sizes.
For smaller values of $d$, our randomized version of \tCW and the \tSMMD perform nearly as well, particularly for larger sample sizes, with less computation time. 
We also observe that sampling more than two group actions only leads to a marginal increase in power for most tests but results in an increase in computation time, most notably for \tMMD.

\begin{figure}[!t]
\centering
\includegraphics[scale=\figscale]{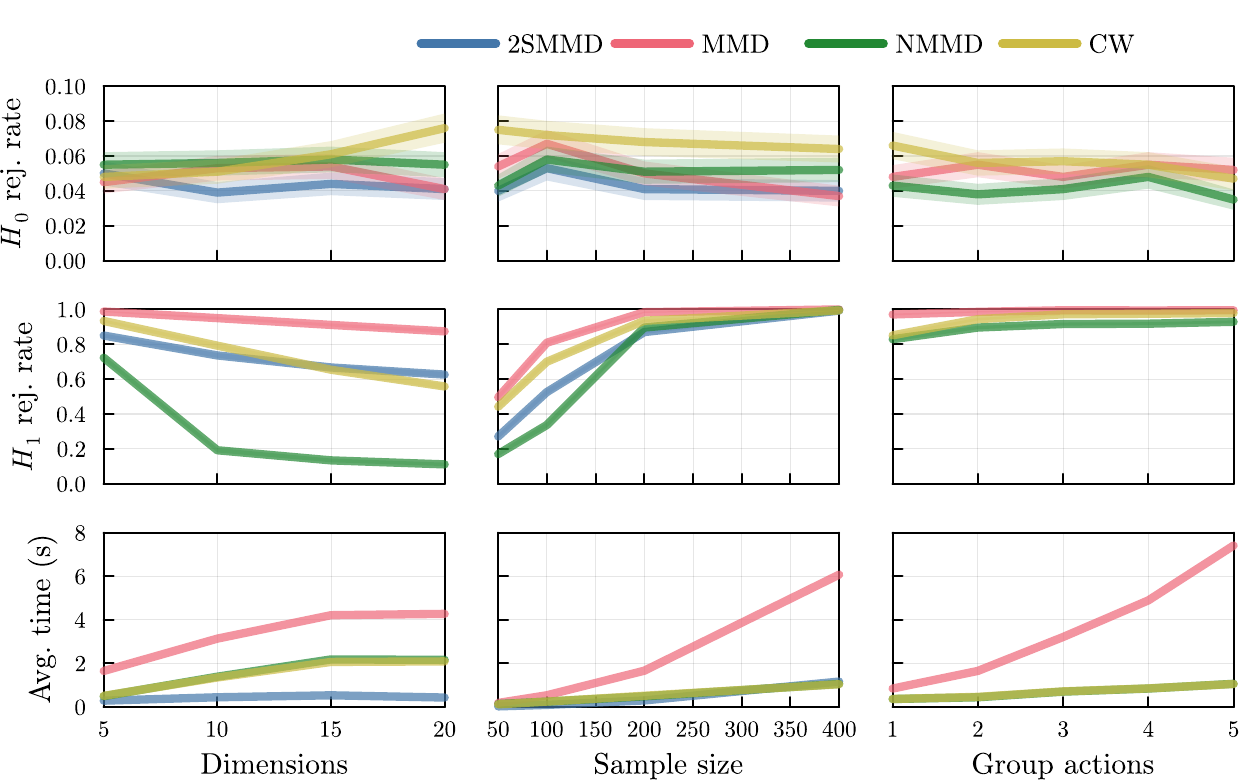}
\caption{First and second row: Test rejection rates and standard deviations over $N=1000$ simulations with one of dimensions $d$, sample size $n$, or number of group actions $m$ increasing and the others fixed ($d=4$, $n=200$, $m=2$).
Third row: Average computation time (in seconds) for a single execution of the test.}
\label{fig:synthetic_results}
\end{figure}

Separately, we also investigate a MMD test for $\SO{3}$-invariance based on representative inversions.
Given an observation $X_i\in\R^3$, we take $\repInv(X_i)$ to be the rotation matrix that maps the vector $\|X_i\|_2\e_1$ to $X_i$, obtained as in \cref{expl:SOd:invariance}.
A random inversion is then calculated as $\repInvRand_i = \repInv(X_i) G_i$, where $G_i\sim\haar_{\grp_{\e_1}}$ is a uniform random sample from the stabilizer subgroup of $\e_1$. 
We sample $X_i$ from $\gaussian(\zerovec_3,\I_3)$ to estimate the rejection rate under $\nullHyp$, and from $\gaussian(\zerovec_3,\Sigma_3)$ for $\altHyp$, where $\Sigma_3\sim\mathrm{Wishart}(\I_3,3)$.
We use a characteristic kernel on $\SO{3}$ \citep{Fukumizu:2008}, 
\begin{align*}
k(R_1, R_2) = \frac{\pi\theta(\pi-\theta)}{8\sin(\theta)} \;, \quad \text{with } \cos(\theta)=\frac{1}{2}\mathrm{Tr}(R_2^{-1} R_1)\;, \quad \text{for } 0\leq\theta\leq\pi \;.
\end{align*}
The rejection rates in the $\nullHyp$ setting are $0.014$, $0.038$, and $0.058$ for $\tSMMD$, $\tMMD$, and $\tNMMD$, respectively; the rejection rates in the $\altHyp$ setting are $0.929$, $0.976$, and $0.222$.
Histograms of the $p$-value distributions are shown in \cref{apx:exp:so3}.

\subsubsection{Exchangeability} \label{sec:exp:exch}

We next consider symmetries with respect to the group $\grp=\Sym{d}$ of permutations acting on $\R^d$. 
We conduct the following experiments for $d=10$:
\begin{itemize}
    \item For tests for invariance, in each simulation iteration, we generate i.i.d.\ samples of size $n=200$ from three distributions: $\gaussian(\zerovec_d, \Sigma_d^+)$, where $\Sigma_d^+$ is the $d\times d$ matrix with 1 on the diagonal and $\nicefrac{1}{d}$ on the off-diagonals; $\gaussian(\zerovec_d, \Sigma_d^-)$, where $\Sigma_d^-$ is similar to $\Sigma_d^+$ but with $\nicefrac{-1}{(d-1)}$ on the off-diagonals; and $\gaussian(\zerovec_d,\Sigma_d)$, where $\Sigma_d$ is randomly sampled from Wishart$(I_d,d)$.
    The first two distributions are $\Sym{d}$-invariant \citep[][pp.~7-8]{Aldous:1983}, while the third distribution is almost surely not.
    We refer to the three settings as $\nullHyp^+$, $\nullHyp^-$, and $\altHyp$, respectively.
    \item For tests for equivariance, we generate samples $X_{1:n}$ of size $n=100$ from $\gaussian(\zerovec_d,\Sigma_d)$, where $\Sigma_d$ is sampled from Wishart$(\I_d,d)$.
    For $i\in\{1,\dotsc,n\}$, we then generate $Y_i$ from $\gaussian(X_i,\I_d)$ for $\nullHyp$ and from $\gaussian(X_i^\T\e_1\onevec_d,\I_d)$ for $\altHyp$.
    For these tests, we select the kernel bandwidths via the grid search procedure described in \cref{sec:experiments}, over the grid $\{10^{-6},\dotsc,10^2\}$ for all kernels.
\end{itemize}
The results of the above tests are shown in \cref{tab:synthetic_results}.
As in the case of testing $\SO{4}$-symmetries, most of our tests for $\Sym{10}$-symmetries have rejection rates relatively close to $\sig=0.05$ when $\nullHyp$ is true and otherwise higher rates when $\altHyp$ is true.
Power estimates are again largely consistent with the simulated rejection rates. Finally, the estimated $p$-values shown in \cref{fig:pvalues} validate \cref{thm:mc:test:p} in this setting as well.
In \cref{apx:exp:exch:rand:proj}, we find that increasing the number of random projections in \tNMMD to $J=25$ boosts the test power to that comparable with \tCW.

\subsection{\textsc{Swarm} geomagnetic satellite data} \label{sec:exp:swarm}

The first data application that we consider features the geomagnetic field data collected by the European Space Agency as part of the \textsc{Swarm} constellation mission that was first launched in late 2013 \citep{Olsen:2013}.
The source dataset includes geomagnetic field measurements (in units of nT) recorded in 1-second intervals from three satellites (labelled A, B, C) that orbit the earth several times a day.
The dataset also includes the latitude, longitude, and the radius from the center of the earth at which the measurements were collected.
A magnetic dipole model, which is invariant to rotations about the dipole axis (which intersects Earth at the geomagnetic poles), can be used to approximate the geomagnetic field.
However, significant deviations from the model---and symmetry---can occur due to variations in mineral composition of Earth's crust, the effects of solar wind, and other causes. 

Following \citet{Christie:2023}, we consider only the data collected by satellite~A on February 25$^\text{th}$, 2023. 
The data consists of 86,400 measurements collected along satellite orbit trajectories around Earth.
We randomly partition the 86,400 data points into a training set and a test set of equal size.
We transform the latitude, longitude, and radius positions into Cartesian coordinates, and apply a rescaling so that the maximum norm of any observed coordinate vector is $1$.
We also standardize the geomagnetic field measurements.
We take $X_i$ to be the Cartesian coordinates and $Y_i$ to be the standardized field intensity at $X_i$.
A single sample of $n=220$ data points is visualized in \cref{fig:swarm}.
From the visualization, the magnetic field intensity appears to be nearly invariant with respect to rotations around the axis through the poles (represented as the $X_3$-axis), though some deviations from invariance are evident. 

\begin{figure}[t]
\centering
\includegraphics[scale=\figscale,trim={0 5mm 0 0},clip]{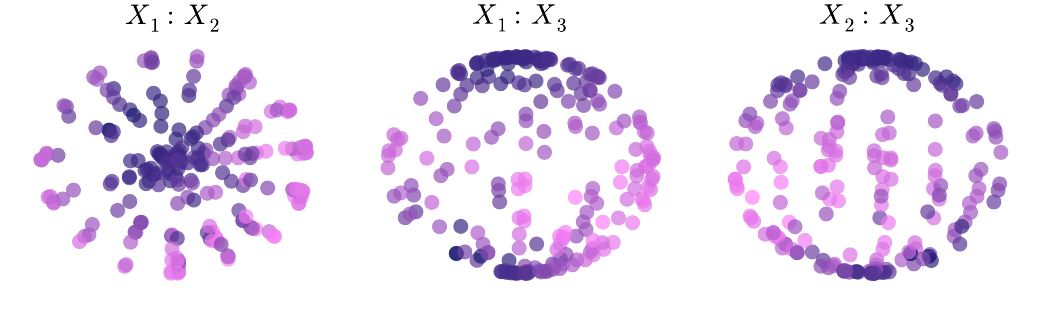}
\caption{Two-dimensional projections of $n=220$ 3D Cartesian data points sampled from the \textsc{Swarm} dataset.
The label $X_a:X_b$ indicates the projection onto the plane spanning dimensions $a$ and $b$.
The magnetic field strength is represented by the colour; the darker the colour, the greater the measured value.}
\label{fig:swarm}
\end{figure}

We test for conditional $\SO{2}$-invariance of $Y_i$ given $X_i$ with respect to each of the three axes in $\R^3$ using samples of size $n=220$ that are sampled (without replacement) from the test set.
The rejection rates are $0.464$, $0.273$, and $0.127$, where the last rate corresponds to the rotations about the geographic north pole.
We also examine conditional invariance with respect to rotations about the geomagnetic north pole,\footnote{The geomagnetic poles correspond to the axis through Earth for which a dipole approximation (which is $\SO{2}$-invariant) of the geomagnetic field is best-fitting \citep{geomag:page}.} for which \tKCI rejects at a rate of $0.223$.
\Cref{fig:pvalues:swarm} shows the distributions of the \tKCI $p$-values along with the $p$-value $0.075$ reported by \citet{Christie:2023} for their own test for invariance.
For this experiment, we were unable to tune \tCP to produce meaningful results.

We also test for the conditional invariance of $P_{Y_i|X_i}$ by testing for marginal invariance in $X_i$ and joint invariance in $(X_i,Y_i)$. 
\Cref{fig:swarm} shows that the satellite orbited the earth approximately 15 times, and the orbit trajectories are spaced approximately 24 degrees apart, intersecting at the poles.
The results are shown in \cref{apx:exp:swarm}.
We find that the tests generally do not reject marginal invariance of $X_i$ with respect to both discrete 24-degree rotations and continuous rotations about the geographic north pole.
However, \tMMD and \tCW both reject joint invariance with respect to discrete/continuous rotations in $X_i$ and non-transformations in $Y_i$ at a significantly higher rate.
Altogether, the tests indicate that $P_{Y_i|X_i}$ is not conditionally invariant with respect to rotations around any of the tested axes. 

Note that by treating the $(X_i,Y_i)$ observations as i.i.d., we are implicitly assuming that $X_{1:n}$ are i.i.d.\ and that $Y_{1:n}$ are conditionally independent given $X_{1:n}$.
The former assumption is justified by the sampling process.
The conditional independence of $Y_{1:n}$ is more difficult to justify, but deviations from that should be confined to relatively small local regions.
If one is unwilling to make the assumption, the data could be treated as a matrix $X$ of Cartesian coordinates and a vector $Y$ of corresponding geomagnetic field values.
A test for conditional invariance is possible by testing based on the maximal invariant $\maxInv(X) = X^{\top} X$.

\subsection{Large Hadron Collider dijet events} \label{sec:experiments:LHC}

The second application that we examine is based on the Large Hadron Collider (LHC) Olympics 2020 dataset \citep{Kasieczka:2021}.
The LHC dataset consists of 1.1 million simulated dijet events generated by PYTHIA \citep{Bierlich:2022aa}, a widely-used Monte Carlo generator for high-energy physics processes. A dijet event is two jets of particles that are produced by the collision of subatomic particles.  
The transverse momentum, polar angle $\phi$ and pseudorapidity $\eta$ for up to 200 jet constituents were recorded for each jet.
The Cartesian momentum of a particle in the transverse plane is represented by the pair
\begin{align*}
    p_x = p_\text{T}\cos(\phi) \;, \qquad p_y = p_\text{T}\sin(\phi) \;.
\end{align*}
The leading constituent in a jet is the particle with the largest transverse momentum in any direction.
In our analyses, we focus on the joint distribution of the two constituents with the largest transverse momenta in each event \citep[after][]{Desai:2021}.
A single observation is therefore a 4D vector $X = (p_{1_x},p_{1_y},p_{2_x},p_{2_y})$, where $p_1$ and $p_2$ correspond to the momenta of the two leading particles, respectively.
As in the previous experiment, we randomly split the dataset into a training set and a test set of equal size.
We draw samples of size $n=100$ in all of the following experiments.
Histograms of $p$-value distributions obtained from the tests can be found in \cref{apx:exp:lhc}.

\begin{figure}[!t]
\centering
\includegraphics[scale=\figscale,trim={0 6mm 0 0},clip]{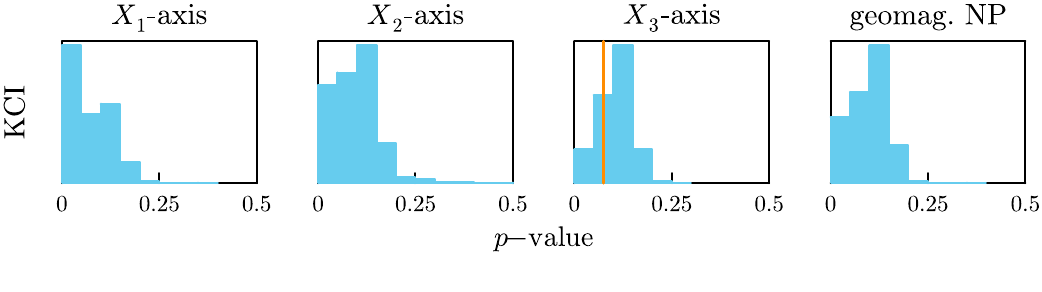}
\caption{Histograms showing the \tKCI $p$-value distributions across $N=1000$ simulations for testing conditional invariance with respect to $\SO{2}$-rotations about the three axes and the geomagnetic north pole for the \textsc{Swarm} data.
The orange line is the $p$-value ($0.075$) reported by \citet{Christie:2023} for their test for invariance.}
\label{fig:pvalues:swarm}
\end{figure}

\subsubsection{Joint invariance}

By conservation of angular momentum, the distribution of the Cartesian momenta of the two leading particles across jet events should be invariant to simultaneous rotations by the same angle, i.e., with respect to the subgroup $\grp_0=\{(g_1,g_2)\in\SO{2}\times\SO{2}:g_1=g_2\}$.
We conduct tests for invariance with respect to this subgroup, as well as with respect to the full $\grp_1=\SO{2}\times\SO{2}$ group, and to $\grp_2=\SO{4}$.
The results are shown in \cref{tab:lhc}.
We see that \tSMMD, \tMMD, and \tCW are able to identify $\grp_0$-invariance and correctly reject $\grp_1$- and $\grp_2$-invariance at a higher rate.
In \cref{apx:exp:lhc}, we find that increasing the number of random projections from 10 to 15 significantly improves the power of \tNMMD.

\subsubsection{Conditional equivariance}

By taking $X_i=(p_{1_x},p_{1_y})$ and $Y_i=(p_{2_x},p_{2_y})$, invariance of the 4D vector with respect to the subgroup $\grp_0$ can also be viewed as $Y_i$ being conditionally equivariant with respect to $\SO{2}$ given $X_i$.
We perform a test for $\SO{2}$-equivariance.
We obtain rejection rates $0.0$ for \tKCI and $0.051$ for \tCP in this setting.
We also perform a test for conditional $\SO{2}$-invariance, which \tKCI correctly rejects with rate $1.0$ and \tCP with rate $0.997$.

\subsection{Top quark tagging} \label{sec:experiments:top:quark}

We consider a second particle physics application based on the Top Quark Tagging Reference dataset \citep{Kasieczka:2019}, which also consists of jet events simulated by PYTHIA. 
The original dataset was constructed for the task of classifying jet events as having decayed from a top quark and consists of a training, validation, and test set.
We only use the test set, which contains 404,000 simulated jet events.
The four-momenta $p=(E,p_x,p_y,p_z)$ of up to 200 jet constituents are recorded for each event.
Each event is also labelled as 1 or 0, representing that the jet decayed from a top quark or did not, respectively.
As in \cref{expl:lorentz}, according to the Standard Model, when predicting whether a jet is the decay of a top quark based on the four-momenta of jet constituents, the distribution of the label should be conditionally invariant with respect to the Lorentz group $\lorentz$, which consists of spatial rotations and relativistic boosts. 
According to \cref{thm:equivariance:cond:ind}, conditional invariance is equivalent to $X \condind Y \mid \maxInv(X)$ in this scenario. 

For convenience, we take the data to be the four-momenta $X_i=(p_1,p_2)$ of the two leading constituents in each jet \citep[as in][]{Yang:2023aa} and the top quark label $Y_i\in\{0,1\}$. 
We split the data into a training and test set.
We perform a test for conditional invariance of $Y_i$ given $X_i$ with respect to the Lorentz group based on samples of size $n=200$.
In our tests, we use the 2D maximal invariant $\maxInv(X_i)=(Q(p_{1_i}),Q(p_{2_i}))$.
For the kernel on $\bfY=\{0,1\}$, we use the kernel $k_Y(x,y) = \indi(x=y)$.
\tKCI rejects conditional invariance at a rate of $0.029$, which is consistent with the theory of the Standard Model.
To verify that \tKCI is identifying symmetry in a meaningful way, we simulate new labels $Y_i'$ given $X_i$ using the model
\begin{align*}
Y_i' \mid X_i \sim \mathrm{Bernoulli}\left(0.9\indi\{E\geq 200\} + 0.1\indi\{E<200\}\right) \;.
\end{align*}
With the new labels, \tKCI rejects conditional invariance with respect to the Lorentz group at a rate of $0.781$.
Distributions of the $\tKCI$ $p$-values can be found in \cref{apx:exp:tqt}.
We were unable to tune \tCP to produce meaningful results.

\begin{table}[t]
\begin{tabular*}{\columnwidth}{@{\extracolsep\fill}rrrr@{\extracolsep\fill}}
\toprule
& $\grp_0=\{\text{paired }\SO{2}\text{-rotations}\}$ & $\grp_1=\SO{2}\times\SO{2}$ & $\grp_2=\SO{4}$ \\
\midrule
\tSMMD & $0.035$ & $0.967$ & $0.983$ \\
\tMMD & $0.038$ & $1.000$ & $1.000$ \\
\tNMMD & $0.058$ & $0.241$ & $0.214$ \\
\tCW & $0.052$ & $0.971$ & $0.999$ \\
\bottomrule
\end{tabular*}
\caption{Test rejection rates over $N=1000$ simulations for the LHC data.}
\label{tab:lhc}
\end{table}

\section*{Acknowledgments}

This research was supported in part through computational resources and services provided by Advanced Research Computing at the University of British Columbia. KC and BBR gratefully
acknowledge the support of the Natural Sciences and Engineering Research Council of Canada (NSERC): RGPIN2020-04995, RGPAS-2020-00095, DGECR-2020-00343.

\begin{appendices}

\crefalias{section}{appendix}
\crefalias{subsection}{appendix}
\crefalias{subsubsection}{appendix}

\section{Proofs of main results} \label{apx:proofs}

This section contains the proofs for \cref{thm:asymptotic:size:power,thm:mc:test:p,thm:equivariance:cond:ind}.

\subsection{Proof of \cref{thm:asymptotic:size:power}} \label{apx:proofs:size:power}

\begin{proof}
    The results follow easily from \cref{thm:invariant_characterizations} and the fact that $\metric(\empMeas{n}, \empMeasAppAvg{n}{m})$ converges almost surely (with respect to the product measure $P\otimes \haar$) to $\metric(P,\Pavg)$ by the strong law of large numbers.
    In particular, if $P \in \invProbs$, then $P = \Pavg$ and so $\metric(P,\Pavg) = 0$.
    Since $\metric$ is continuous, it follows that
    \begin{align*}
        \lim_{n\to\infty} \metric(\empMeas{n}, \empMeasAppAvg{n}{m}) = 0 \;,\quad \ \text{$P\otimes\haar$-a.s.} \;,\quad \text{for any } P \in \invProbs \;.
    \end{align*}
    Therefore, if $c_n \to c \geq 0$, then
    \begin{align*}
        \lim_{n\to\infty} \E_{P\otimes\haar}[\critFun_{n,m}(X_{1:n})] = \lim_{n\to\infty} \E_{P\otimes\haar} [\indi\{ \metric(\empMeas{n}, \empMeasAppAvg{n}{m}) > c_n \}] = 0 \;,
    \end{align*}
    from which \eqref{eq:asymp:level} follows. 

    On the other hand, if $P \in \ninvProbs$, then $P \neq \Pavg$ and therefore $\metric(P,\Pavg) > 0$.
    If $c_n \to 0$, then
    \begin{align*}
        \lim_{n\to\infty} \E_{P\otimes\haar}[\critFun_{n,m}(X_{1:n})] = \lim_{n\to\infty} \E_{P\otimes\haar} [\indi\{ \metric(\empMeas{n}, \empMeasAppAvg{n}{m}) > c_n \}] = 1 \;,
    \end{align*}
    from which \eqref{eq:asymp:power} follows. 
\end{proof}

\subsection{Proof of \cref{thm:mc:test:p}} \label{apx:proofs:p}

The proof of \cref{thm:mc:test:p} relies on the following result proven by \citet{Dufour:2006}. For simplicity, we assume that the probability of ties is zero, but that case can be handled with a randomized tie-breaking procedure described by \citet{Dufour:2006}. 

\begin{lemma}[\cite{Dufour:2006}, Proposition 2.2] \label{lem:dufour}
    Let $S_0, S_1,\dotsc,S_B$ be an exchangeable sequence of $\R$-valued random variables such that $\Pr\{S_i = S_j\} = 0$ for $i\neq j$, $i,j \in \{0,\dotsc,B\}$. Set
    \begin{align*}
        p_B = \frac{1 + \sum_{b = 1}^B \indi\{S_b \geq S_0\}}{B+1} \;.
    \end{align*}
    Then for any $\alpha \in [0,1]$,
    \begin{align*}
        \Pr\{p_B \leq \alpha\} = \frac{\lfloor \alpha (B+1) \rfloor}{B+1} \;.
    \end{align*}
\end{lemma}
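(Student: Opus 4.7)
The plan is to reduce the claim to a statement about the rank of $S_0$ among $\{S_0,\dotsc,S_B\}$ and then invoke exchangeability together with the no-ties condition. First I would define $R \define 1 + \sum_{b=1}^B \indi\{S_b \geq S_0\}$, so that by definition $p_B = R/(B+1)$. Because $\Pr\{S_i = S_j\} = 0$ for $i\neq j$, we have $\indi\{S_b \geq S_0\} = \indi\{S_b > S_0\}$ almost surely, so $R$ is (a.s.) the rank of $S_0$ when $S_0,\dotsc,S_B$ are placed in strictly decreasing order, taking values in $\{1,\dotsc,B+1\}$.

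The key step is to show $R$ is uniformly distributed on $\{1,\dotsc,B+1\}$. To do so, define analogous ranks $R_i \define 1 + \sum_{b \neq i} \indi\{S_b > S_i\}$ for each $i$, so that $R = R_0$. Under the no-ties assumption, $(R_0, R_1,\dotsc,R_B)$ is almost surely a permutation of $\{1,\dotsc,B+1\}$, because the ranks are all distinct integers in that range. Because the rank vector is computed from the multiset $\{S_0,\dotsc,S_B\}$, exchangeability of $(S_0,\dotsc,S_B)$ immediately transfers: for any permutation $\pi$ of $\{0,\dotsc,B\}$, $(R_{\pi(0)},\dotsc,R_{\pi(B)}) \equdist (R_0,\dotsc,R_B)$. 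An exchangeable random element of the symmetric group on $B+1$ symbols must be uniform, since exchangeability means its law is invariant under permutations of coordinates, and all permutations have the same coordinate multiset $\{1,\dotsc,B+1\}$. Marginalizing yields $R_0 \sim \text{Uniform}\{1,\dotsc,B+1\}$.

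Finally, I would conclude by a direct computation: since $R$ is integer-valued, the event $\{p_B \leq \alpha\} = \{R \leq \alpha(B+1)\}$ is identical to $\{R \leq \lfloor \alpha(B+1)\rfloor\}$, so
\begin{align*}
    \Pr\{p_B \leq \alpha\} = \Pr\{R \leq \lfloor \alpha(B+1)\rfloor\} = \frac{\lfloor \alpha(B+1)\rfloor}{B+1}
\end{align*}
by uniformity of $R$. The edge cases $\alpha = 0$ and $\alpha = 1$ are handled consistently since $\lfloor 0 \rfloor = 0$ and $\lfloor B+1 \rfloor = B+1$.

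The main subtlety, and the only place requiring care, is the uniformity step: it relies on (i) the no-ties condition, which ensures the rank map $(s_0,\dotsc,s_B) \mapsto (R_0,\dotsc,R_B)$ is almost surely well-defined and valued in the symmetric group, and (ii) the observation that a symmetric (exchangeable) law on the symmetric group must be the counting (uniform) measure. Everything else is bookkeeping.
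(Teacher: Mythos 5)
Your proof is correct. Note that the paper does not actually prove this lemma --- it is imported from \citet{Dufour:2006} (Proposition 2.2) with no argument given --- so there is no in-paper proof to compare against; your rank-based derivation is essentially the standard proof of that result. The only spot where your wording is looser than the logic requires is the uniformity step: ``all permutations have the same coordinate multiset'' is not, by itself, the reason an exchangeable law supported on permutation vectors must be uniform. The clean justification is that the coordinate-permutation action is transitive on the set of bijections $\{0,\dotsc,B\}\to\{1,\dotsc,B+1\}$; equivalently, viewing $\sigma(i)=R_i$ as a random element of the symmetric group, the exchangeability you established says its law is invariant under right translation by every fixed permutation, and a right-translation-invariant probability measure on a finite group is the normalized counting measure. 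With that phrasing the step is airtight, and the rest --- the reduction to the descending rank $R$, the almost-sure replacement of $\geq$ by $>$ via the no-ties hypothesis, the transfer of exchangeability through the fixed measurable rank map, and the floor computation including the edge cases $\lfloor\alpha(B+1)\rfloor=0$ and $\alpha=1$ --- is all correct.
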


\begin{proof}[Proof of \cref{thm:mc:test:p}]
    Due to the sufficiency of $\orbSel(X)$ for $\invProbs$, the samples $X^{(b)}_{1:n} = (G^{(b)}_1 X_1,\dotsc, G^{(b)}_n X_n)$ are conditionally i.i.d.\ given $\orbSel(X)_{1:n}$, with the same conditional distribution as the null conditional distribution.
    Because of their independence from $X_{1:n}$, conditioning on $(G_{j,1},\dotsc,G_{j,n})_{j=1}^m$ does not change that, and therefore $(T_{n,m}(X^{(b)}_{1:n}))_{b=0}^B$ are conditionally i.i.d.\ given $\orbSel(X)_{1:n}$ and $(G_{j,1},\dotsc,G_{j,n})_{j=1}^m$, with the same conditional distribution as the null conditional distribution.
    The sequence $(X^{(b)}_{1:n})_{b=0}^B$ is easily seen to be exchangeable (over the index $b$) conditioned on $\orbSel(X)_{1:n}$ and $(G_{j,1},\dotsc,G_{j,n})_{j=1}^m$, and therefore so is $(T_{n,m}(X^{(0)}_{1:n}),\dotsc,T_{n,m}(X_{1:n}^{(B)}))$.
    The validity of $p_B$ as a conditional (on $\orbSel(X)_{1:n}$) $p$-value, and \eqref{eq:mc:test:p:valid} in particular, follows from \cref{lem:dufour}. Since this holds for $P$-almost every realization of $\orbSel(X)_{1:n}$ under each $P \in\nullHyp$, it is also a valid $p$-value conditioned only on $(G_{j,1},\dotsc,G_{j,n})_{j=1}^m$.  

    The proof remains valid unconditionally if $(G_{j,1},\dotsc,G_{j,n})_{j=1}^m$ are sampled independently of $X_{1:n}$, so that $(X^{(b)}_{1:n}, (G_{j,1},\dotsc,G_{j,n})_{j=1}^m)_{b=0}^B$ are exchangeable and therefore so is $(T^{(0)}_{n,m}(X^{(0)}_{1:n}),\dotsc,T^{(B)}_{n,m}(X_{1:n}^{(B)}))$.  

    The proof also applies unconditionally to random $(G^{(b)}_{i,j})_{i\leq n, j\leq m}$ sampled independently of $X_{1:n}$ in a way such that they are exchangeable over the index $b=1,\dotsc,B$, in which case the sequence $(X^{(b)}_{1:n}, (G^{(b)}_{j,1},\dotsc,G^{(b)}_{j,n})_{j=1}^m)_{b=0}^B$  is exchangeable and therefore so is $(T^{(0)}_{n,m}(X^{(0)}_{1:n}),\dotsc,T^{(B)}_{n,m}(X_{1:n}^{(B)}))$. 
\end{proof}

\subsection{Proof of \cref{thm:equivariance:cond:ind}} \label{apx:proofs:equivariance}

\begin{proof}
    To simplify notation, let $Q \colon \bfX \times \bfS_{\bfY} \to [0,1]$ be a regular version (i.e., a Markov probability kernel) of the conditional probability $P_{Y|X}$, and denote the marginal distribution of $X$ by $P$, so that $P_{X,Y} = P_X \otimes P_{Y|X} = P \otimes Q$. 
    Define the random variable $\tY \define \repInvRand X$, where $\repInvRand \sim \repInvKern(X,\argdot)$. 
    The conditional distribution of $\tY$ given $(\repInvRand,X)$ is represented by the Markov probability kernel $\tQ$ so that for any integrable function $f \colon \grp \times \bfX \times \bfY \to \R$, 
    \begin{align*} %
        \int P(dx) \repInvKern(x,d\repInvRand)  \tQ(\repInvRand, x,d\ty) f(\repInvRand, x, \ty) = \int P(dx) \repInvKern(x,d\repInvRand)  Q(x,dy) f(\repInvRand, x,\repInvRand^{-1} y) \;.
    \end{align*}
    From this follows the identity $\tQ(\repInvRand, x, B) = Q(x, \repInvRand B)$. 

    Now assume that $Q$ is equivariant, so that for each $g \in \grp, x \in \bfX, B \in \bfS_{\bfY}$, $Q(gx, B) = Q(x,g^{-1}B)$.
    Then for any $\repInvRand \in \grp, x \in \bfX, g \in \grp$ and integrable $f \colon \bfY \to \R$,
    \begin{align*}
        \int  \tQ(g\repInvRand, gx,d\ty) f(\ty) & = \int Q(gx, dy) f((g\repInvRand)^{-1} y) \\
          & = \int Q(x,dy) f(\repInvRand^{-1} g^{-1} g y) \\
          & = \int Q(x,dy) f(\repInvRand^{-1}y) \\
          & = \int \tQ(\repInvRand, x, d\ty) f(\ty) \;.
    \end{align*}
    This shows that the mapping $(\repInvRand, x) \mapsto \tQ(\repInvRand, x, \argdot)$ is $\grp$-invariant.
    Therefore, by \cref{prop:orbsel:inv}, for any measurable maximal invariant $\tM \colon \grp \times \bfX \to \bfM$, there is a unique Markov probability kernel $\tR \colon \bfM \times \bfS_{\bfY} \to [0,1]$ such that
    \begin{align*} 
        \tQ(\repInvRand, x,B) = \tR(\tM(\repInvRand, x),B) \;, \quad \repInvRand \in \grp, \ x \in \bfX, \ B \in \bfS_{\bfY} \;.
    \end{align*}
    Because the action of $\grp$ on itself is transitive (i.e., there is only one orbit in $\grp$), any maximal invariant $M$ for $\grp$ acting on $\bfX$ is also a maximal invariant for $\grp$ acting on $\grp \times \bfX$, and
    \begin{align} \label{eqn:invariant:kernel}
        \tQ(\repInvRand, x,B) = \tR(\maxInv(x),B) \;, \quad \repInvRand \in \grp, \  x \in \bfX, \ B \in \bfS_{\bfY} \;.
    \end{align}
    This is enough to establish the desired conditional independence in \eqref{eqn:cond:ind:rand}: For any integrable $f \colon \grp\times \bfX \times \bfY \to \R$,
    \begin{align*}
        \int P(dx) \repInvKern(x,d\repInvRand) \tQ(\repInvRand,x,d\ty) f(\repInvRand,x, \ty) 
        & = \int P(dx)  \repInvKern(x,d\repInvRand) \tR(\maxInv(x),d\ty) f(\repInvRand,x, \ty) \;.
    \end{align*}

    Conversely, assume that $(\repInvRand,X) \condind \repInvRand^{-1} Y \mid \maxInv(X)$.
    Then \eqref{eqn:invariant:kernel} holds for $P\otimes \repInvKern$-almost all $(\repInvRand,x) \in \grp\times\bfX$.
    In particular, $\tQ$ is $\grp$-invariant for $P\otimes \repInvKern$-almost all $(\repInvRand,x)$.
    In particular, $\tQ$ is $\grp$-invariant for $P\otimes \repInvKern$-almost all $(\repInvRand,x)$.
    Recall also that the inversion kernel $\repInvKern$ is $\grp$-equivariant.
    Therefore, for any integrable $f \colon \grp \times \bfX \times \bfY \to \R$ and any $g \in \grp$,
    \begin{align*}
        \int P(dx) \repInvKern(x,d\repInvRand) Q(x,dy) f(\repInvRand,x,y) & = \int P(dx) \repInvKern(x,d\repInvRand) Q(x,dy) f(\repInvRand,x,\repInvRand (\repInvRand^{-1} y)) \\
          & = \int P(dx) \repInvKern(x,d\repInvRand) \tQ(\repInvRand, x,d\ty) f(\repInvRand,x,\repInvRand \ty) \\
          & = \int P(dx) \repInvKern(x,d\repInvRand) \tQ(g\repInvRand, gx,d\ty) f(\repInvRand,x,\repInvRand \ty) \\
          & = \int (g_* P)(dx) \repInvKern(g^{-1}x,d\repInvRand) \tQ(g\repInvRand, x,d\ty) f(\repInvRand,g^{-1}x,\repInvRand \ty) \\
          & = \int (g_* P)(dx) \repInvKern(x,d\repInvRand) \tQ(\repInvRand, x,d\ty) f(g^{-1}\repInvRand,g^{-1}x, g^{-1}\repInvRand \ty) \\
          & = \int (g_* P)(dx) \repInvKern(x,d\repInvRand) Q(x,dy) f(g^{-1}\repInvRand,g^{-1}x,g^{-1}y) \\
          & = \int P(dx) \repInvKern(gx,d\repInvRand) Q(gx,dy) f(g^{-1}\repInvRand,x,g^{-1}y) \\
          & = \int P(dx) \repInvKern(x,d\repInvRand) Q(gx,dy) f(\repInvRand,x,g^{-1}y) \;.
    \end{align*}
    This implies that
    \begin{align} \label{eqn:almost:equiv}
        Q(x,B) = Q(gx, gB) \;, \quad B \in \bfS_{\bfY}, \ g \in \grp, \ P\text{-a.e.\ } x \in \bfX \;.
    \end{align}
    The subset of $\bfX$ for which \eqref{eqn:almost:equiv} holds is a $\grp$-invariant set \citep[][Lemma 7.7]{Kallenberg:2017}, and therefore the possible exceptional null set on which $Q$ is not equivariant does not depend on $g$.
    If there is such an exceptional null set on which $Q$ is not equivariant, denoted $N^{\times}$, define $Q'$ as
    \begin{align*}
        Q'(x,B) \define
        \begin{cases}
            Q(x,B) & \text{if } x \notin N^{\times} \\
            \int_{\grp} \repInvKern(x,d\repInvRand) Q(\repInvRand^{-1}x, \repInvRand^{-1}B) & \text{if } x \in N^{\times} \;.
        \end{cases} 
    \end{align*}
    Since $\repInvKern(x,\argdot)$ and $Q(x,\argdot)$ are probability kernels, so too is $Q'$.
    It is also straightforward to show that $Q'$ is $\grp$-equivariant, so that $Q'$ is another regular version of $P_{Y|X}$ that is $\grp$-equivariant for all $x \in \bfX$, and equivalent to $Q$ up to the null set $N^{\times}$.

    If there exists a measurable representative inversion (function) $\repInv$, then the same proof holds with the inversion kernel $\repInvKern(x,\argdot)$ substituted by $\delta_{\repInv(x)}$, resulting in the simplified conditional independence statement in \eqref{eqn:cond:ind}. 

    If the action of $\grp$ on $\bfY$ is trivial, then $\tY = Y$. Moreover, $\repInvRand \condind Y \mid X$ by construction, and therefore $(\repInvRand, X) \condind Y \mid \maxInv(X)$ is implied by $X \condind Y \mid \maxInv(X)$. 
\end{proof}

\section{Alternative maximum mean discrepancy tests} \label{apx:mmd}

In this section, we describe two variations of the MMD test for invariance that have cheaper computational costs.

\subsection{Nystr\"{o}m approximation MMD test} \label{apx:mmd:nystrom}

The Nystr\"{o}m approximation \citep{Raj:2017,Chatalic:2022} can be used to obtain an approximate MMD test based on the \emph{biased} MMD test statistic, which is a V-statistic of the form
\begin{align*}
\widehat{\MMD}^{\smallsq}_\text{V}(\hat{P}_{1,n_1}, \hat{P}_{2,n_2}) &= \frac{1}{n^2}\sum_{i=1}^n\sum_{j=1}^n\left( k(X_i,X_j) + \frac{1}{m^2}\sum_{\ell=1}^m\sum_{r=1}^mk(G_\ell X_i,H_r X_j) - \frac{2}{m}\sum_{\ell=1}^m k(X_i,G_\ell X_j)\right) \\
&= \frac{1}{n^2}\left(\onevec_n^\T\bfK\onevec_n + \frac{1}{m^2}\sum_{\ell=1}^m\sum_{r=1}^m\onevec_n^\T\bfK_{\ell r}^{(2)}\onevec_n - \frac{2}{m}\sum_{\ell=1}^m\onevec_n^\T\bfK_\ell^{(1)}\onevec_n\right) \;,
\end{align*}
where the kernel matrices are defined as
\begin{align*}
\left[\bfK\right]_{ij} &= k(X_i,X_j) \;,
&
\left[\bfK_{\ell r}^{(2)}\right]_{ij} &= k(G_\ell X_i,H_r X_j) \;,
&
\left[\bfK_\ell^{(1)}\right]_{ij} &= k(X_i,G_\ell X_j) \;.
\end{align*}
Nystr\"{o}m approximates the original kernel matrices with matrix products involving $J$-dimensional random matrices.
For $J\ll n$, let $\bft$ be $J$ points sampled independently and uniformly with replacement from $\bfx \define X_{1:n}$, and similarly for $\bft^G$ from $(GX_1,\dotsc,GX_n)$.
Applying Nystr\"{o}m approximation to the MMD leads to the test statistic
\begin{align*}
\widehat{\MMD}^{\smallsq}_\text{N}(\hat{P}_{1,n_1}, \hat{P}_{2,n_2}) &= \psi_\bft^\T\bfK_{\bft,\bft}\psi_\bft + \frac{1}{m^2}\sum_{\ell=1}^m\sum_{r=1}^m\psi_{\bft^{G_\ell}}^\T\bfK_{\bft^{G_\ell},\bft^{H_r}}\psi_{\bft^{H_r}} - \frac{2}{m}\sum_{\ell=1}^m\psi_\bft^\T\bfK_{\bft,\bft^{G_\ell}}\psi_{\bft^{G_\ell}} \;,
\end{align*}
where $\bfK_{\argdot,\argdot}$ denotes the kernel matrix between two sets of points and
\begin{align*}
\psi_\argdot &= \frac{1}{n}\bfK_{\argdot,\argdot}^+\bfK_{\argdot,\bfx}\onevec_n \;,
\end{align*}
with $+$ denoting the Moore-Penrose inverse.

\subsection{MMD test with equivariant kernel} \label{apx:mmd:equiv}

If the kernel $k$ is (almost) equivariant in the sense of \eqref{eqn:equiv:kernel}, then the MMD metric in the test for invariance has the form
\begin{align*}
&\MMD(P,\Pavg) \\
&= \lip\kme{P},\kme{P}\rip_\hilbert + \int_{\bfX\times\bfX}\int_{\grp\times\grp} k(gx,hx') \haar(dg)\haar(dh)P(dx)P(dx') - 2\int_{\bfX\times\bfX} \int_{\grp} k(x, gx') \haar(dg) P(dx) P(dx') \\
&= \lip\kme{P},\kme{P}\rip_\hilbert + \int_{\bfX\times\bfX}\int_{\grp\times\grp} k(x,ghx') \haar(dg)\haar(dh)P(dx)P(dx') - 2\int_{\bfX\times\bfX} \int_{\grp} k(x, gx') \haar(dg) P(dx) P(dx') \\
&= \lip\kme{P},\kme{P}\rip_\hilbert + \int_{\bfX\times\bfX}\int_{\grp} k(x,g'x') \haar(dg')P(dx)P(dx') - 2\int_{\bfX\times\bfX} \int_{\grp} k(x, gx') \haar(dg) P(dx) P(dx') \\
&= \lip\kme{P},\kme{P}\rip_\hilbert - \int_{\bfX\times\bfX} \int_{\grp} k(x, gx') \haar(dg) P(dx) P(dx') \;,
\end{align*}
where the second equality follows from the equivariance of the kernel.
An unbiased estimator for the simplified metric is then
\begin{align*}
\widehat{\MMD}^{\smallsq}(\hat{P}_{1,n_1}, \hat{P}_{2,n_2}) = \frac{1}{n(n-1)}\sum_{i\neq j}\left(k(X_i,X_j) - \frac{1}{m}\sum_{\ell=1}^mk(X_i,G_\ell X_j)\right) \;.
\end{align*}
When the kernel satisfies \eqref{eqn:equiv:kernel}, the KME of the orbit-averaged distribution can also be interpreted as the orbit-averaged KME of the original distribution \citep[Lemma~C.2]{Elesedy:2021}, i.e., an invariant function.
We do not enforce this assumption in this work as our tests for invariance are able to operate without it.

\section{Additional experiment details} \label{apx:exp}

In this section, we provide additional details about the experiments conducted in \cref{sec:experiments}.

\subsection{Conditional permutation test with kernel conditional density estimation} \label{apx:cpt}

Let $\testStat_\text{CP}\colon\bfX^n\times\bfY^n\times\bfM^n\rightarrow\R$ be any statistic. (We use the multiple correlation coefficient of $X$ and $Y$ in our experiments in \cref{sec:experiments}.)
Let $k_Y$ and $k_\maxInv$ be kernels on $\bfY$ and $\bfM$.
Given data $X_{1:n}$ and $Y_{1:n}$, let $Z_{1:n} \define (\repInv(X)^{-1}Y)_{1:n}$ to simplify notation.
Let $Z_{\pi_0(1:n)} \define Z_{1:n}$.
On iteration~$s$, we sample $\lfloor\nicefrac{n}{2}\rfloor$ disjoint pairs of indices $(i^{(s)}_1,j^{(s)}_1),\dotsc,(i^{(s)}_{\lfloor n/2\rfloor},j^{(s)}_{\lfloor n/2\rfloor})$ from $\{1,\ldots,n\}$.
For each pair $(i^{(s)}_\ell,j^{(s)}_\ell)$, we independently perform a swap of the $i^{(s)}_\ell$-th and $j^{(s)}_\ell$-th observations with probability $p^{(s)}_\ell$ obtained from the KCDE-estimated conditional density ratio
\begin{align*}
&\frac{p^{(s)}_\ell}{1-p^{(s)}_\ell} = \frac{
\hat{f}_\text{KCDE}\left(Z_{j^{(s)}_\ell}^{(s-1)}\left|\maxInv(X_{i^{(s)}_\ell})\right.\right)
\hat{f}_\text{KCDE}\left(Z_{i^{(s)}_\ell}^{(s-1)}\left|\maxInv(X_{j^{(s)}_\ell})\right.\right)
}{
\hat{f}_\text{KCDE}\left(Z_{i^{(s)}_\ell}^{(s-1)}\left|\maxInv(X_{i^{(s)}_\ell})\right.\right)
\hat{f}_\text{KCDE}\left(Z_{j^{(s)}_\ell}^{(s-1)}\left|\maxInv(X_{j^{(s)}_\ell})\right.\right)
} \\
&= \frac{
\left\{\sum_{r=1}^n k_Y\left(Z_{j^{(s)}_\ell}^{(s-1)},Z_r\right)k_\maxInv\left(\maxInv(X_{i^{(s)}_\ell}),\maxInv(X_r)\right)\right\}
\left\{\sum_{r=1}^n k_Y\left(Z_{i^{(s)}_\ell}^{(s-1)},Z_r\right)k_\maxInv\left(\maxInv(X_{j^{(s)}_\ell}),\maxInv(X_r)\right)\right\}
}{
\left\{\sum_{r=1}^n k_Y\left(Z_{i^{(s)}_\ell}^{(s-1)},Z_r\right)k_\maxInv\left(\maxInv(X_{i^{(s)}_\ell}),\maxInv(X_r)\right)\right\}
\left\{\sum_{r=1}^n k_Y\left(Z_{j^{(s)}_\ell}^{(s-1)},Z_r\right)k_\maxInv\left(\maxInv(X_{j^{(s)}_\ell}),\maxInv(X_r)\right)\right\}
} \;.
\end{align*}
Note that this density ratio is effectively the ratio of joint densities as the denominators of the conditional density estimators cancel out.
Denote by $Z_{\pi_s(1:n)}$ the resulting permutation of $Z_{1:n}$ after all swaps in iteration $s$ have been accepted or rejected.
The CP test runs an initial $S$ iterations, after which it then runs $B$ independent sequences initialized at $Z_{\pi_S(1:n)}$, each for another $S$ iterations \citep[Algorithm~2]{Berrett:2019}.
For $b\in\{1,\dotsc,B\}$, denote the final permutation of each procedure as $Z^{(b)}_{\pi_{2S}(1:n)}$.
The $p$-value of the CP test is then computed as
\begin{align*}
p_\text{CP} = \frac{1}{1+B}\left[1+\sum_{b=1}^B\indi\left\{\testStat_\text{CP}(X_{1:n},Z_{1:n},\maxInv(X)_{1:n}) \leq \testStat_\text{CP}(X_{1:n},Z^{(b)}_{\pi_{2S}(1:n)},\maxInv(X)_{1:n})\right\} \right] \;.
\end{align*}
The test rejects $\nullHyp$ at level $\sig$ if $p_\text{CP} \leq \sig$.
See \citet{Berrett:2019} for more details about the CP test.

\subsection{Distribution of $p$-values in $\SO{3}$-invariance experiment} \label{apx:exp:so3}

\Cref{fig:pvalues:so3} shows the $p$-value distributions for the tests for $\SO{3}$-invariance conducted in \cref{sec:exp:rotation}.

\begin{figure}[!h]
    \centering
    \includegraphics[scale=\figscale,trim={0 10mm 0 0},clip]{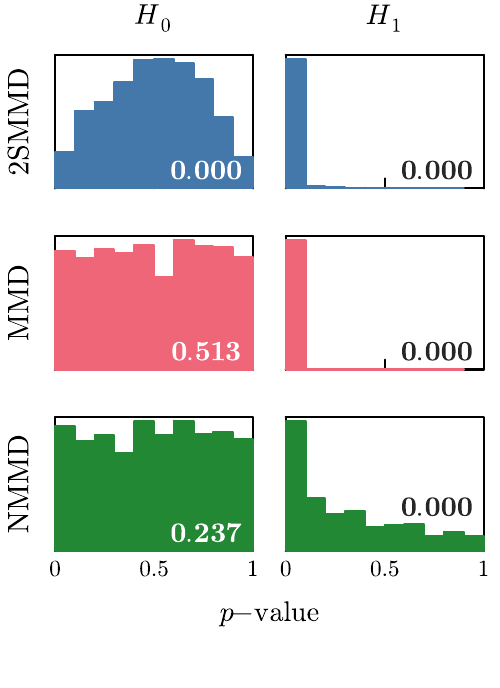}
    \caption{Histograms showing the $p$-value distributions obtained over $N=1000$ simulations in the $\SO{3}$-invariance experiment.
    The $p$-value of a Kolmogorov--Smirnov test for uniformity of the distribution is shown in the bottom-right corner of each plot.}
    \label{fig:pvalues:so3}
\end{figure}

\subsection{Number of random projections for \tNMMD and \tCW in $\Sym{10}$-invariance experiment} \label{apx:exp:exch:rand:proj}

\Cref{fig:exch:inc:proj} shows the rejection rate and average computation time for \tNMMD and \tCW as the number of random projections $J$ increases in the $\Sym{10}$-invariance experiment (\cref{sec:exp:exch}).

\begin{figure}[!ht]
\centering
\includegraphics[scale=\figscale]{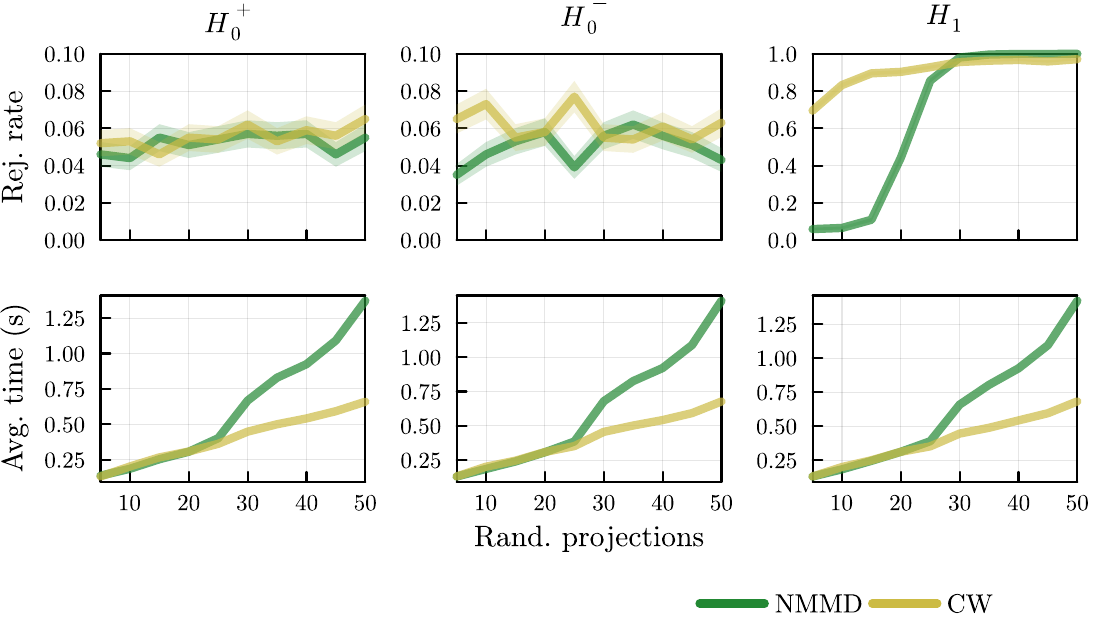}
\caption{Test for $\Sym{10}$-invariance rejection rates and standard deviations (first row) and average computation time in seconds for a single execution (second row) over $N=1000$ simulations as the number of random projections increases.}
\label{fig:exch:inc:proj}
\end{figure}

\subsection{\textsc{Swarm} experiment} \label{apx:exp:swarm}

The grids used to train the kernels in \tKCI were manually tuned through trial and error.
The final set of grids used to obtain the results in \cref{sec:exp:swarm} were
\begin{itemize}
    \item $k_X$: 31 linearly-spaced numbers between $1\text{e}^{-3}$ and $3$;
    \item $k_Y$: 57 linearly-spaced numbers between $1$ and $15$; and
    \item $k_{\maxInv(X)}$: \{$5\text{e}^{-3}$, $5.5\text{e}^{-3}$, $6\text{e}^{-3}$\}.
\end{itemize}

The test results for the marginal invariance and joint invariance experiments are given in \cref{tab:swarm}.

\begin{table}[!h]
\begin{tabular*}{\columnwidth}{@{\extracolsep\fill}rrrrr@{\extracolsep\fill}}
\toprule
& \multicolumn{2}{c}{Marginal invariance} & \multicolumn{2}{c}{Joint invariance} \\
\cmidrule(lr){2-3}
\cmidrule(l){4-5}
& Discrete & Continuous & Discrete & Continuous \\
\midrule
\tSMMD & $0.007$ & $0.004$ & $0.006$ & $0.011$ \\
\tMMD & $0.047$ & $0.047$ & $0.998$ & $0.999$ \\
\tNMMD & $0.060$ & $0.055$ & $0.098$ & $0.094$ \\
\tCW & $0.064$ & $0.087$ & $0.903$ & $0.897$ \\
\bottomrule
\end{tabular*}
\caption{Test rejection rates over $N=1000$ simulations for the \textsc{Swarm} data.}
\label{tab:swarm}
\end{table}
Histograms of the $p$-values for these tests are shown in \cref{fig:pvalues:swarm:inv}.

\begin{figure}[!ht]
\centering
\includegraphics[scale=\figscale, trim={0 10mm 0 0},clip]{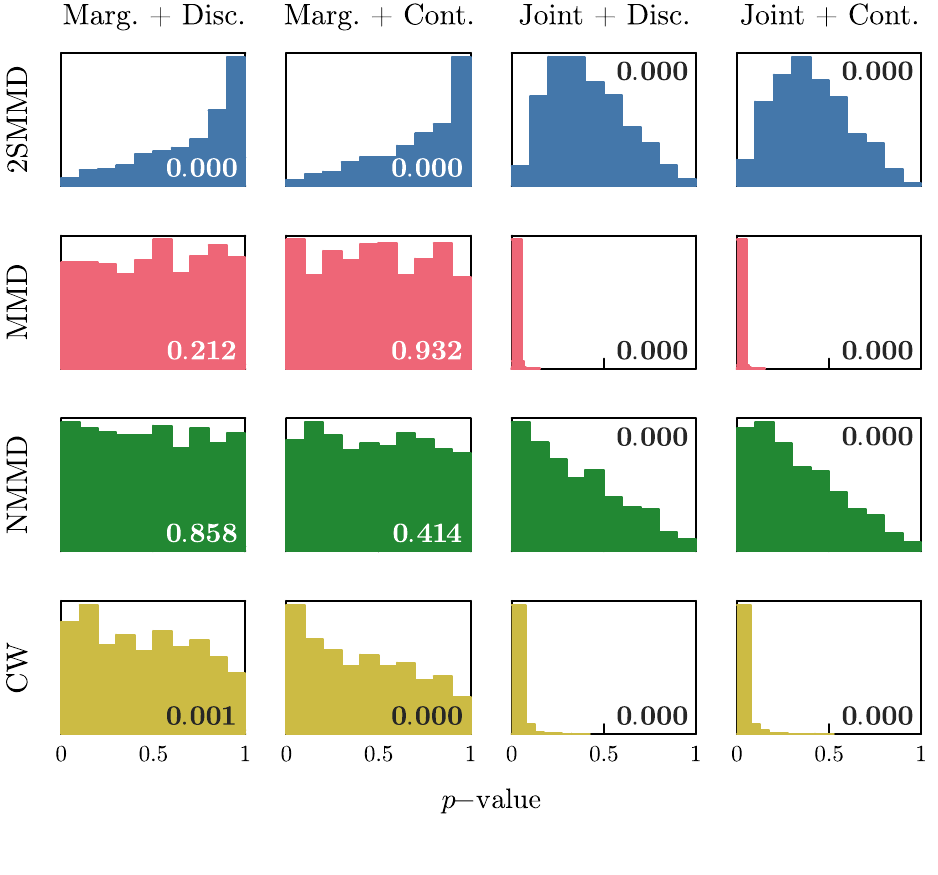}
\caption{Histograms showing the $p$-value distributions obtained over $N=1000$ simulations for tests for $X$-marginal invariance and $(X,Y)$-joint invariance with respect to discrete and continuous $X$-rotations about the geographic north pole.
    The $p$-value of a Kolmogorov--Smirnov test for uniformity of the distribution is shown in each plot.}
\label{fig:pvalues:swarm:inv}
\end{figure}

\subsection{LHC experiment} 
\label{apx:exp:lhc}

The grid $\{10^{-2},10^{-1},0,10\}$ was used to train the kernels $k_X$, $k_Y$, and $k_{\maxInv(X)}$ in \tKCI for the equivariance experiment in \cref{sec:experiments:LHC}.
The grid $\{10^{-3},10^{-2},10^{-1}\}$ was used to train the kernels $k_Y$ and $k_{\maxInv(X)}$ in $\tCP$.

\Cref{fig:lhc:pvalues} shows histograms of the $p$-values obtained from the tests for joint invariance and equivariance in \cref{sec:experiments:LHC}.

\begin{figure}[!h]
\centering
\includegraphics[scale=\figscale,trim={0 8mm 40mm 0},clip]{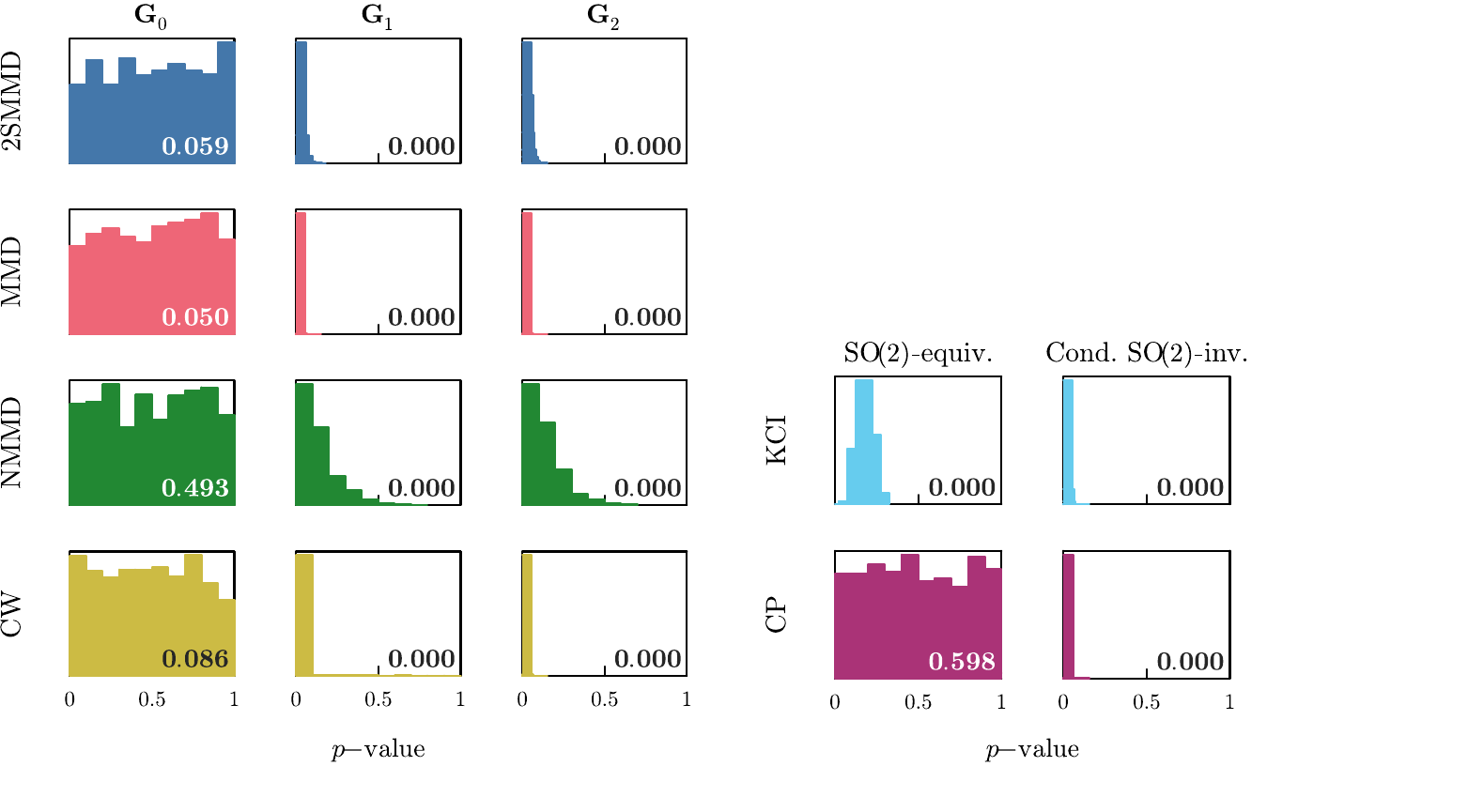}
\caption{Histograms showing the $p$-value distributions obtained over $N=1000$ simulations for tests for joint invariance and equivariance in the LHC experiments.
    The $p$-value of a Kolmogorov--Smirnov test for uniformity of the distribution is shown in the bottom-right corner of each plot.}
\label{fig:lhc:pvalues}
\end{figure}

\Cref{fig:lhc:inc:proj} shows the rejection rate and average computation time for \tNMMD and \tCW as the number of random projections $J$ increases in the LHC joint invariance experiment.

\begin{figure}[!h]
\centering
\includegraphics[scale=\figscale]{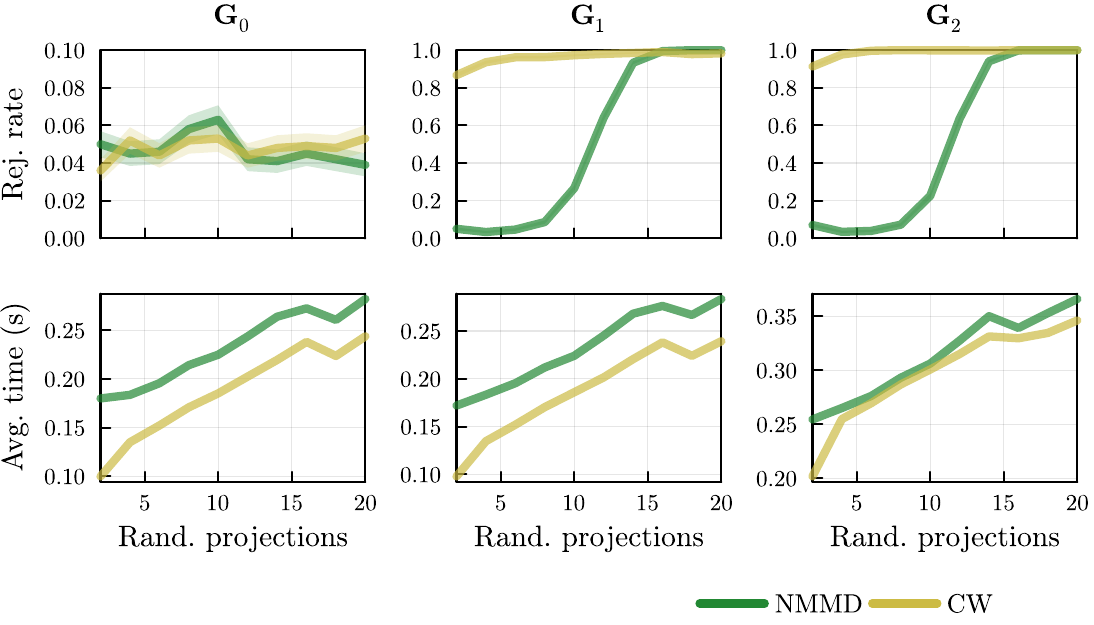}
\caption{LHC test for joint invariance rejection rates and standard deviations (first row) and average computation time in seconds for a single execution (second row) over $N=1000$ simulations as the number of random projections increases.}
\label{fig:lhc:inc:proj}
\end{figure}

\subsection{Top quark experiment} \label{apx:exp:tqt}

For \tKCI in the top quark experiment in \cref{sec:experiments:top:quark}, the grid $\{5,7.5,10,\dotsc,50\}$ was used to train the kernel $k_X$, and the grid $\{5,7.5,10,\dotsc,100\}$ was used to train the kernel $k_{\maxInv(X)}$.
The grids were manually selected based on trial and error.

\Cref{fig:pvalues:tqt} shows the $p$-value distributions obtained from \tKCI in the top quark experiment.

\begin{figure}[!h]
\centering
\includegraphics[scale=\figscale,trim={0 8mm 0 0},clip]{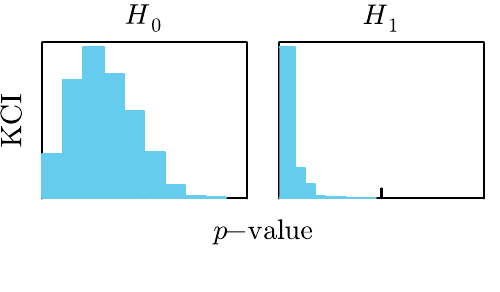}
\caption{Histograms showing the \tKCI $p$-value distributions obtained over $N=1000$ simulations in the top quark experiment.}
\label{fig:pvalues:tqt}
\end{figure}

\end{appendices}

\clearpage

\begingroup
\hypersetup{linkcolor=oceanboatblue}
\bibliography{references}
\endgroup

\end{document}